\def\BState{\State\hskip-\ALG@thistlm}
\newcommand{\change}[1]{{\textcolor{black}{#1}}}
\newcommand{\changeR}[1]{{\textcolor{black}{#1}}}
\newcommand*{\QED}{\hfill\ensuremath{\square}}%
\newcolumntype{Y}{>{\centering\arraybackslash}X}
\newcolumntype{b}{Y}
\newcolumntype{s}{>{\hsize=.5\hsize}Y}
\newtheorem{proposition}{Proposition}
\newtheorem{corollary}{Corollary}
\newtheorem{lemma}{Lemma}
\newtheorem{proof}{Proof}
\newenvironment{customppty}[1]
{\innercustomppty}
{\endinnercustomppty}
\newenvironment{remark}[1]
{\innerremark}
{\endinnercustomppty}
\newenvironment{definition}[1]
{\innerdefinition}
{\endinnerdefinition}
\newcolumntype{C}{>{\centering\arraybackslash}X}
\DeclarePairedDelimiter{\norm}{\lVert}{\rVert}
\author{Ashif~Sikandar~Iquebal,~\IEEEmembership{Student Member,~IEEE,}
        Satish~Bukkapatnam,
        and~Arun~Srinivasa
\thanks{A. S. Iquebal and S. Bukkapatnam are with the Department
of Industrial and Systems Engineering, Texas A\&M University, College Station,
TX 77843 USA  (e-mail: ashif\_22@tamu.edu; satish@tamu.edu).}
\thanks{A. Srinivasa is with Department of Mechanical Engineering, Texas A\&M University, College Station,
TX 77843 USA  (e-mail: asrinivasa@tamu.edu; satish@tamu.edu).}
\thanks{This paper has supplementary downloadable material available at \url{https://ieeexplore.ieee.org} provided by the author. The material includes proof of Propositions 1-4, proof of Corollaries 1-4, and some case studies. This material is 3,182 KB in size.}}
\newcommand{\cmt}[1]{}
\def\footnoterule{\relax%
	\kern-5pt
	\hbox to \columnwidth{\vrule width 0.5\columnwidth height 0.4pt\hfill}
	\kern4.6pt}
\begin{document}
\title{Change detection in complex dynamical systems using intrinsic phase and amplitude synchronization}

\maketitle

\begin{abstract}
We present an approach for the detection of sharp change points (short-lived and persistent) in nonlinear and nonstationary dynamic systems under high levels of noise by tracking the local phase and amplitude synchronization among the components of a univariate time series signal. The signal components are derived via Intrinsic Time scale Decomposition (ITD)--a nonlinear, non-parametric analysis method. We show that the signatures of sharp change points are retained across multiple ITD components with a significantly higher probability as compared to random signal fluctuations. Theoretical results are presented to show that combining the change point information retained across a specific set of ITD components offers the possibility of detecting sharp transitions with high specificity and sensitivity.  Subsequently, we introduce a concept of mutual agreement to identify the set of ITD components that are most likely to capture the information about dynamical changes of interest and define an InSync statistic to capture this local information. Extensive numerical, as well as real-world case studies involving benchmark neurophysiological processes and industrial machine sensor data, suggest that the present method can detect sharp change points, on an average 62\% earlier (in terms of average run length) as compared to other contemporary methods tested.
\end{abstract}

\begin{IEEEkeywords}
Change detection, Phase synchronization,  Signal decomposition,  Nonlinear and nonstationary systems, Time series
\end{IEEEkeywords}

\IEEEpeerreviewmaketitle

\section{Introduction}\label{sec:introduction}
\IEEEPARstart{C}ONVENTIONALLY, \changeR{the detection of anomalies and change points involves testing a hypothesis, $H_o:\bm{\theta=\theta_0}$ against $H_a:\bm{\theta\neq\theta_0}$ over some process parameters $\bm{\theta}$. This implicitly assumes that the process exhibits stationarity. In contrast, real-world processes are inherently nonstationary, i.e., they are continuously changing such as modulations to autocorrelation structure \cite{choi2008sequential} or are piece-wise stationary \cite{wang2018dirichlet}. Examples of nonstationary processes include seismic waves (see Fig.~\ref{fig:example}(a)), speech signals such as the word ``greasy''  as shown in Fig.~\ref{fig:example}(b), etc. While nonstationary processes are time varying, abrupt changes may occur in the structure \cite{choi2008sequential} or the distributional characteristics \cite{wang2018dirichlet} of the underlying processes. For instance, change in the autocorrelation of a slowly varying first order auto-regressive (AR) process as shown in Fig.~\ref{fig:example}(c).}

{\color{black}Critical anomalies may occur when real-world processes transition sharply from one dynamic behavior to the other \cite{basseville1993detection,nikiforov1999quadratic}. These anomalies can either be persistent \cite{wang2014change} (e.g., changes in the covariance structure) or short-lived \cite{barry1993bayesian} {\color{black}that typically do not last beyond the sampling interval of the time series}, as in singularities (or spikes). Timely detection of these sharp transitions to an anomalous behavior is crucial for effective process control. However, the existing change detection models are severely limited in discerning these sharp transitions \cite{nemeth2014sequential}. Much of the existing change detection methods are based on utilizing the amplitude information \cite{gao2000structures}; only a handful of the methods have investigated the instantaneous phase properties of the underlying time series signal \cite{chiang2000phase}.}

\begin{figure*}		
	\centering
 	\includegraphics[width = 1\textwidth]{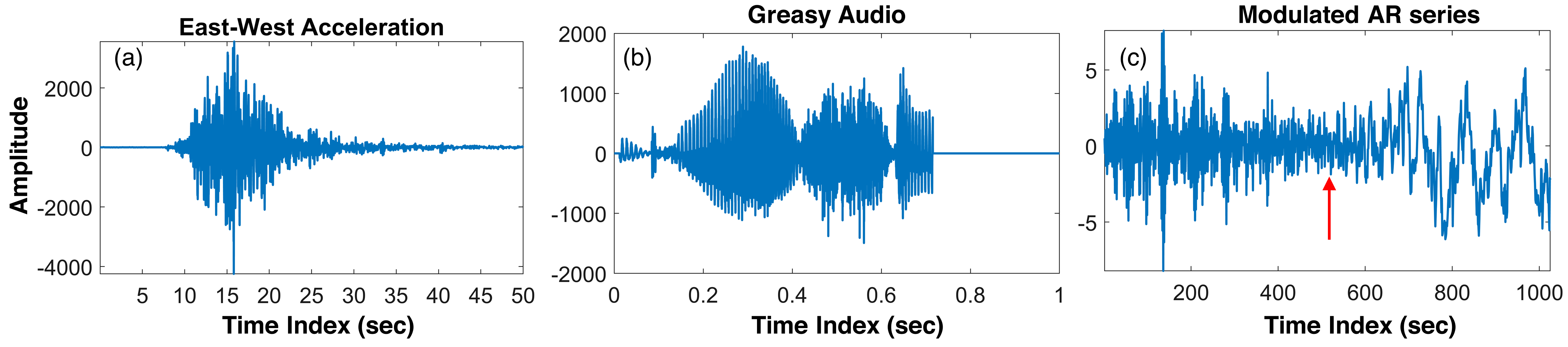}
	\caption{(a) Loma-Prieta Earthquake data (b) ``greasy'' audio signal (c) modulated autoregressive time series with abrupt change at $t=512$ pointed by the arrow.}
	\label{fig:example}
\end{figure*}

The importance of phase is becoming increasingly evident in various domains, such as image analysis and reconstruction \cite{oppenheim1981importance}, electrophysiology \cite{varela2001brainweb}, etc. Phase-based change detection methods have been shown to detect some of the critical events that might go undetected if we rely only on the amplitude information. In many instances, the signal phases exhibited a higher level of synchronization during such events as compared to the amplitudes \cite{rosenblum1996phase}. However, the current phase synchronization approaches require multiple signals (or channels) to utilize the phase information. In the absence of such multi-dimensional time series signals, \change{we propose to} decompose the univariate time series signal $x(t)\in\mathbb{R}, t\in \mathbb{Z}^+$ into multiple components to extract the phase information. 

Unlike stationary Gaussian time series signals, decomposition of nonlinear and nonstationary signals is a non-trivial task \cite{frei2007intrinsic}. Parametric signal decomposition methods such as short-time Fourier, Wavelet or Wigner-Ville transform \cite{Cohen1995} tend to be sub-optimal since they assume an \textit{a priori} basis and often yield poor or inaccurate time-frequency localization. This is a critical issue, especially when detecting short-lived change points. Alternatively, non-parametric methods, e.g., Empirical Mode Decomposition \cite{huang903empirical} or Independent Component Analysis \cite{oja2000independent} offer a data-driven approach with intrinsic basis functions for the decomposition of nonstationary signals. However, these methods cannot be used for real-time applications with streaming data because the decomposition is not causal, i.e., the basis functions are sensitive to the signal length and may change as more data is collected. \changeR{Such decomposition methods are often employed for off-line detection of change points, where the complete time series is available, and the emphasis is on the accuracy of the detection. However, they may not be pertinent for online change point detection, where the objective is to detect the change point as quickly as possible while satisfying the constraints on the false alarms.} Frie and Osorio's Intrinsic Time Scale Decomposition (ITD)  overcomes many of these limitations \cite{frei2007intrinsic}. The elementary decomposition step in ITD considers the signal segment only between consecutive extrema. This allows for a real-time signal decomposition approach with temporal localization of signal features, e.g., change points, across multiple decomposition levels. 

In this paper, we present an approach to detect sharp changes (short-lived and persistent) in noisy real-world processes based on combining the phase and amplitude synchronization among multiple levels of ITD components. The specific contributions of this paper are:  

\changeR{\begin{enumerate}
	\item We derive theoretical results to show that sharp change point features are retained across two or more ITD components with probability more than $0.95$. In contrast, this probability is less than $0.05$ for random signal features. This increases the sensitivity and specificity for detecting change points.  
	\item We introduce a concept of mutual agreement to identify a set of ITD components that are likely to retain the change point information. Subsequently, we develop a change detection statistic called InSync that fuses the phase and amplitude information from ITD components identified via mutual agreement.
	\item We perform extensive numerical and real-world case studies involving short-lived and persistent changes under various process conditions to establish the performance of the present method. We also present the significance level and the rejection criteria (specified in terms of average run length (ARL)) for detecting changes based on the InSync statistic.
\end{enumerate}}

\changeR{We compare the performance of our method with those of conventional approaches, mainly Exponentially Weighted Moving Average (EWMA) and rather contemporary methods including, Wavelet based CUSUM (WCUSUM) method that employs wavelets coefficients to determine the optimal monitoring levels \cite{guo2012multiscale} and Dirichlet Process Gaussian State Machine (DPGSM) \cite{wang2018dirichlet} where a time series is modeled as a mixture of Gaussian and changes occur when the process transitions from one state to the other. We also compare our results with two benchmark change detection packages, \textit{CPM} \cite{ross2015parametric} and \textit{changepoint} \cite{killick2014changepoint}, each of which contains the implementation of several state-of-the-art change detection approaches. We also compare the computational complexity of these algorithms in the context of online change detection.} 

\changeR{The remainder of this paper is organized as follows: in Section 2, we present the ITD algorithm and discuss its relevant properties. In Section 3, we introduce the concepts of intrinsic phase synchronization and mutual agreement followed by the InSync statistic. Section 4 presents the case studies and comparative results followed by concluding remarks and a brief discussion on the performance and limitations of the proposed method in Section 5.}

\section{Overview and properties of ITD}

\hskip 1.5em As noted in the preceding section, we utilize ITD to decompose a signal into different components and use phase and amplitude synchronization among a specific set of ITD components to develop a change detection statistic. In this section, we begin with a brief overview of ITD and analyze the behavior of ITD components at sharp change points.

\subsection{Intrinsic Time Scale Decomposition}
{\color{black}ITD belongs to a general class of Volterra series expansions \cite{Franz:2011} that iteratively extracts the baseline component of a nonlinear or nonstationary signal $x(t)$ such that the residual is a proper rotation, i.e., the successive extrema lies on the opposite side of the zero line \cite{frei2007intrinsic}. Formally, $x(t)$ is decomposed~as:
\begin{equation*}
x(t)  = \mathcal{L}(x(t)) + R(t)
\end{equation*}
Here, $\mathcal{L}(\cdot)$ is the baseline extracting operator such that $\mathcal{L}(x(t)) = L(t)$ is the \textit{baseline component} and $R(t)$ is the residual, referred to as the \textit{rotation component}. 

Let us denote the local extrema of $x(t)$ by $\tau_k, k = 1, 2,\ldots, N$ where $N$ is the total number of local extrema observed in $x(t)$. For simplicity, let $x_k$ and $L_k$ denote $x(\tau_k)$ and $L(\tau_k)$. Then the baseline extracting operator is defined piecewise on the interval $t\in [\tau_k,\tau_{k+1}]$ between successive extrema as:
\begin{equation}\label{eq:0.1}
\mathcal{L}(x(t)) \triangleq L(t) = L_k+\left(\frac{L_{k+1}-L_k}{x_{k+1}-x_k}\right)(x(t)-x_k)
\end{equation}
where 
\begin{equation}\label{eq:0}
L_{k+1} = \frac{1}{2}\left[x_k+\left(\frac{\tau_{k+1}-\tau_k}{\tau_{k+2}-\tau_k}\right) (x_{k+2}-x_k)\right]+\frac{1}{2}x_{k+1}
\end{equation}

Once the input signal $x(t)$ is decomposed into the baseline component and the rotation component, we iterate the decomposition process until a monotonic baseline component is obtained, i.e., 
\begin{eqnarray*}
x(t) & = & \mathcal{L}(x(t)) + R(t) \\
 & = & \mathcal{L}(\mathcal{L}(x(t)) + R(t)) + R(t)\\
 &\ldots&\\ 
 & = & \sum_{j = 1}^{J-1}\mathcal{L}^j (R(t)) + \mathcal{L}^J(x(t))
 \end{eqnarray*}where $\mathcal{L}^J(x(t))$ is the monotonic baseline component obtained after the stopping criteria is reached \cite{frei2007intrinsic} and  $\mathcal{L}^j (R(t))$ is the rotation component at level $j = 1,2,...,J-1$. For simplicity, we denote $\mathcal{L}^j (R(t))$ as $R^j(t)$ and $\mathcal{L}^j (x(t))$ as $L^j(t)$ such that: 
\begin{equation}\label{eq:1}
x(t) = \sum_{j=1}^{J-1} R^j(t) + L^J(t)
\end{equation}
\begin{figure}
	\includegraphics[width=0.48\textwidth]{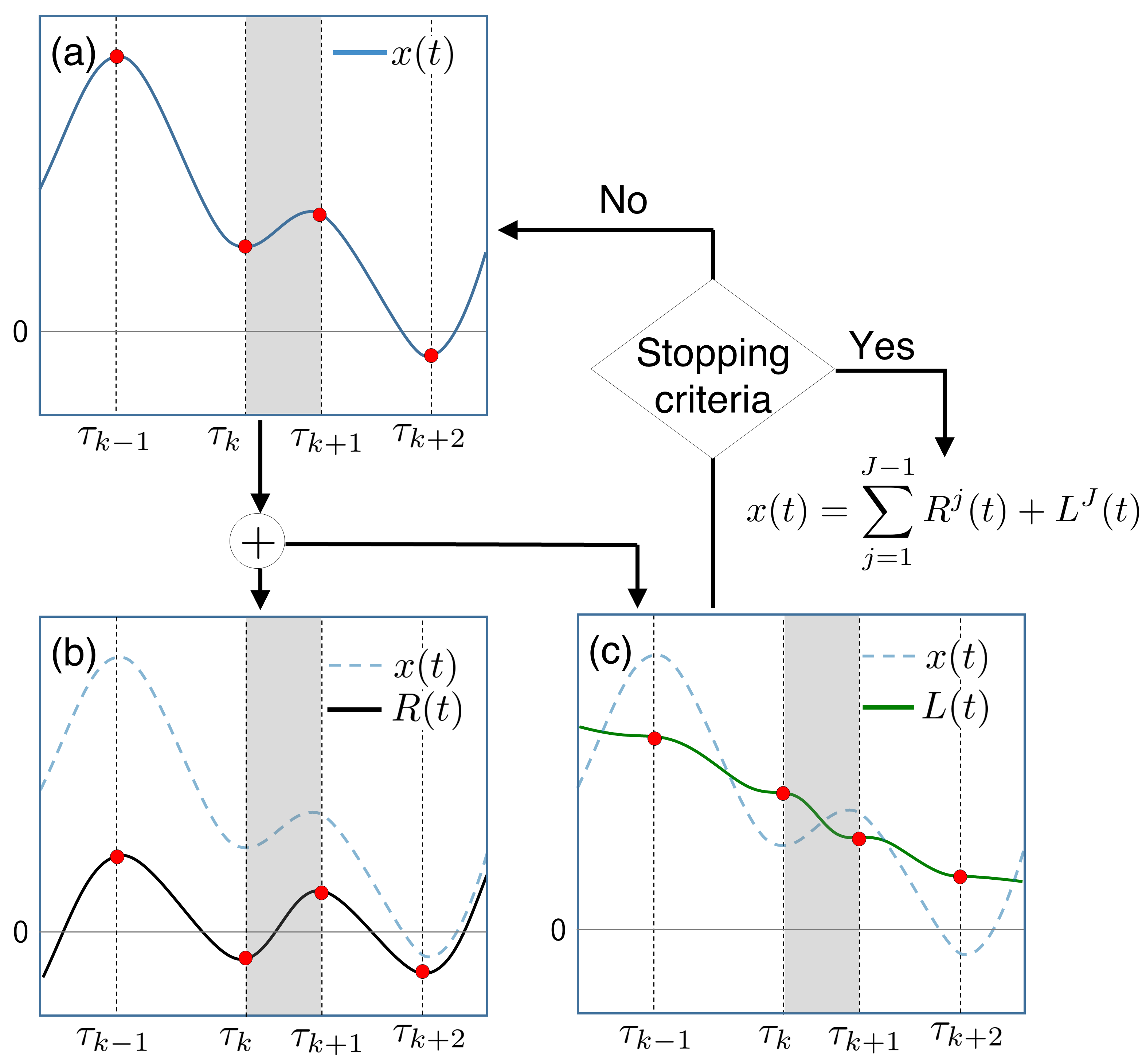}
	\centering
	\caption{Flow chart showing the recursive decomposition of (a) signal $ x(t) $ into (b) a rotation component $R(t)$ and (c) a baseline component $L(t)$. The highlighted region shows the support of the intrinsic basis function defined between two consecutive extrema, $ \tau_{k} $ and $ \tau_{k+1} $.}
	\label{fig:Fl_ch}
\end{figure}

In essence, $R^j(t)$ captures the ``details" of the signal $x(t)$ at the level $j$. The higher the $j$ is, the coarser the details are. As extrema locations are different across different levels, we denote the local extrema at any level $j$ by $\tau^j_k,~k = 1,2,\ldots, N^j$, where $N^j$ is the total number of extrema at level $j$. From an algorithm standpoint, the rotation components are obtained recursively by taking the difference between baseline components obtained at two consecutive levels, i.e., 
\begin{equation}\label{eq:2}
R^j(t)=L^{j-1}(t)-L^j(t),~j=1,2,\ldots,J-1
\end{equation}An instance of the decomposition is shown in Fig.~\ref{fig:Fl_ch}. To initialize the decomposition in the interval $[0,\tau_1]$, we consider the first point of the signal as an extremum (i.e., $\tau_0=0$) and define $L_0 = (x(\tau_0) + x(\tau_1))/2$. }

Since ITD performs the decomposition iteratively between consecutive extrema, {the basis functions have finite support (see Fig.~\ref{fig:Fl_ch}(c))} which allows for (a) {\color{black}causal} representation and (b) temporal localization of the nonlinear and nonstationary signal features across multiple decomposition levels---essential for detecting sharp changes. Also, from Eq.~(\ref{eq:0.1}), we see that the decomposition involves linear operations which can be performed in $\mathcal{O}(cN)$ time, where $N$ is the number of extrema in $x(t)$ and $ c>0$.

\subsection{Properties of ITD}
In this subsection, we extend a simple, yet powerful construct introduced in \cite{restrepo2014defining} to show that the signatures of sharp change points (short-lived as well as persistent) in a given signal are retained across multiple decomposition levels with a specific and significantly higher probability as compared to other random signal features. But first, we present some important properties that will be useful in the representation and understanding of the dynamics of ITD. We begin with a ``half-wave'' representation of the rotation components as introduced in \cite{frei2007intrinsic}. This representation allows for the definition and extraction of instantaneous phase $ \phi_k^j(t) $ and amplitude $ a_k^j$ over finite support. Figure~\ref{fig:im3} shows a representative halfwave defined between the zero crossings $(z_k^{j},z^j_{k+1}]$.

\begin{customppty}{1}\label{one}
	\normalfont Rotation components can be represented as a concatenation of halfwaves $\hslash^j_{k}(t)$ each of which is defined between two consecutive zero crossings $(z_k^{j},z^j_{k+1}]$, for all $k=1,2,\ldots,N^j-1 $, i.e., 
	\begin{equation}
	\hslash^j_{k}(t)\triangleq \lbrace R^j(t)| t\in(z_k^{j},z_{k+1}^{j}]\rbrace
	\end{equation} Here, each of the halfwaves $\hslash^j_{k}(t)$ has a characteristic amplitude $a^j_k$ and an instantaneous phase component $\phi^j_k(t)$. Note that the halfwaves need not be harmonic or even symmetric (i.e., they can be skewed). 
\end{customppty}

\begin{figure}[t]
	\includegraphics[width=0.38\textwidth]{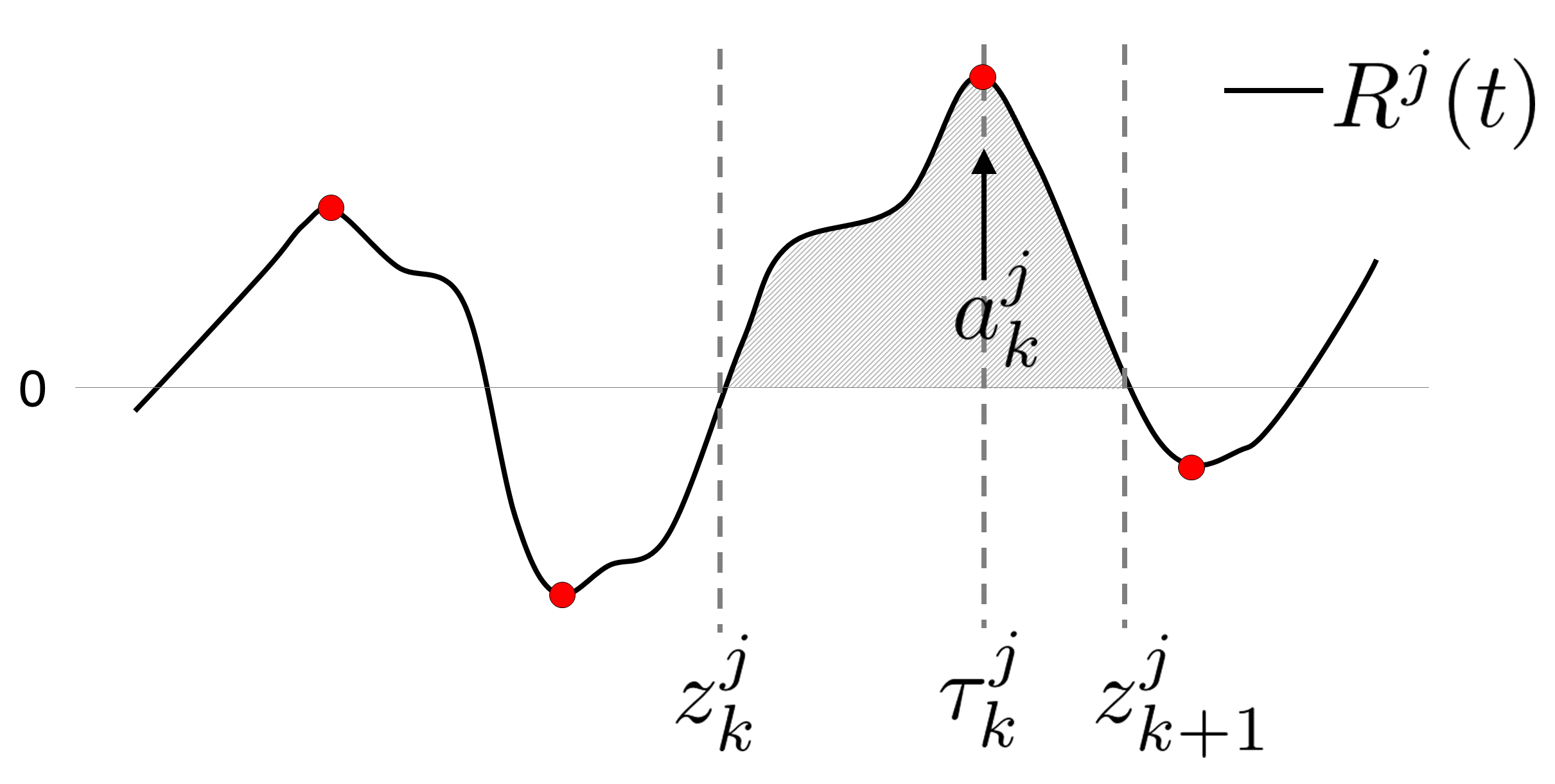}
	\centering
	\caption{A halfwave, $\hslash^j_k(t)$ defined on $(z_k^{j},z_{k+1}^{j}]$ in $R^j(t)$ such that $\tau^j_k$ is the characteristic extrema of $\hslash^j_k(t)$.}
	\label{fig:im3}
\end{figure}


\begin{customppty}{2}\label{two}
	\normalfont \change{The value and the location of extrema in the baseline component at $j+1$, i.e., $\{L(\tau^{j+1}_k),\tau^{j+1}_k\}$ depends only on $\{L(\tau^j_k),\tau^j_k \}$ and not on the signal values elsewhere \cite{restrepo2014defining}.}
\end{customppty}

\noindent Property 2 suggests that in order to obtain $\{L(\tau^{j+1}_k),\tau^{j+1}_k\}$, we do not need to know the entire baseline component $L^j(t)$ but only the value and the location of extrema points in level $j$, i.e., $\{L(\tau^j_k),\tau^j_k \}$. In other words, for the purpose of ITD, we can neglect the intermediate points between two successive extrema. We now employ these properties of ITD to determine the probability with which different change point features in $x(t)$ are retained across the levels of ITD.

{\color{black}We first show that a randomly selected point in the in-control region of $x(t)$ is unlikely to be retained across two or more levels of ITD. Following Property 2, we neglect the intermediate points between the successive extrema of $x(t)$ and consider a time series $x_k$ whose successive samples are the alternating extrema of $x(t)$ (i.e., maximum followed by a minimum). Such an alternating extrema series is a long-term memory process---its autocorrelation function decays slowly regardless of the distribution (or the autocorrelation function) of $x(t)$. Therefore, without loss of generality (also see \cite{restrepo2014defining}), we define: \begin{equation}\label{eq:6}
	x_k \triangleq (-1)^k|w_k|; w_k \sim \mathcal{N}(0,\sigma^2),k\in \mathbb{Z}^+
	\end{equation}  where the magnitudes of successive samples (alternating extrema) are drawn from a white noise process} \change{with mean 0 and standard deviation $\sigma$}. It turns out that the probability that an extremum at $k$ in level $j$ is retained as an extremum across the subsequent $j+\eta, \eta\in\mathbb{Z}^+$ levels decays at a geometric rate as the value of $\eta$ increases. This is presented in the following corollary.

\begin{corollary} \normalfont The probability that an extremum in the rotation component at level $j$ of $x_k$ is retained as an extremum across the subsequent $\eta$ rotation components is approximately equal to $ 0.24^{\eta} $.
\end{corollary}

\noindent The result follows from \cite{restrepo2014defining}. Please refer to Appendix A of the supplementary material for the proof.
\begin{remark}{1}\label{one}
	\normalfont	{\color{black}Corollary 1 shows that the probability with which an extremum in $x_k$ is retained across two or more levels of ITD is less than $0.05$. From a change detection standpoint, it is highly desirable to have a low probability for the extrema points in the in-control region to be retained across multiple levels of ITD. This will enhance the specificity when detecting changes based on combining the information across multiple levels. }
\end{remark}

We now extend this result to a more general scenario by introducing a sharp change in $x_k$ such that the baseline component at some level $j$ is represented as: 
\change{\begin{equation}\label{eq:7}
L^j_k = x_k+\text{sgn}(x_k)\nu\sigma \delta_{k^*}
 \end{equation}
 where $\nu$ is a non-negative scale variable, $\delta_{k^*}$ is Kronecker delta and $\text{sgn}(\cdot)$ is the sign function.} Here, $ \nu\delta_{k^*}$ is representative of a sharp change point at $k^*$. We now determine the probability $P_e(\nu)$ that an extremum at $k^*$ in level $j$ is retained as an extremum in level $j+1$. For notational simplicity, let $r^j_k$ and $l^j_k$ denote $R^j(\tau^j_k)$ and $L^j(\tau^j_k)$. We now write the probability $P_e(\nu)$ as follows: 
\begin{equation}\label{eq:8}
P_e(\nu)=P\left(r^{j+1}_{k^*}-r^{j+1}_{k^*-1}>0\right) P\left(r^{j+1}_{k^*}-r^{j+1}_{k^*+1}>0\right) 
\end{equation}
In the following, we show that as $ \nu $ increases, there is a dramatic increase in the value of $ P_e(\nu) $. For this, we first determine the distribution function of $r_k^{j+1}$.
\begin{proposition}
\normalfont
	\change{Let $ r^{j+1}_k $ be the extremum in the rotation component $R^{j+1}(k)$. The distribution function of the magnitude of $r^{j+1}_{k}$ is given by the convolution of three independent random variables $K_1, K_2$, and $\Gamma$ such that:}
	\begin{equation}\label{eq:9}
	F_{r^{j+1}_{k}}(r)= \hskip -2em \underset{\begin{subarray}{c}\lbrace(\kappa_1,\kappa_2,\gamma) \in\mathbb{R}^2 \times[0,2\nu\sigma];\\ \kappa_1+\kappa_2+\gamma\leq r\rbrace\end{subarray}}{\int\int\int  } \hskip -2em F_{K_1}(d\kappa_1) F_{K_2}(d{\kappa_2})F_{\Gamma}(d\gamma)
	\end{equation}
	where $K_i,i=1,2$ are identically distributed random variables that are a sum of independently distributed normal random variables $l^j_k\sim \mathcal{N}(0,\sigma^2)$ (see Eqs.~\eqref{eq:6} and \eqref{eq:7}) and $\Theta$, i.e., $K_i = l^j_k + \Theta$ with \change{distribution function} given as: 
	\begin{equation*}
	F_K(l,\theta) = \hskip -2em \underset{\{(l,\theta)\in\mathbb{R}^2:l+\theta\leq\kappa\}}{\int\int} \hskip -2em G_{\Theta}(d\theta) G_{l^j_k}(dl)
	\end{equation*} 
	where the distribution function of $\Theta$ is given as:
	$$G_{\Theta}(\theta)=\int_{-\infty}^{\theta}\left(\int_{-\infty}^{\infty}f_{U,l^j_k}\left(l,\frac{\omega}{l}\frac{1}{|l|}dl \right)  d\omega\right)$$
	with $U\sim \text{Uniform}(0,2)$ and $l^j_k\sim \mathcal{N}(0,\sigma^2)$. $\Gamma$ follows a mixture distribution such that:
	$$F_{\Gamma}(\gamma)=\int_{0}^{\gamma}\frac{1}{2\nu\sigma}d\omega \bm{1}_{k={k^*}\pm1}+c \bm{1}_{k=k^*}$$
	where $c>0$ \change{and $\bm{1}_A$ is the indicator function for some set $A$}.
\end{proposition}
Please see Appendix B of the supplementary material for the proof of Proposition 1. In the absence of a closed form representation of Eq.~(\ref{eq:9}), we present the following approximation result to simplify the subsequent analysis:

\begin{corollary}
\normalfont
	 Using Gaussian approximation to the distribution function of $r^{j+1}_{k}$, $P_e(\nu)$ can be deduced in closed form as: 
	\begin{equation}\label{eq:10}
	\hat{P}_e(\nu)=\left[ 1-P\left(\mathcal{Z}\leq-\frac{\nu}{\sqrt{2}} \right)\right]^2  
	\end{equation}
	where $\mathcal{Z}\sim \mathcal{N}(0,1)$.
\end{corollary}

\noindent Proof of the corollary is presented in Appendix C of the supplementary material.

\change{We now use Eq.~(\ref{eq:10}) to determine the probability with which the extremum at $k^*$ in level $j$ is retained as an extremum in level $j+1$. The probability as a function of $\nu$ is shown in Fig.~\ref{fig:im4}(a) in the black line, labeled as ``Approximation''. First, we note that for $\nu=0$, $P_e(\nu)\approx0.25$ is simply is the probability that an extremum in level $j$ is retained as an extremum in level $j+1$ for a white noise signal. This is consistent with the result stated in Corollary 1. Additionally, we note that as $\nu$ increases, there is a sharp increase in the value of $P_e(\nu)$, indicating that the information pertaining to a change point is retained across multiple decomposition levels.} 

\change{To validate the values of $\hat{P_e}(\nu)$ obtained by using the Gaussian approximation, we compare with the corresponding probabilities computed numerically by using the analytical form of the distribution function of $r^{j+1}_k$ as given in Eq.~(\ref{eq:9}) as well as the empirical estimate of $P_e(\nu)$ obtained by using Monte Carlo (MC) simulation. In the MC simulation we perform ITD of $L^j_k$ with different realizations of $x_k$ as given in Eq.~(\ref{eq:7}) and observe the cases when the extremum at $k^*$ is retained as an extremum in the subsequent level. To get a consistent estimate of the probability, we performed 100 MC simulations. From Fig.~\ref{fig:im4}(a) we notice that the Gaussian approximation $\hat{P_e}(\nu)$ closely follows the trend of the probability estimated analytically (blue) as well as via MC simulation (red).}

The sharp rise in $P_e(\nu)$ as observed in Fig.~\ref{fig:im4}(a) can be explained by the Gaussian approximation of $r^{j+1}_k$. First, for $r^{j}_{k^*}$ to remain an extremum in level $j+1$, we need $r^{j+1}_{k^*}-r^{j+1}_{k^*\pm 1}>0$. From the proof of Corollary 2 (Appendix C of the supplementary material), we notice that: $$\left(\hat{r}^{j+1}_{k^*}-\hat{r}^{j+1}_{k^*\pm 1}\right)\sim \mathcal{N}\left(\frac{\nu}{4},\frac{1}{8} \right) $$ Since the mean of $\hat{r}^{j+1}_{k^*}-\hat{r}^{j+1}_{k^*\pm 1}$ is a function of $\nu$, the distribution function of $r^{j+1}_{k^*}-r^{j+1}_{k^*\pm 1}>0$ shifts on the positive $\nu$ axis as the value of $\nu$ increases. As a result, $P_e(\nu)$ increases steeply. 

\begin{remark}{2}\label{two}
	\normalfont	{\color{black}From Fig.~\ref{fig:im4}(a), it is evident that for $\nu \geq3$, the probability of retaining the signatures of change points across two subsequent rotation components is greater than 0.95. This property of ITD to selectively retain the change point information across multiple levels of ITD enhances the specificity and sensitivity of detecting changes based on combining the information across these levels.}
\end{remark}

In the following subsections, we extend the result in Proposition 1 to determine the probability $P_e(\nu)$ in the case of a singularity (short-lived change point) and variance shift (persistent change point). 
\begin{figure}
	\includegraphics[width=0.48\textwidth]{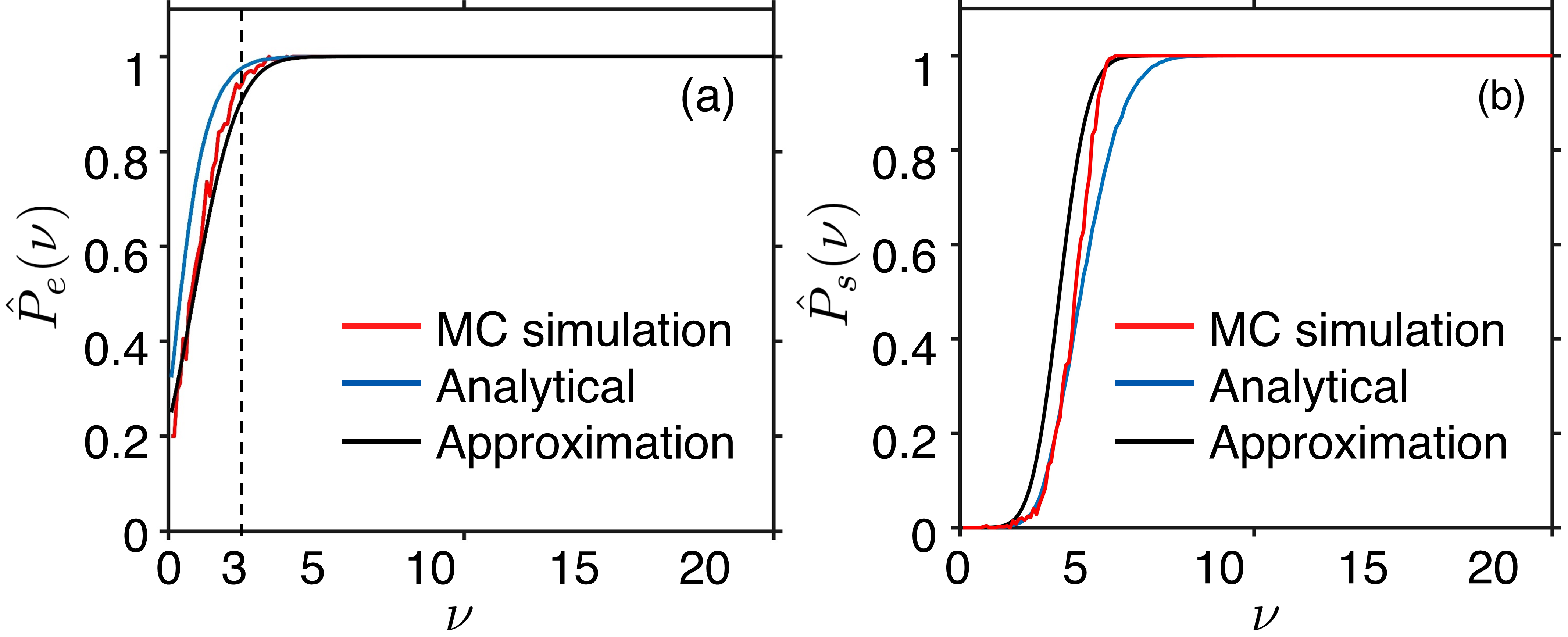}
	\centering
	\caption{Comparison of (a) the probabilities that an extremum at $k^*$ in level $j$ is retained as an extremum in level $j+1$ and (b) the conditional probabilities that a singularity in level $j$ is retained as a singularity in level $j+1$ as obtained via Gaussian approximation (black) with that of the analytical probability estimates (blue) and MC simulation (red).}
	\label{fig:im4}
\end{figure}  

\subsubsection{Extension to singularity detection}
We first consider the systemic feature introduced in Eq.~(\ref{eq:7}) at $k=k^*$ as a singularity whenever $r^j_{k^*}\geq 3\sigma^j$, \change{where $\sigma^j$ is the standard deviation of the rotation component of the signal $y_k$ (as defined in Eq.~(\ref{eq:7})) in level $j$}. The conditional probability $P_s(\nu)$, that a singularity in level $j$ remains as a singularity in level $j+1$, i.e.,
\begin{equation}\label{eq:11}
P_s(\nu) = P\left({r}^{j+1}_{k^*}>3\sigma^{j+1}\big|\nu\geq 3\sigma^j\right)
\end{equation}
can be approximated as given in the following corollary.
\begin{corollary}
\normalfont
	Using the Gaussian approximation to the distribution function of $r^{j+1}_k$ (see Appendix C), $P_s(\nu)$ can be approximated as:
	\begin{equation}\label{eq:12}
	\hat{P_s}(\nu)=1-P\left( \mathcal{Z} \leq 3-\nu \sqrt{19/16}\right)
	\end{equation}
\end{corollary} 

Proof of the corollary is given in Appendix D of the supplementary material. \change{The conditional probability of retaining a singularity as estimated using Eq.~\eqref{eq:12} is shown in Fig.~\ref{fig:im4}(b). We note that for values of $\nu<3$, the probability remains close to 0. However, there is a dramatic increase in $P_s(\nu)$ afterwards and is closes to 1 as the value of $\nu$ exceeds 5. We also compare the probability values obtained from the Gaussian approximation $\hat{P_s}(\nu)$ with that of the analytical estimates as well as MC simulation with 100 runs. The comparative results are presented in Fig.~\ref{fig:im4}(b). We notice that the Gaussian approximation $\hat{P_s}(\nu)$ is consistent with both the probability curves obtained analytically as well as via MC simulation.}

Again, the steep increase in $\hat{P}_s(\nu)$ in Fig.~\ref{fig:im4}(b) can be understood from the Gaussian approximation of ${r}^{j+1}_{k^*}$. Since ${r}^{j+1}_{k^*}\sim \mathcal{N}\left( {\nu}/{4},{1}/{19}\right)$, $P_s(\nu)$ remains close to 0 for ${\nu}/{4}+3\sqrt{{1}/{19}}\leq 3\sigma^j$ and then increases steeply when the above condition no longer holds. This is because the mean of ${r}^{j+1}_{k^*}$ increases linearly as a function of $\nu$ with variance $\ll$ 1. More interestingly, $\nu=3\sigma$ acts somewhat as an ``activation barrier" such that $\hat{P}_s(\nu)\rightarrow1$ as $\nu>3\sigma$. It should also be noted that the conditional probability statement in Eq.~(\ref{eq:11}) is an underestimation of the actual probability with which singularities are retained. This is because there is a small, but non-zero probability that ${r}^{j+1}_{k^*}>3\sigma^{j+1}$ given $\nu<3\sigma^j$, i.e., $P\left({r}^{j+1}_{k^*}>3\sigma^{j+1}\big|\nu < 3\sigma^j\right)>0$ (see Property S1, Appendix F). As a result, a singularity in level $j$ may be retained with much higher probability in the subsequent rotation components than that reported in Eq.~(\ref{eq:11}).

\subsubsection{Extension to variance shift detection}
{\color{black}We now generalize the systemic feature introduced in Eq.~(\ref{eq:7}) to a sharp change in the second moment. Here, we redefine $\bm{l}_k$ such that the second moment of the signal changes sharply at $k=k^*$ from $\sigma_0$ to $\sigma_a$ as: 
\begin{equation}\label{eq:13}
\bm{l}_k\triangleq
\begin{cases}
x_k = (-1)^k|w_k|; w_k \sim \mathcal{N}(0,\sigma_0^2)~,k\leq k^*\\
x_k = (-1)^k|w_k|; w_k \sim \mathcal{N}(0,\sigma_a^2)~,k> k^*\\
\end{cases}
\end{equation}
with  ${\sigma_0^2}/{\sigma_a^2} < 1$. We show in the following that the information of a second-order moment shift (i.e., Eq.~(\ref{eq:13})) is asymptotically retained across the subsequent rotation components. 

\begin{corollary} 
\normalfont The probability $P(\sigma_0, \sigma_a, n_0, n_a)$ that the rotation component $\bm{r}^j$ contains the moment shift is given as:
	$$P({\sigma_0},{\sigma_a}, n_0, n_a) = P\left(\frac{\textit{S}(r^j_{k\leq k^*})}{\textit{S}(r^j_{k> k^*})} < 1\right)= \mathcal{B}_{\varkappa}\left(\frac{n_0}{2},\frac{n_a}{2}\right)$$ where $S(\cdot)$ denotes the sample variance, $\mathcal{B}$ is the regularized incomplete beta function evaluated at $\varkappa = {n_0\sigma_a^2}/({n_0\sigma_a^2 + n_a\sigma_0^2})$ with $n_0+1$ and $n_a+1$ being the length of time series in the in-control and out of control region, respectively.  
\end{corollary}


Refer to Appendix E of the supplementary material for the proof of Corollary 4. The resulting probability map for $P(\sigma_0, \sigma_a, n_0, n_a)$ is shown in Fig.~\ref{fig:varShift}(a) with $n_0 = 1001$. We notice that as $\sigma_a$ increases, the probability of retaining the variance shift information, i.e, $P(\sigma_0,\sigma_a, n_0,n_a)$ in level $j$ asymptotically approaches to 1. In addition to this, Fig.~\ref{fig:varShift}(b) shows the statistical power \change{(i.e., $1-\beta$ where $\beta$ is the probability of type II error)} that rotation component at level $j$ retains the variance shift information at a significance level of $0.05$ also approaches to 1 as the out of control sample size increases.

\begin{remark}{3}\label{three}
	\normalfont	The foregoing results establish the probabilities with which the change point signatures, both for short-lived (Section 2.2.1) and persistent (Section 2.2.2) change points may be retained across the subsequent levels of ITD. The numerical simulations presented in Sections 4.3 and 4.4 investigate the retention of additional forms of short-lived and persistent changes across multiple ITD levels for signals beyond white noise. These studies show the generalizability of the results in real-world applications involving nonlinear and nonstationary signals. 
\end{remark}}

\begin{figure}[t]
	\includegraphics[width=0.48\textwidth]{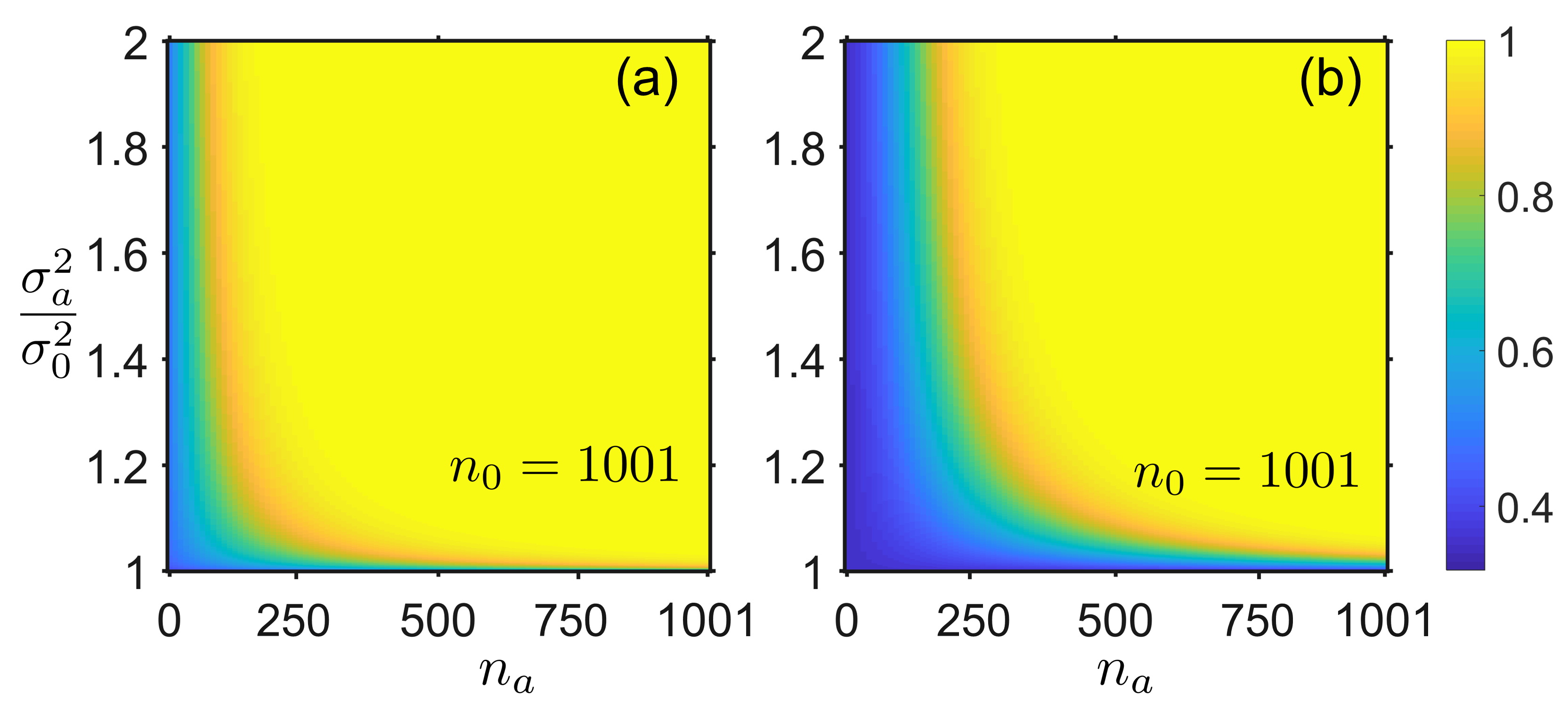}
	\centering
	\caption{ (a) $P(\sigma_0,\sigma_a,n_0,n_a)$ as a function of the ratio of variances, i.e., $\sigma_a^2/\sigma_0^2$ versus the out of control sample size $(n_a)$ with $n_0 = 1001$, (b) variation in the statistical power that the variance shift information is retained in the subsequent rotation component at a significance level of $0.05$.}
	\label{fig:varShift}
\end{figure}

\section{Intrinsic Phase Synchronization}
Evident from the foregoing is that the sharp change point features are highly likely to be retained across multiple levels of $R^j(t)$ as compared to random signal patterns and the specificity of detecting these change points can be significantly enhanced (i.e., false positives can be reduced) if the information from multiple levels are combined. However, not all the components will retain the change point features. The issue of identifying the set $\mathcal{G}$ of rotation components that will retain the change point information still remains. In the following, we employ phase synchronization concepts to determine the set $\mathcal{G}$ of rotation components and subsequently combine the phase and amplitude information that is contained in $R^j(t) \in \mathcal{G}$ to develop the InSync statistic.

\subsection{Phase Synchronization among ITD Components}
\begin{definition}{1}\label{one}
	\normalfont {	Phase synchronization between a halfwave $\hslash^{j_1}_{k}(t)$ of $ R^{j_{1}}(t) $ and the fraction of corresponding halfwave $\hslash^{j_2}_{k}(t)$ at level $j_2>j_1$, within the support, $\text{supp}\left( \hslash^{j_1}_{k}(t)\right) =(z_k^{j_1},z_{k+1}^{j_1}]$ is defined as:
		\begin{equation}\label{eq:14}
		\Phi^{j_1,j_2}_{k}\triangleq\cos(\phi^j_k(t)-\phi^{j+1}_k(t)) = \frac{\left\langle\hslash^{j_1}_{k}(t), \hslash^{j_2}_{k}(t)\right\rangle }{\norm[\big]{\hslash^{j_1}_{k}(t)} \norm[\big]{\hslash^{j_2}_{k}(t)}}
		\end{equation}}
\end{definition}

\noindent The aforementioned definition of phase synchronization is an improvement over the classical phase synchronization quantifier (i.e., $|\phi^{j_1}_k(t)-\phi^{j_2}_k(t)|\approx {\text{constant}}$, \cite{rosenblum1996phase}), in that, it is more robust (due to cosine-scaling) to perturbations in the phases resulting due to noise effects, and provides a more direct quantification of the strength of synchronization between halfwaves at different levels. Comparatively, the classical approach only provides an indirect quantification with an expected value of $|\phi^{j_1}_k(t)-\phi^{j_2}_k(t)|\rightarrow 0$ for highly synchronized halfwaves. Based on this definition, we now estimate the increase in the expected level of phase synchronization when there is a singularity (i.e., $\mathbb{E}[\Phi^{j,j+1}_{k} \big|r^{j}_{k^*}\geq 3\sigma^j] $) versus no singularity (i.e., $\mathbb{E}[\Phi^{j,j+1}_{k} \big|r^{j}_{k^*}< 3\sigma^j] $), as captured in the following proposition:
%

\begin{proposition}
	\normalfont The ratio of expected value of phase synchronization when there is a singularity at $k=k^*$ to the case when there is no singularity at $k=k^*$, i.e.,  
		\begin{eqnarray}\label{eq:15}
		\xi_{k^*}=\frac{\mathbb{E}\left[\Phi^{j,j+1}_{k^*} \big|r^{j}_{k^*}\geq 3\sigma^j\right] }{\mathbb{E}\left[ \Phi^{j,j+1}_{k^*}\big|r^j_{k^*}<3\sigma^j\right] }
		\end{eqnarray}
	is lower bounded as: $$\xi_{k^*} \geq P_s(\nu|\nu>3\sigma^j)\lim_{h\to 0}({P_e(h)})^{-1}\approx4P_s(\nu|\nu>3\sigma^j)$$
\end{proposition}

Please see Appendix F of the supplementary material for the proof. Here, we note that as $P_s(\nu)\rightarrow 1$, we have $\xi_{k^*}\geq 4$. This implies that whenever there is a singularity in $R^j(t)$, the expected level of phase synchronization between the halfwaves at level $j$ and $j+1$ is amplified by more than 4 folds as compared to when there is no singularity. It also suggests that information about a singularity is reflected in the phase synchronization statistic among the corresponding halfwaves.

Apart from singularities, our experimental observations (see Sections 4.3 and 4.4) suggest that the statistic can be used to detect a much broader set of sharp changes in real world dynamic systems. Consistent with the results reported in \cite{gonzalez2002amplitude}, whenever change points are characterized by second/higher order moment shifts in nonlinear and nonstationary dynamic systems, we noted a high level of synchronization between the envelopes of maxima and minima at two different levels of ITD components, also referred to as the Amplitude Envelope Synchronization (AES) (see \cite{banerjee2010chaos} for additional discussion). The value of AES among the levels of rotation components in $\mathcal{G}$ would be higher as compared to those for the remaining components.

\subsection{Mutual Agreement}
As noted in the previous section, it is important to select a set $\mathcal{G}$ of rotation components that would be dynamically similar so that the information contained therein, when fused together, would positively reinforce the presence of a feature or a change point. That is, $\Phi^{j,j+1}_k\rightarrow1$ for $R^j(t),R^{j+1}(t)\in \mathcal{G}$ whenever $k$ is a sharp change point or a feature of interest, and $\Phi^{j,j+1}_k\rightarrow0$ otherwise. This will result in enhanced sensitivity and specificity of detecting change points. We refer to such dynamically similar set of components $\mathcal{G}$ as the set with \textbf{\textit{maximum mutual agreement}}.

{\color{black}In order to determine $\mathcal{G}$ we employ an undirected graph representation $G$ of the rotation components of $x(t)$ such that $G \triangleq \left(V,E\right)$ where the nodes $V=\left\lbrace x(t), R^j{(t)}\right\rbrace$, $j = 0,1,2,\ldots, J-1$ (index $j=0$ represents $x(t)$) and the edges $E = [e_{ij}]=|\Phi^{i,j}|, i\neq j$ capture the pairwise phase synchronization measure (Eq.~(\ref{eq:14})) between the elements of $G$. Here, we consider that the edge weights smaller than a specified Pareto threshold ($\vartheta_p$) represent spurious connections between the elements of $G$ and can be discarded. By adapting the approach presented in \cite{broadwater2010adaptive}, we estimate the threshold $\vartheta_p$ from the realizations of $E$ at a significance level of 10\%, such that $P(E>\vartheta_p)=0.1$. In other words, we consider that only the tail realizations of $E$ capture the salient association between the elements of $G$. As a result of Pareto thresholding, small clusters $\mathcal{G}_1,\mathcal{G}_2,\ldots$ of rotation components are obtained. \changeR{The pseudo code for the selection of rotation components is presented in Algorithm 1. The key step is to remove the edges with edge weights smaller than $\vartheta_p$. In this reduced graph, we identify the connected components each of which forms a cluster. This may be performed by any generic graph search algorithm such as depth first search (Step~6). }

\begin{algorithm}[!h]
	\SetKwInOut{Input}{Input}
	\SetKwInOut{Output}{Output}
	\underline{function mutualAgreement}$(G=(V,E))$\;
	\Input{$V=\{x(t), R^j(t)\},j = 1,\ldots,J-1;E=[e_{ij}]$}   		
	\Output{Clusters $\mathcal{G}_1, \mathcal{G}_2,\ldots$}
	Estimate $\vartheta_p$ such that $P(E>\vartheta_p)\approx 0.1$\;
	Update $E$ as $E[E<\vartheta_p ]\leftarrow 0$\;
	$k\leftarrow 1$\;
	\For{every node $R^j(t)\in V \backslash x(t)$}{ 
		$\mathcal{G}_k\leftarrow$GraphSearch($R^j(t)$) {\color{darkgray}\% \textit{use graph search to identify the components connected to $R^j(t)$}}\;
		$V\leftarrow \{V\backslash \mathcal{G}_k,x(t)$\}\;
		$k\leftarrow k+1$\;
	}
	\caption{Mutual agreement}
\end{algorithm}

For each of the resulting clusters, mutual agreement is defined as: 
\begin{equation}\label{eq:16}
\mathcal{M}^k \triangleq \frac{2}{m(m-1)}\sum_{i,j\in\mathcal{G}_k}^{}|\Phi^{i,j}|
\end{equation}}where $m$ is the cardinality of $\mathcal{G}_k$. An illustrative example of the method is shown in Fig.~\ref{fig:network}. Here, the arc thickness in Fig.~\ref{fig:network}(a) is  proportional to $|\Phi^{i,j}|;\forall i,j=0,1,2,\ldots,J-1,i\neq j$. After thresholding on the realizations of $E$, we obtain three different clusters of rotation components as shown in Fig.~\ref{fig:network}(b). We deem that the cluster that has the maximum value of $\mathcal{M}^k$ (in this case, $\mathcal{G}_1$), contains the set of rotation components with maximum mutual agreement, i.e., $\mathcal{G}\equiv \mathcal{G}_1\sim x(t)=\{R^2(t), R^3(t)\}$.

\begin{figure}[!t]
	\includegraphics[width = 0.5\textwidth]{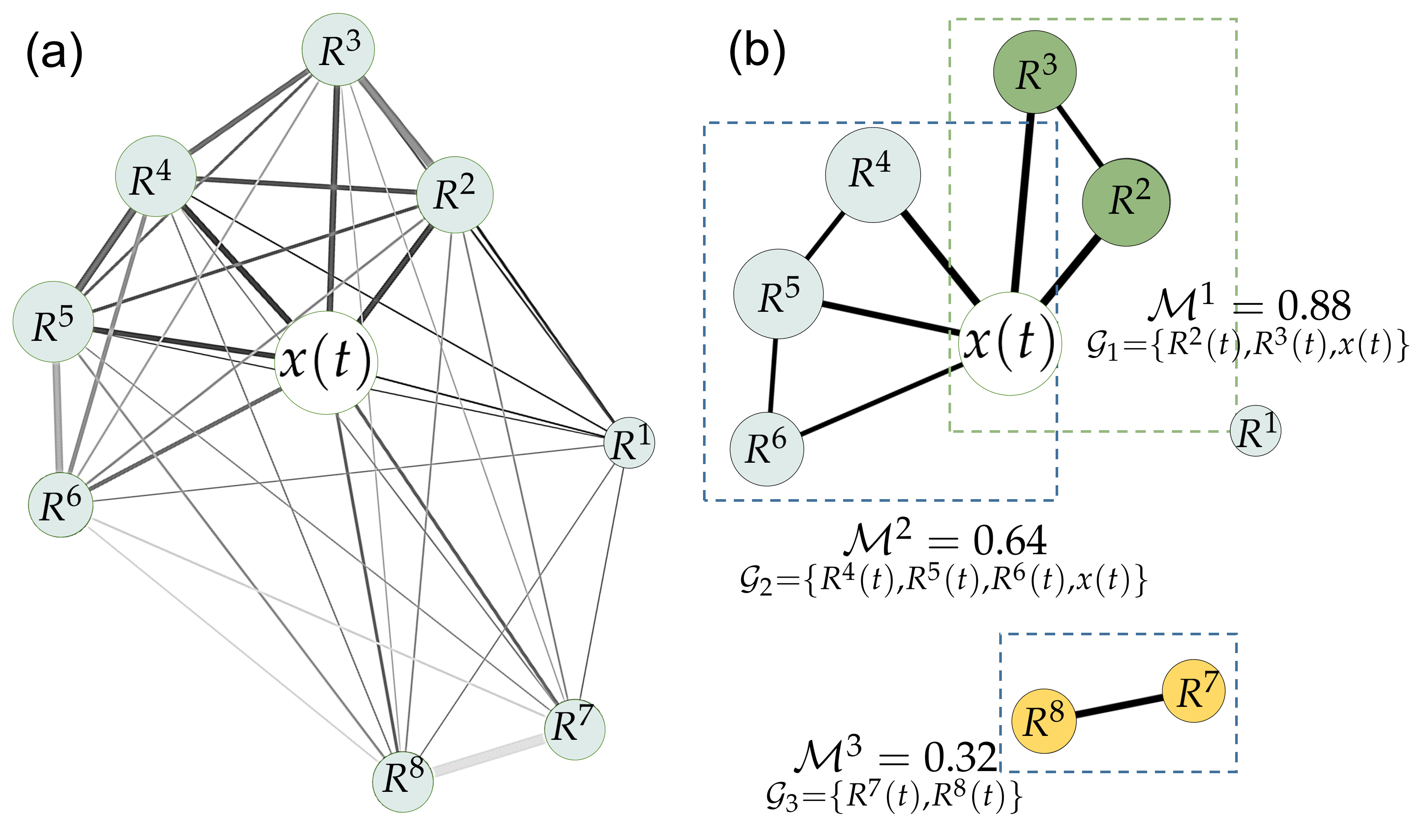}
	\centering
	\caption{(a) Graph representation showing the association between the elements of $G$. (b) Clusters of rotation components obtained after removing the spurious connections as determined by the Pareto threshold $\vartheta_p$.}
	\label{fig:network}
\end{figure} 

\begin{remark}{4}\label{four}
	\normalfont	{\color{black}For a sufficiently long time series, sharp change points such as singularities (and other short-lived anomalies) are mostly captured by lower level rotation components (typically $j\leq 3$) while the change points that are persistent such as trend and moment shifts or other dynamic pattern changes are generally captured by higher level rotation components (typically $j>3$). Whenever multiple change points are present in the signal, e.g., singularities as well as moment shifts, different clusters of rotation components with comparable $\mathcal{M}^k$ values are obtained, each capturing the information pertaining to a specific type of change point. Please see Appendix G of the supplementary material for representative examples.}
\end{remark}

\subsection{The InSync statistic}

In this section, we develop a statistic that would capture and fuse the local phase and amplitude information contained across the rotation components that belong to the set $\mathcal{G}$. Before that, we invoke another property of rotation components that would allow us to combine the phase and amplitude information of the rotation components of $\mathcal{G}$.

\begin{proposition} [\textbf{Property 3}] The support of $\hslash^j_{k}(t)$ at any level $j\geq2$ spans at least one halfwave $\hslash^i_{k}(t)$ from its sub-level $\{R^i(t)\}_{i<j}$ and at most one $\hslash^f_{k}(t) $ from its super-level $\{R^f(t)\}_{f>j}$ as shown in Fig. ~\ref{fig:im6}.
\end{proposition} 

\noindent Proof of this proposition is presented in Appendix H of the supplementary material. This property of ITD components allows us to specify a statistic, referred to as InSync, that fuses the phase and amplitude information contained in the halfwaves across multiple levels within the support of a base halfwave. It may be noted that the InSync statistic inherently borrows the intuition from Kolmogorov's energy cascading principle to combine the phase and amplitude information across multiple levels \cite{paret1997experimental}.
First, we select a base (or reference) rotation component $R^b(t)$ that has the maximum value of weighted degree centrality within the cluster $\mathcal{G}$, i.e.,   
\begin{equation}\label{eq:17}
R^b(t)=\underset{R^j(t)\in\mathcal{G}}{\mathrm{argmax}}
\bigg({\sum_{R^i(t)\in \mathcal{G},i\neq j}^{}|\Phi^{i,j}|}\bigg)
\end{equation}
With this base component $R^b(t)$ determined, we define the InSync statistic as:
\begin{equation}\label{eq:18}
\mathcal{I}(\hslash^b_{k}(t)) \triangleq \bigg( \sum_{R^j(t)\in \mathcal{G}} g\left[\mathcal{E}(\hslash^j_{k}(t))\right] \bigg) \times \prod_{R^j(t)\in \mathcal{G}}\Phi^{b,j}_k
\end{equation}

\noindent such that for each halfwave(s) $\hslash^j_k(t)$, $t\in\text{supp}(\hslash^b_k(t))$. {\color{black}Here, $g(x) = e^{\alpha x}$ is a contrast enhancement function with scale factor $\alpha = {\log(\max(x))}/{\max(x)}$ and $\mathcal{E}(\hslash^b_k(t))$ is the energy (sum of the squares) of each $\hslash^j_{k}(t), t\in\text{supp}(\hslash^b_k(t)), \forall R^j(t)\in \mathcal{G}$.} In Eq.~(\ref{eq:18}), the first term is the energy (or amplitude) component extracted from the base level halfwaves $\hslash^b_k(t)$ superimposed with the energy level of halfwaves at sub ($j<b$) and super ($j>b$) levels of $R^b{(t)}$. Here, the energy term in $\mathcal{I}(\hslash^b_{k}(t))$ is derived from the energy cascading principle, where energy is transferred from larger eddies (ocean currents) to smaller scale eddies as introduced by Kolmogorov and a similar inverse energy cascade principle\cite{paret1997experimental}. The superimposed energy component is then multiplied with the value of phase synchronization among the corresponding components. 

{\color{black}Distribution function of $\mathcal{I}(\hslash^b_{k}(t))\equiv {\sum_j g(\mathcal{E}_j)}\phi$, considering two arbitrary levels $j = 1,2$, can be expressed as the following product distribution:
\begin{equation}\label{eq:19}
F_{\mathcal{I}}(\iota) \propto\int_{-\infty}^{\infty}\frac{1}{2}f_{\sum g(\mathcal{E})}\left(\frac{\mathcal{\iota}}{\phi}\right)\frac{1}{|\phi|}d\phi
\end{equation} where $f(\cdot)$ is the density function of the energy term $\sum_{j=1,2} g(\mathcal{E}_j)$ that follows a generalized Pareto (GP) distribution with scale, shape and location parameters given as, $(c_1s_1+c_2s_2)/(s_1+s_2)$, $(1/s_1+1/s_2)$ and 0, respectively and the phase term, $\phi\sim U(-1,1)$. $\{c_1, c_2\}, \{s_1, s_2\}$ denote the scale and shape parameter of the GP distributions representing $g(\mathcal{E}_j),j=1,2$. Please see Appendix I of the supplementary material for details on the derivation of the distribution function and parameter estimation. The distribution can be similarly generalized for $j>2$.}

%

Notionally, InSync is analogous to the energy-based statistics employed in multi-scale analysis methods for change detection, fused with the intrinsic phase synchronization component. With an additional phase synchronization component, the statistic can capture the dynamic as well as sharp change-related information contained in various signal components more effectively compared to other contemporary methods as can be gathered from various case studies presented in the following section.

\section{Case Studies}
\subsection{Experimental setup}
 
\changeR{We investigate the performance of the InSync statistic for detecting changes in two numerical simulations and six real-world case studies under different nonlinear and nonstationary conditions. The numerical simulations include (a) detection of sharp changes in the dynamics of the logistic map and (b) changes in the noise variance structure of a piece-wise stationary ARMA(2,1) process. For the real-world case studies, we analyze data from healthcare and manufacturing systems. These include (a) detection of machine breakdown, (b) spike trains in neocortical signals, (c) onset of obstructive sleep apnea, (d) sharp transition in the spectral content of vibration signals, (e) detection of eye blinking events, and (f) trend shift in the Nile flow rate. For brevity of the results, we only report the first two real-world case studies here. A summary of the remaining case studies is presented in Table~\ref{table:t5}. For each case study, we compare the performance of our method with EWMA, Wavelet based CUSUM (WCUSUM), DPGSM, Pruned Exact Linear Time (PELT) from the \textit{CPM} package, and likelihood ratio test (LRT) from the \textit{changepoint} package. }

\change{We employ ARL to compare the performance of each method. It measures the total number of in-control or out of control data points that needs to be observed, on an average, before an anomaly can be detected. Therefore, if the process is in-control, a higher value of ARL (denoted as ARL0) indicates a lower likelihood of observing an anomaly and vice-versa. In contrast, when the process is out of control, a low ARL value (denoted as ARL1) is desirable. The control chart limit (i.e., the threshold on $\mathcal{I}^b(t)$ for declaring a change point) is determined by targeting a specific ARL0 value given as, $\text{ARL0}= {1}/\alpha$ where $\alpha$ is the Type I error rate when the process is in-control. For comparison purposes, we consider $3\sigma$ control limits, i.e.,  $\alpha = 0.0027$ such that the value of ARL0 is approximately equal to 370 sample points \cite{wang2018dirichlet}. ARL1 values are then estimated using the CUSUM chart of the InSync statistic as $\text{ARL1}= {1}/(1-\beta)$ where $\beta$ is the probability of type II error when the process is out of control. For a good change detection algorithm, we expect higher values of ARL0 and lower values of ARL1.} We generate 100 replications of the time series in numerical simulations to develop a consistent estimate of ARL1. We also report the true positive (TP) and false positive (FP) rates when it was not possible to estimate the ARL1 values for the competing methods.

\begin{figure}
	\includegraphics[width = 0.32\textwidth]{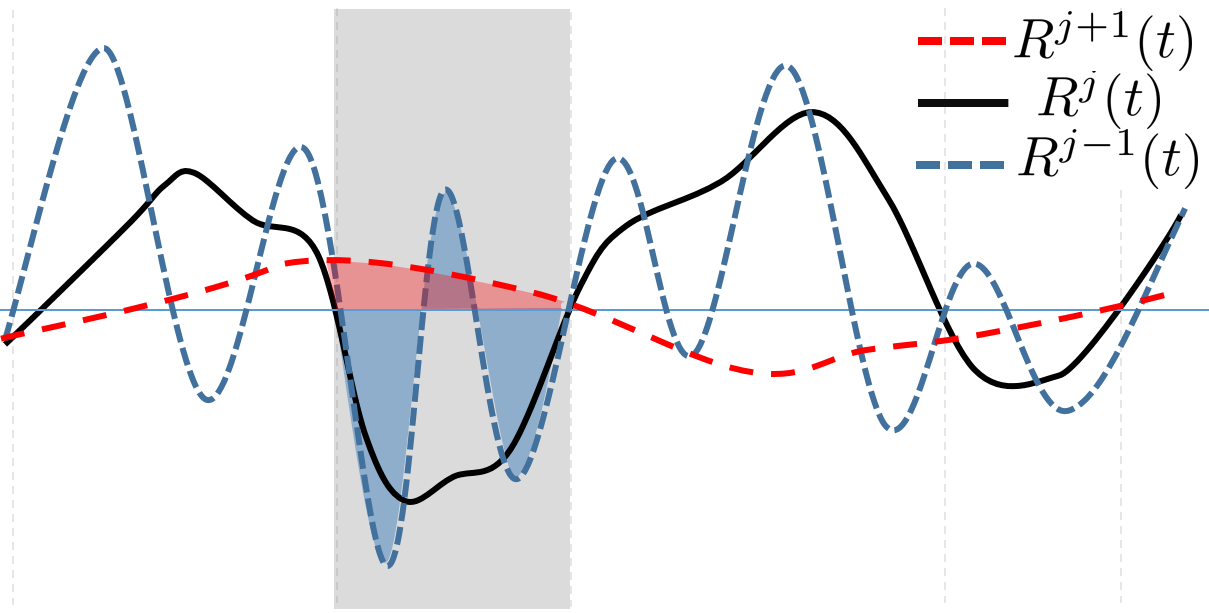}
	\centering
	\caption{ Illustrative example to show the halfwave span property of the rotation components, $R^j(t)$. We notice that the support of $\hslash^j_k(t)$, i.e., $(z_k^{j},z_{k+1}^{j}]$ in $R^j(t)$ spans 3 halfwaves from the previous level, $R^{j-1}(t)$ and a fraction of halfwave from the next level rotation component, $R^{j+1}(t)$.}
	\label{fig:im6}	
\end{figure}

\subsection{Recurrence plot based change point visualization}
Along with the ARL1 values, we also employ recurrence plots (RP) to visualize the change points. RP is a non-linear time series analysis tool which provides a two dimensional representation $[\mathbb{D}]_{ij}=||x^m(t_i)-x^m(t_j)||;i,j\in n$ of the evolution of the trajectory of the time series in the phase space. Here, $x^m(t_i)$ is the realization of the trajectory at time $t_i$ when embedded in an $m$-dimensional phase space such that: \begin{equation}{\label{eq:20}}
x^m(t_i)=\left(x(t_i),x(t_{i+d}),x(t_{i+2d}),...,x(t_{i+(m-1)d})\right)
\end{equation} where $m$ and $d$ are the optimal embedding dimension and time delay, respectively \cite{cheng2015time}. Due to Taken (Taken's theorem \cite{cheng2015time}), \change{$x^m(t_i)$ and the underlying true trajectory of $x(t)$ in the state space are diffeomorphisms}, hence representing the same dynamical system, but in different coordinate systems.  


\begin{figure}[t] 
	\includegraphics[width = 0.48\textwidth]{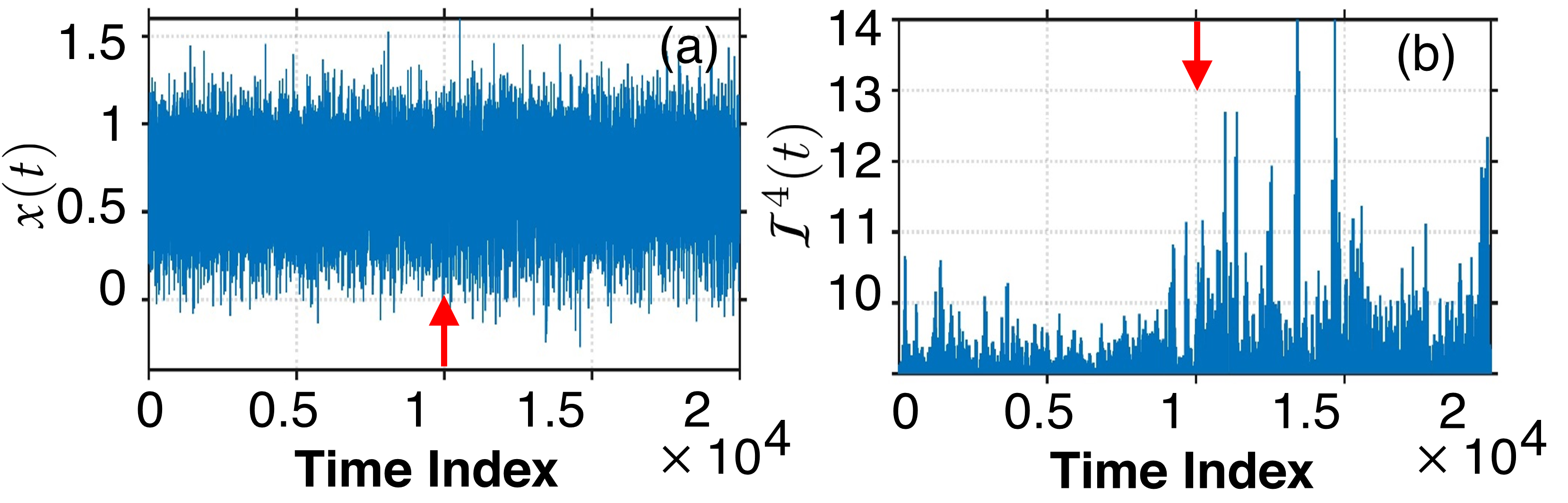}
	\centering
	\caption{(a)  Time portrait of logistic map $x(t)$ with the SNR $=10$~dB. Here, the change point is indicated by the arrow at 10000 t.u.; (b)  shows the InSync statistic $\mathcal{I}(\hslash^4_{k}(t))$ with the set $\mathcal{G}$ being $\{R^3(t), R^4(t), R^5(t)\}$.}
	\label{fig:im7}
\end{figure}

\begin{table}[!b]
\small
	\renewcommand{\arraystretch}{1}
	\caption{Comparison of ARL1 for different values of SNR (dB) in Logistic maps}
	\label{table_example}
	\centering	
	\begin{threeparttable}
		\begin{tabularx}{0.48\textwidth}{c *{6}{Y}}
			\hline
			{\footnotesize SNR} & {\footnotesize EWMA} & {\footnotesize WCUSUM} & {\footnotesize PELT\tnote{\#}} & {\footnotesize LRT} & {\footnotesize DPGSM} & {\footnotesize InSync}\\ 
			\hline
			20 & 36.15 & 8.49  & 1.01 & 1.17 & 1.01 & 1.23\\
			
			10 & 205.35 & 175.6\tnote{$^{\dagger}$}  & 1.09 & 1.94 & Inf & 1.32 \\
			
			6.67 & 288.43 & 199.68\tnote{$^{\dagger}$}  & 1.04 & 14.85 & Inf & 1.46\\ 
			
			5  & 317.46 & 277.49\tnote{$^{\dagger}$}  & 1.05 & 38.76 & Inf & 1.88\\ 
			\hline
		\end{tabularx}
		\begin{tablenotes}
			\item[]\footnotesize{{$^{\dagger}$Failed to detect any change in 10\% of the total runs}}
			\item[]\footnotesize{{$^{\#}$Failed to detect any change in $\geq$40\% of the total runs}}
		\end{tablenotes}  
	\end{threeparttable}
	\label{table:t1}
\end{table}

\subsection{Dynamic regime change in the logistic map}
To test the performance of our method for detecting changes between two nonlinear regimes, we generate a time series $x(t)$ with 20000 data points from the following logistic map model, superimposed with Gaussian noise: 
\begin{equation}{\label{eq:21}}
\begin{aligned}
x(t) &= y(t) + \mathcal{N}(0,\sigma^2) \\     y(t) &=\mu y(t-1)(1-y(t-1));\mu>0,t\in\mathbb{Z}^+ 
\end{aligned}
\end{equation}

\noindent The value of the signal to noise ratio (SNR) is varied from 20 to 5 by changing the value of $\sigma$ in Eq.~(\ref{eq:21}). The SNR is calculated as $\text{SNR}= 10\log_{10}(P_{\text{signal}}/P_{\text{noise}})$ where $P$ is the average power. A typical realization of $x(t)$ with SNR $=10$~dB is shown in Fig.~\ref{fig:im7}(a).

For the in-control regime, we set the value of $\mu$ in Eq.~(\ref{eq:21}) to 3.4 such that the logistic map exhibits a periodic behavior. A change is introduced in the dynamics of the process at $t = 10000$ time units (t.u.) by changing the value of $\mu$ to 3.7 where it exhibits a chaotic behavior. Clearly, this change is not discernible from the direct examination of the time portrait of the process as shown in Fig.~\ref{fig:im7}(a). To implement the proposed methodology, we first determine the set of rotation components with maximum mutual agreement. We note that $\mathcal{G} = \{R^3(t), R^4(t), R^5(t)\}$ with $R^4(t)$ as the base component. The corresponding InSync statistic $\mathcal{I}(\hslash^4_{k}(t))$ is shown in Fig.~\ref{fig:im7}(b). \changeR{One can note a discernible contrast in the values of the statistic between the two dynamic regimes. We assess the performance of the proposed method by comparing the ARL1 for different SNR values and are shown in Table~\ref{table:t1}. For the highest SNR value of 20 dB, PELT, LRT, and DPGSM perform relatively better than InSync. However, for lower SNR values (10 and below), we note that the ARL1 for the InSync statistic is almost two orders of magnitude smaller than EWMA, WCUSUM, and LRT. We exclude DPGSM from comparison as it has an ARL1 value of infinity, i.e., it failed to detect any change. Also, note that even though PELT performed better than the InSync statistic, it failed to detect the change point in more than 40\% of the simulation runs.}

\begin{figure}[!b]
	\includegraphics[width = 0.5\textwidth]{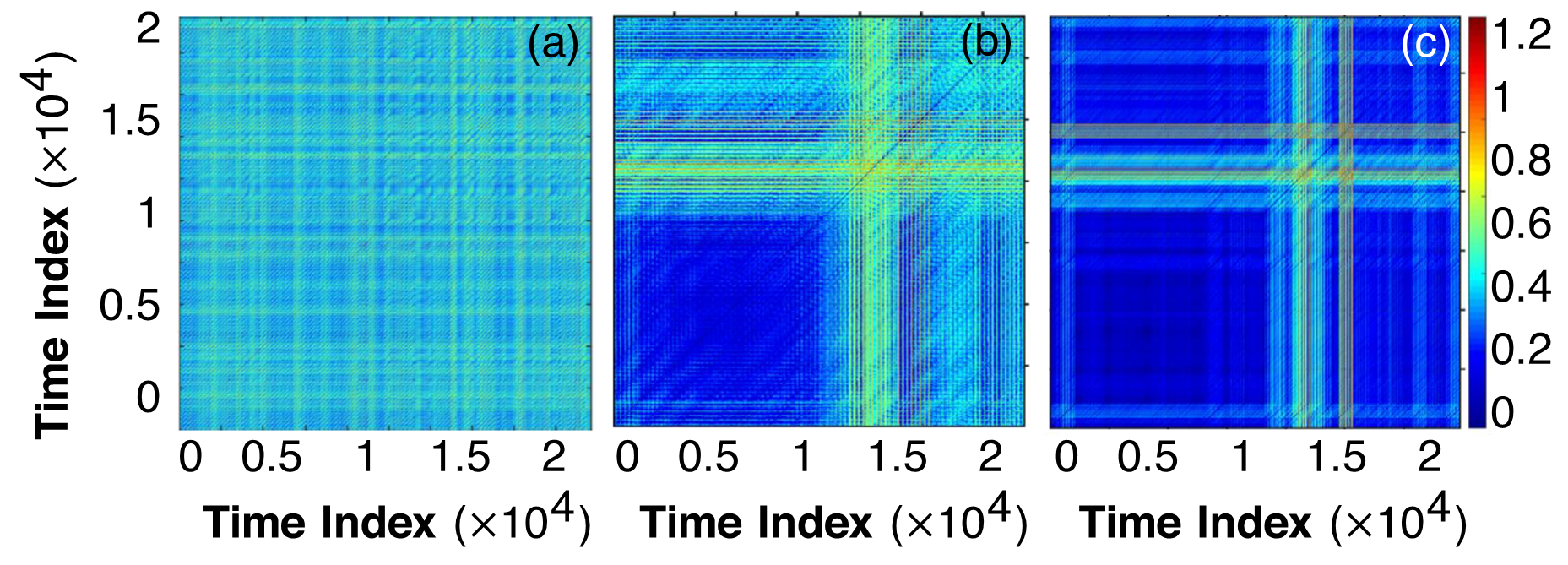}
	\centering
	\caption{ RP of the (a) logistic map $x(t)$ as given in Eq.~(\ref{eq:21}), (b) the InSync statistic $\mathcal{I}(\hslash^4_{k}(t))$ calculated using the set of rotation components $\mathcal{G}=\{R^3(t), R^4(t), R^5(t)\}$, and (c) the statistic calculated using only the base component $R^4(t)$.}	
	\label{fig:im8}
\end{figure}

We now shed some light on the contrast enhancement capability of the InSync statistic by using RP. To construct the RP, we used an embedding dimension $m=3$ and time delay $d=10$ (see \cite{yang2011local} for details on the calculation of $m$ and $d$). As expected, the RP of the original time series $x(t)$ shown in Fig.~\ref{fig:im8}(a) does not show any discernible change in the dynamics of the process. However, there is an apparent contrast in the RP constructed from the InSync statistic, as shown in Fig.~\ref{fig:im8}(b), capturing the change in the dynamics~of~the~process. 

We also establish the significance of the intrinsic phase and amplitude synchronization, i.e., fusing the phase and amplitude information across multiple components. For this, we examine the RP constructed just from a single component, say the base component (here $R^4(t)$) as shown in Fig.~\ref{fig:im8}(c). Although the RP based on $R^4(t)$ contains some information of the change point, it is not able to differentiate the regimes as effectively as the RP of $\mathcal{I}(\hslash^4_{k}(t))$ does. This is because a single rotation component contains only a fraction of the information associated with the change point. This supports our assertion that the information pertaining to the change point is reinforced whenever the phase and amplitude information across the set of rotation components are fused, thereby increasing the specificity and sensitivity of detecting the change points.

\begin{figure}[!t]
	\includegraphics[width = 0.32\textwidth]{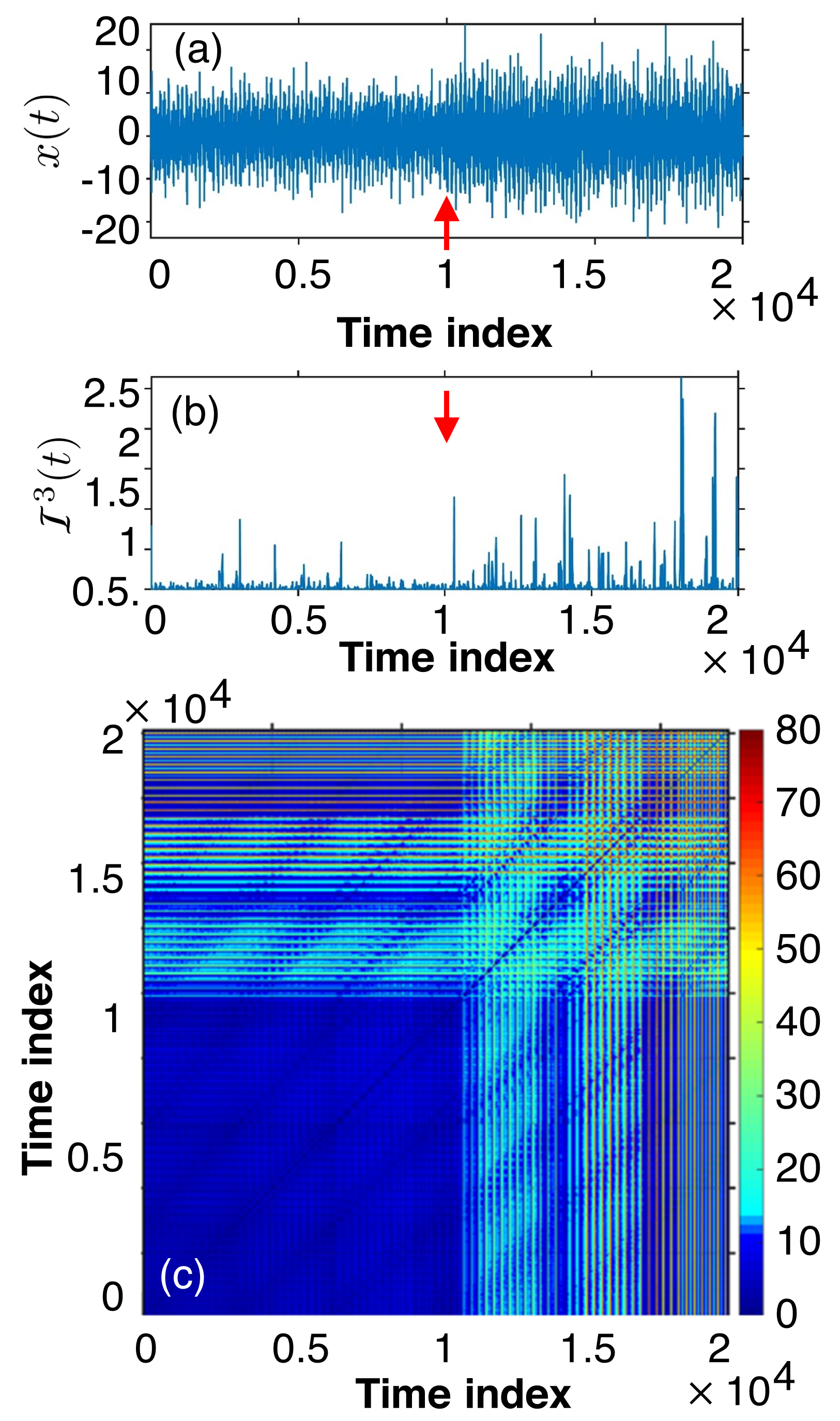}
	\centering
	\caption{(a) Piecewise ARMA time series, (b) InSync statistic $\mathcal{I}(\hslash^3_{k}(t))$ calculated from the rotation components $\{R^2(t), R^3(t), R^4(t)\}$ and (c) shows the corresponding RP.}
	\label{fig:im9}
\end{figure}

\subsection{Piecewise stationary ARMA (2,1)}
In the second simulation study, we test the performance of the method in detecting changes between two nonstationary regimes exhibiting intermittency. For this, we generate a time series with 20000 data points from a piecewise stationary ARMA(2,1) process with \textit{M} breakpoints given as:
\begin{equation}{\label{eq:22}}
\Xi^{(m)}(\varrho)x(t)=\varOmega^{(m)}(\varrho)w(t); t\in\mathbb{Z}; t_{m-1}\leq t<t_m
\end{equation} where $\Xi^{(m)}$ and $\varOmega^{(m)}$ are polynomials of degree 2 and 1, respectively, $\varrho$ is the lag operator and $t_m, m=1,2,\ldots,M$ is the time index of each breakpoint such that $t_0=1$ and $t_M=T$. The sojourn time $t_m-t_{m-1}, m=1,2,...,M$ for each breakpoint is fixed to 100 data points. When the system is in in-control state, the noise variance shock (NVS) follows a normal distribution, i.e., $w(t)\sim N (0,\delta \sigma^2)$ where the variance multiplier cycles as $\delta=\left\lbrace 1,3,5,1,3,5,\ldots\right\rbrace $. Change in the system is introduced at $t = 10000$ t.u. by changing the amplitude of NVS. Figure~\ref{fig:im9}(a) shows a  representative ARMA(2,1) time series where the change point is introduced by changing the NVS from NVS$_0=\{1,3,5,1,3,5,...\}$ to NVS$_a=\{1,4,5,1,4,5,...\}$. Other change points that were assessed include $\left\lbrace \underline{1,3,5, \ldots} \right\rbrace  \rightarrow\left\lbrace  \underline{ 1,4,6, \ldots} \right\rbrace$ and $\left\lbrace \underline{1,3,5, \ldots} \right\rbrace  \rightarrow\left\lbrace  \underline{ 2,4,6, \ldots} \right\rbrace$.  Note that the intermittent behavior of the system makes it difficult to detect such changes~\cite{wang2018dirichlet}. 

We now employ the InSync statistic to detect the changes in the NVS. Using the concept of mutual agreement, we obtain $R^3(t)$ as the base component and $\mathcal{G} = \{R^2(t), R^3(t), R^4(t)\}$ as the set of ITD components with maximum mutual agreement. The corresponding InSync statistic and the RP constructed from the from InSync statistics are shown in Figs.~\ref{fig:im9}(b) \& (c). \change{Visually, the RP is able to capture the transition in the dynamics after the change has occurred}. Table~\ref{table:t2} reports the performance of the InSync statistic in terms of the ARL1 values for different levels of change in the NVS. \changeR{While PELT and DPGSM performs marginally better when the NVS is changed to $\left\lbrace  \underline{ 2,4,6} \right\rbrace$, InSync statistic detects the change much earlier than all the other methods as the magnitude of the change in NVS decreases. We also note that PELT failed to detect the change point in 20\% of the runs.}

\begin{table}[!t]
\small
	\caption{ARL1 for different values of NVS$_a$ in piecewise stationary ARMA(2,1) with NVS$_0 =\left\lbrace \underline{ 1,3,5} \right\rbrace $ }
	\label{table_example}
	\centering
	\begin{threeparttable}
		\begin{tabularx}{0.48\textwidth}{bbbbsbb}
			\hline
			 {\footnotesize NVS$_a$} & {\footnotesize EWMA} & {\footnotesize WCUSUM} & {\footnotesize PELT} & {\footnotesize LRT} & {\footnotesize DPGSM} & {\footnotesize InSync}\\ 
			\hline
			$\left\lbrace  \underline{ 2,4,6} \right\rbrace $ & 109.4 & 6.87 & 1.01 & Inf & 1.01 & 1.06\\
			\hline
			$\left\lbrace  \underline{ 1,4,6} \right\rbrace $ & 113.04 & 14.28 & 15.71 & Inf & 1.35 & 1.07\\
			\hline 
			$\left\lbrace \underline{ 1,4,5} \right\rbrace $ & 329 & 137.8\tnote{\#} & 49.24 & Inf & 1.77 & 1.28\\
			\hline
		\end{tabularx}
		\begin{tablenotes}
			\item[]\footnotesize{{$^{\#}$Failed to detect change in 20\% of the runs}}
		\end{tablenotes}
	\end{threeparttable}
	\label{table:t2}
\end{table}

\begin{figure}[!b]
	\includegraphics[width = 0.35\textwidth]{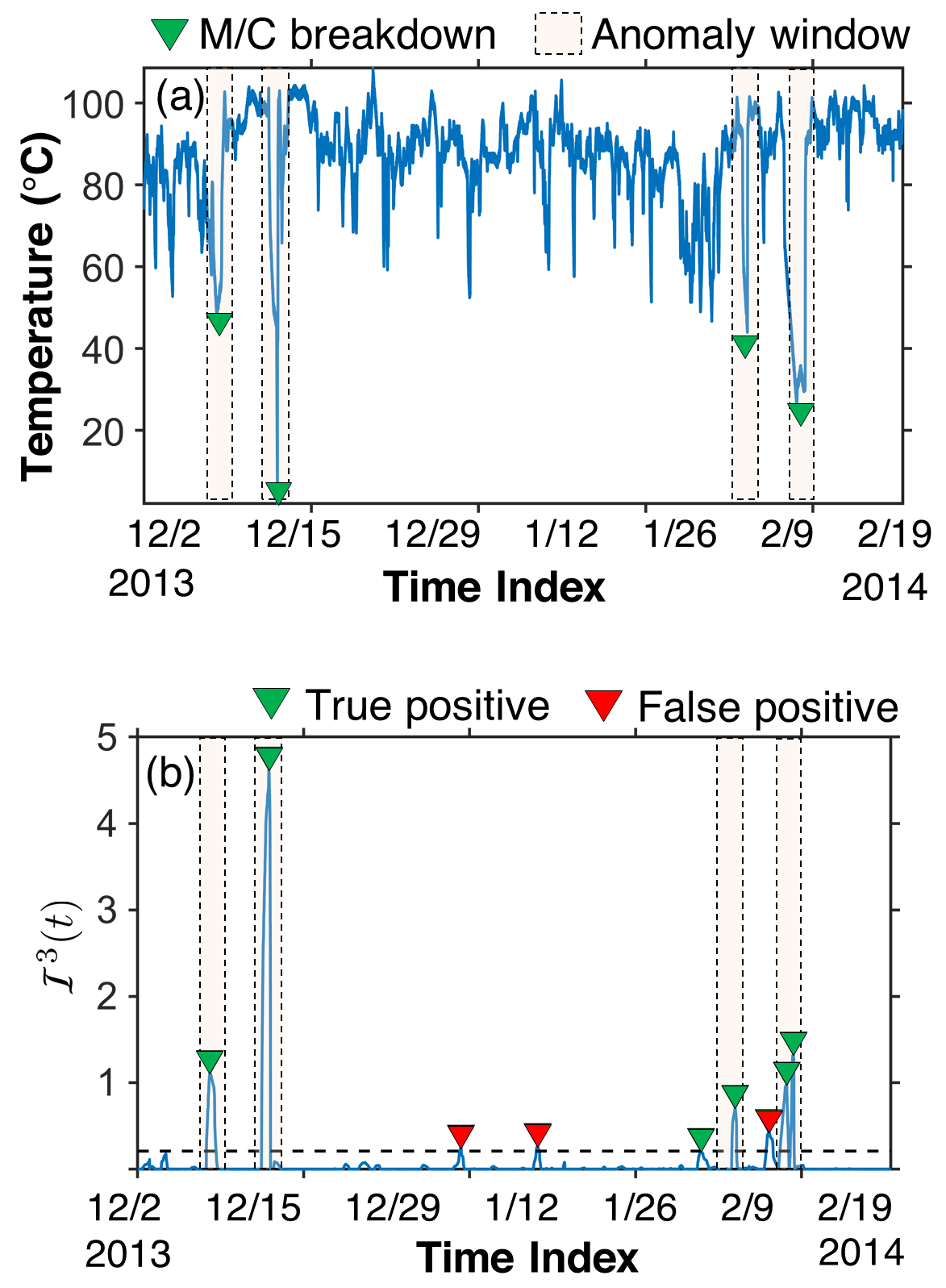}
	\centering
	\caption{(a) Temperature sensor measurement, recorded every 5 min. The machine breakdowns are shown with a green triangle and the anomaly window is appropriately highlighted. (b) The corresponding InSync statistic $\mathcal{I}(\hslash^3_{k}(t))$ calculated from the rotation components, $\{R^2(t),R^3(t),R^4(t)\}$. Singularities that lie within the anomaly window are deemed as TP rest as FP.}
	\label{fig:im10}
\end{figure}

\subsection{Industrial anomaly detection}
To establish the effectiveness of the proposed methodology on real-world nonstationary systems, we first examine the benchmark dataset on anomaly detection in an industrial machine \cite{ahmad2017unsupervised}. The dataset contains temperature measurements of an internal component of the machine recorded every 5 min for 79 consecutive days, as shown in Fig.~\ref{fig:im10}(a). The machine breaks down whenever the temperature of the component abruptly goes below a specified limit. Since the machine breakdowns are associated with abrupt changes that are short-lived, we consider the corresponding change points as singularities in the system. 

\changeR{In this case study, we subscribe to a standard scoring function proposed in \cite{ahmad2017unsupervised}. Using an anomaly window, the scoring function assigns a positive score for a true detection and penalizes for any missing anomalies or false positives. The anomaly window is set to 10\% of the length of the time series, divided by the total number of anomalies in the dataset. See Appendix J of the supplementary material for details.}

To detect these singularities, we analyze the set $\mathcal{G}=\{R^2(t),R^3(t),R^4(t)\}$ of rotation components exhibiting maximum mutual agreement with $R^3(t)$ as the base component. \changeR{The InSync statistic calculated using these components is shown in Fig.~\ref{fig:im10}(b). To minimize the false alarm rate, we determine the threshold on $\mathcal{I}(\hslash^3_{k}(t))$ by setting ARL$0 = 10^{4}$ samples for the in-control region and is shown in Fig.~\ref{fig:im10}(b) with a dotted black line. With this threshold, we note that eight singularity points are detected by the InSync statistic. Per the benchmark scoring function, if more than one anomaly is detected within an anomaly window for a given change point, only the first detection point is considered and the rest are ignored. Therefore, the second anomaly detected in the last anomaly window is ignored for calculating the final score. We also note that out of these detected singularities, four lie outside any anomaly window. These represent the false positives and are marked with red. Anomalies that are correctly identified (true positives) are marked with green. To compare the performance of our method with the benchmark methods (see \cite{ahmad2017unsupervised} for details), we use the TP and FP along with the scoring function. The results are summarized in Table~\ref{table:t3}. Among all the methods, the InSync statistic was able to detect the anomalies with the least number of false alarms and achieves the highest score. The ARL1 values corresponding to EWMA, WCUSUM, PELT, DPGSM, and InSync. are presented in Table~\ref{table:t5}}.

\begin{table}[!t]
\small
	\renewcommand{\arraystretch}{1.05}
	\caption{Comparison of the TP, FP and NAB benchmark score (out of 4) for various benchmark methods. }
	\label{table_example}
	\centering
	\begin{threeparttable}
		\begin{tabular}{c c c c}
			\hline
			Methods & TP & FP & Score \\ 
			\hline 
			\textbf{InSync} (ARL$0 = 10^{4}$) & \textbf{4} &  \textbf{1} & \textbf{3.89}\\
			PELT (ARL$0 = 10^{4}$) & 4 & 2 & 3.78\\
			Hierarchical Temporal Memory$^{\dagger}$ & 4 & 12 & 2.68\\
			Contextual Anomaly Detector$^{\dagger}$& 2 & 0 & -0.165\\
			Relative Entropy$^{\dagger}$& 2 & 9 &  -0.916\\		
			KNN-CAD$^{\dagger}$ &  2 & 20 & -2.856\\
			Bayesian Change point$^{\dagger}$& 1 & 4 &  -3.320\\		
			\hline 		
		\end{tabular}
	\begin{tablenotes}
		\item[]\footnotesize{$^{\dagger}$See \cite{ahmad2017unsupervised} for reference}
	\end{tablenotes}
	\end{threeparttable}
	\label{table:t3}
\end{table}
\subsection{Detecting singularities in neocortical signal patterns}
In this case study, we demonstrate the efficacy of the proposed method in detecting singularities in the signal recorded from the neocortex region of the brain. Reliable and accurate detection of these spikes is still an open problem because (a) the spikes are oftentimes mistaken with other electrical activities \cite{gordon2013comparing} or (b) the waiting time between two spikes may be as low as 0.6 ms, which might not be resolved by conventional change detection methods. Data used in this study was collected for 60 s at a sampling rate of 24~kHz  \cite{quiroga2004unsupervised}. A 50 ms realization of the signal is presented in Fig.~\ref{fig:im11}(a). 

Since the problem involves detecting singularities, we expect the halfwaves containing these singularities to show a strong phase synchronization at multiple levels of $R^j{(t)}\in \mathcal{G}$ (see Proposition 2). To detect these singularities, we identify $R^1(t)$ and $R^2(t)$ as the set of rotation components with maximum mutual agreement. This is apparent since the singularities are high frequency features and the change point information pertaining to singularities is resolved mostly in the first few levels. The corresponding plot of the InSync statistic and the RP constructed from the InSync statistic are shown in Figs.~\ref{fig:im11}(b) \& (c), respectively. The singularity points (represented by sharp vertical lines) can be easily visualized from Fig.~\ref{fig:im11}(c). 

\begin{figure}[!t]
	\includegraphics[width = 0.38\textwidth]{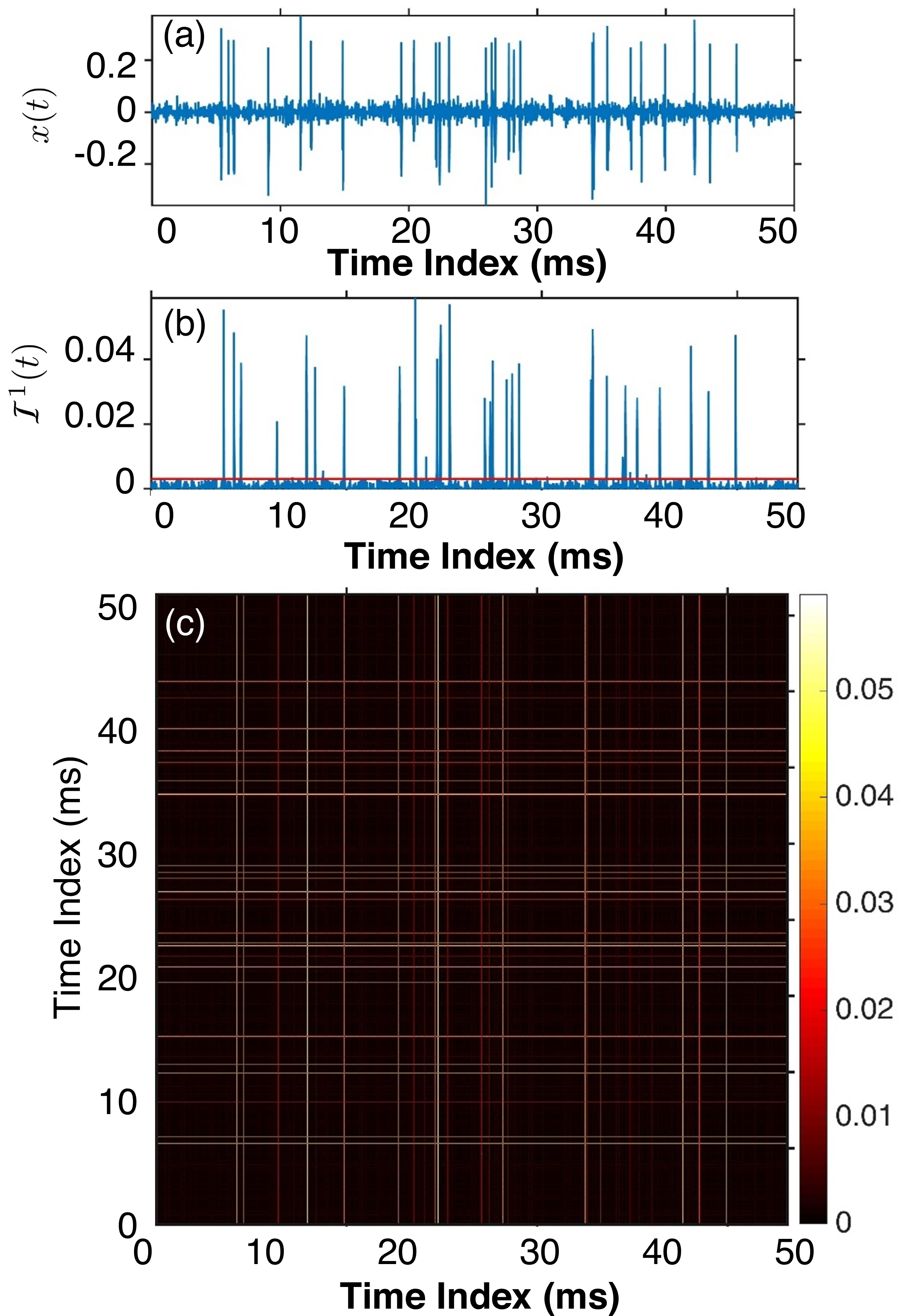}
	\centering
	\caption{(a) A 50 ms realization of the neocortical recording (b) the corresponding plot of the InSync statistic, $\mathcal{I}(\hslash^1_{k}(t))$ calculated from the rotation components, $R^1(t)$ and $R^2(t)$ (c) shows the RP constructed from the~time~series~of~$\mathcal{I}(\hslash^1_{k}(t))$.}
	\label{fig:im11}
\end{figure}

\begin{table}[!b]
\small
	\renewcommand{\arraystretch}{1.05}
	\caption{Comparison of FN for different noise levels (NL)}
	\label{table_example}
	\centering
	\begin{tabular}{c c c c c c}
		\hline
		NL\tnote{\#} & Spike Count &  SPC  & WCUSUM &  PELT & InSync\\ 
		\hline 
		0.05 & 3514   & 194 & 33 & 549 & 52 \\
		0.10 & 3448 & 178  & 38 & 694 & 98 \\
		0.15 & 3472  & 258 & 59 & 1152 & 176 \\
		0.20 & 3414 & 626  & 68 & 1747 & 255 \\
		\hline
	\end{tabular}
	\begin{tablenotes}
			\item[]\footnotesize{{$^{\#}$NL represents the signal standard deviation relative to the peak amplitude of the spikes.}}
		\end{tablenotes}  
	\label{table:t4}
\end{table}

\begin{sidewaystable*}
	\caption{Summary of various simulated and real world case studies implemented using the intrinsic phase and amplitude synchronization with corresponding average ARL1. ($^{\dagger}$not reported in Section 4, $^{\dagger\dagger}$see Appendix G in the supplementary material, $^{\#}$results in over segmentation in the presence of noise.)}
	\label{table_example}
	\centering
	\begin{threeparttable}
		\bgroup
		\def\arraystretch{1.1}%
		\begin{tabular}{>{\centering\bfseries}m{1in} >{\centering}m{0.8in} >{\centering}m{0.35in} >{\centering}m{0.45in} >{\centering}m{0.35in} >{\centering}m{0.350in} >{\centering}m{0.35in} >{\centering}m{0.35in} >{\centering\arraybackslash}m{4.0in}}
			\hline
			Case  study& Change point(s) & EWMA & Wavelet-CUSUM & PELT & LRT$^{\#}$ & DPGSM & InSync & 
			Methods implemented and remarks  \\ 
			&  & \\ \hline
			Logistic map & Periodic to chaotic  &211.84  & 165.32 & 1.05 & 14.18 & \textbf{1.01} &  1.47 & InSync consistently detected the change point with $\overline{\text{ARL1}}$ almost 2 orders of magnitude smaller as compared to EWMA and WCUSUM. PELT and DPGSM did not detect the change in at least 40\% of the simulation runs.\\	\hline
			Piecewise stationary ARMA(2,1) \cite{wang2014change} & Change in NVS  & 183.82 & 52.98 & 21.99
			& Inf & 1.37 & \textbf{1.13} &  InSync was able to detect the smallest change in NVS 	($\{ \underline{1,3,5} \} \rightarrow \{ \underline{ 1,4,5} \} $).\\ \hline
			Machine temperature sensor data \cite{ahmad2017unsupervised} & Machine breakdown states & 4 & 4 & 1 & Inf & Inf & \textbf{1} &  Overall score assigned to InSync was $\approx33\%$ higher than other benchmark methods tested. Overall score for EWMA and WCUSUM  were  -4.94 and -11.5, respectively.\\ \hline
			Neocortex spike detection \cite{quiroga2004unsupervised} & Neuronal firings & 1.25 & \textbf{1.01} & 2.05 & Inf & Inf & {1.05} & WCUSUM detected spikes with maximum sensitivity (0.98) followed by InSync(0.95). For SPC and EWMA, sensitivity values were, 0.93 and 0.80, respectively.\\ \hline
			Obstructive Sleep Apnea (OSA), \cite{le2013prediction}$^\dagger$ & No OSA to OSA  & 214 ms & 193 ms  & \textbf{10.13} ms & Inf & {11} ms & {12.76} ms &  Apnea event was detected fastest using PELT, followed by DPGSM and InSync with ${\overline{\text{ARL1}}}\approx 12.76$ ms.\\ \hline
			Chemical mechanical polishing \cite{wang2014change}$^\dagger$ &  Pad glazing after 9 mins of polishing & 90 ms & 17 ms & 2.22 ms & Inf & 9.42 ms  & \textbf{1.46} ms &  Pad glazing was detected by InSync with $\overline{\text{ARL1}}=1.46$ ms as compared to 2.22 ms \& 9.42 ms for PELT \& DPGSM.\\ \hline
			Eye blinking using EEG \cite{Lichman2013}$^\dagger$ & Eye opening and closing events (19 events)  & 1.35 & 1.19 & 2.3 & Inf & Inf & \textbf{1.05 }&  InSync detected the singularity events with a sensitivity value of 0.95 and 2 false positives. For EWMA and WCUSUM sensitivity values were recorded to be 0.74 with 19 false positives and 0.84 with 9 false positives, respectively. \\
			\hline 
			Nile flow rate$^{\dagger\dagger}$ & Trend shift& 2.98 & 300 & 1.01 & \textbf{1} & Inf & 1.01 & Demonstrates the efficacy of InSync in detecting trend changes with $\overline{\text{ARL1}}=1.01$. \\ \hline
			Logistic map $\bigodot$ Neocortex$^{\dagger\dagger}$   & Multiple change points & 163.6, 261.4, 1.2 & 1.41, 1.01, 1.01 & Inf & Inf & Inf & \textbf{1.11, 1.10, 1.01 } & Concatenation ($\bigodot$) of Logistic map and neocortex signal. Consists of dynamic pattern change, variance shift with interspersed singularities. InSync performed consistently in detecting all the change points.\\ \hline \hline 
			Computational cost (seconds) & Logistic map & 0.0028 & 0.77 &  0.9 & 0.3 & $>1$ & 0.02 & The InSync statistic has a lower computational cost as compared to all the comparative methods except for EWMA.   \\  \hline
		\end{tabular}
		\egroup
	\end{threeparttable}
	\label{table:t5}
\end{sidewaystable*}



We compare the performance of the InSync statistic against the superparamagnetic clustering (SPC) algorithm proposed in \cite{quiroga2004unsupervised}, WCUSUM, and PELT for \change{SNR values of 20, 15, 10, and 5 dB}. For comparison purposes, we use the number of false negatives (FN) as reported in Table ~\ref{table:t4}. The corresponding ARL1 values are summarized in Table~\ref{table:t5}. We notice that the InSync statistic and WCUSUM are able to detect the spikes in all the cases with a relatively higher sensitivity (lower FN) as compared to SPC and PELT. Since SPC is based on identifying the shape features of the spikes followed by clustering, it is highly likely that the shape features of the spikes may not belong to the same cluster. In contrast to SPC, InSync statistic enjoys the advantage that spikes are short-lived changes and are retained across multiple levels of rotation components with a very high probability as compared to random signal~fluctuations.

{\color{black}We also note that the number of FP for the InSync statistic increases as the SNR decreases for a fixed value of sensitivity. This may be attributed to the fact that rotation components tend to retain random fluctuations in a given signal across multiple decomposition levels---although with a very small probability---causing the fluctuations to appear as FP in the InSync statistic. \changeR{In contrast to intrinsic time scale decomposition, wavelet decomposition (used in WCUSUM) provides better temporal localization of high-frequency signal features \cite{frei2007intrinsic} leading to fewer false negatives as compared to InSync.} However, compared to the sample size ($1.44\times 10^6$), the influence of the number of FP on the overall performance (specificity) of the InSync statistic would be negligible.}

\section{Summary And Discussion}
We presented an approach for detecting changes in nonlinear and nonstationary systems based on tracking the local phase and amplitude synchronization among multiple intrinsic components of a univariate time series signal obtained via intrinsic time scale decomposition. We showed that the signatures of sharp change points such as singularities and moment shifts are preserved across multiple ITD components with a significantly high probability as compared to random signal signatures. This is significant as it offers the possibility to leverage the information contained across multiple ITD components to detect change points. In this direction, we used a phase synchronization measure to show that the change point information is reinforced (amplified by at least four times) when the phase and amplitude information across a set of ITD components are fused. Subsequently, we developed a network-based maximum mutual agreement approach to identify the set of ITD components that are most likely to retain the change point information and developed an InSync statistic that combines and reinforces the phase and amplitude information contained across these ITD components.

We implemented the InSync statistic to detect change points in two simulated and six real-world case studies in healthcare as well as manufacturing systems. We used ARL1 values to compare the performance of our method with other classical approaches including EWMA, WCUSUM, and DPGSM along with two benchmark change detection packages, \textit{CPM}, and \textit{changepoint}. A summary of the performance measure is presented in Table~\ref{table:t5}. These results suggest that our method was able to detect change points with ARL1 on an average of almost 62\% lower as compared to the other methods tested. In addition, the InSync statistic was able to detect these sharp change points with a relatively high sensitivity ($\approx0.91$; on an average 20\% higher compared to the other methods) and low false positive rates. The significant increase in the sensitivity of the method is attributed to its contrast enhancement property. As the statistic combines the phase and amplitude information from multiple ITD components, it reinforces and amplifies the contrast between different intermittent regimes and sharp change points. 

However, there are some limitations to the present approach. First, the method is limited in detecting only sharp change points. Gradual changes in the mean and higher order moments are detected with relatively higher ARL1 values. Second, due to the small non-zero probability of retaining random signal features, the specificity of the InSync statistic may be lower in the presence of high noise levels. Our ongoing efforts are focused on addressing these issues by analyzing the trend or baseline components of the given signal.

%
%
%
\section*{Acknowledgments}
This work was supported by the kind funding from the National Science Foundation, grant no. CMMI-1432914, CMMI-1437139, IIP-1543226, IIP-1355765 and ECCS-1547075.

\bibliographystyle{IEEEtran}
\bibliography{references_IEEETSP}

\clearpage

\setcounter{assumption}{0}
\setcounter{proposition}{0}
\setcounter{corollary}{0}
\setcounter{lemma}{0}
\setcounter{proof}{0}
\setcounter{figure}{0}   
\setcounter{equation}{0}     
\setcounter{theorem}{0}
\renewcommand\thefigure{S\arabic{figure}}    
\renewcommand\theequation{S\arabic{equation}}    
\newcounter{storeeqcounter}
\newcounter{tempeqcounter}    

\begin{figure*}
\centering
      \Huge{Change detection in complex dynamical systems using intrinsic phase and amplitude synchronization: Supplemental Material}\\
\end{figure*}
\appendices
\section{Proof of Corollary 1}\label{appendix:A}
  
\begin{corollary} The probability that an extremum in the rotation component at level $j$ of $x_k$ is retained as an extremum across the subsequent $\eta$ rotation components is approximately equal to $ 0.24^{\eta} $.
\end{corollary}

To show that the assumption holds reasonably well, we first refer to [1] where the authors numerically showed that the consecutive extrema points in $x_k$ (Eq.~(6) of the main document) evolve via a random decimation process, i.e., the extrema points are equally likely to be destroyed or retained in the next level, independent of its neighbors. This makes the consecutive $\Delta^j_k$ independent. To verify this assumption, we analyze the autocorrelation function of $\Delta^j_k$ for the rotation component at level $j=2$ of $x_k$ and is shown in Figure~\ref{S1}. We note that the value of the autocorrelation remains close to zero for lags greater than one. Although, there is a small, but non-zero autocorrelation at lag one, we ignore this small correlation to simplify the calculations in the later part. As we see in Figure 3 of the main document, the probability estimation based on this assumption (Eq.~(10) of the main document) closely captures the probability estimation as made from the Monte Carlo simulation of Eq.~(7) of the main document that doesn't assume independence.

Using this independence assumption, we present the following two lemma necessary for the subsequent analysis.

\begin{lemma}
For a sufficiently long time series (i.e., $N^j\rightarrow\infty$), $\Delta^j_k$ follows an exponential distribution. 
\end{lemma}
\begin{proof}{}\label{one}
	\normalfont	{For any level $j-1$, let the extrema locations be denoted as $\{\tau^{j-1}_1,\tau^{j-1}_2,\ldots,\tau^{j-1}_n\}$. As the extrema points evolve via a random decimation process, the number of successive extrema points (say $c^j_{k}$) that disappear until an extremum is retained is geometrically distributed.	From the law of large numbers (LLN), the sample mean of inter-extremal separations at level $j-1$ within the interval $[\tau_{k-1}^j, \tau_k^j]$, i.e.,  $\mathbb{E}({\Delta_{k}^{j-1},\forall k :\tau^{j}_{k-1}<\tau^{j-1}_k<\tau_k^j})$ converges to the population mean $\mathbb{E}(\Delta_{k}^{j-1},\forall k = 1,\ldots, N^{j-1})$. From ITD we have, 		
	$$ \Delta_{k}^{j}  = {c^j_{k}}\mathbb{E}\left({\Delta_{k}^{j-1},\forall k :\tau^{j}_{k-1}<\tau^{j-1}_k<\tau_k^j}\right) $$ Since, $\mathbb{E}({\Delta_{k}^{j-1},\forall k :\tau^{j}_{k-1}<\tau^{j-1}_k<\tau_k^j})$ converges to $\mathbb{E}(\Delta^{j-1}_k) $, we note that $\Delta_{k}^{j}$ converges to $c^j_k\mathbb{E}(\Delta^{j-1}_k) $. Therefore, $\Delta^j_k$ is geometrically distributed with some parameter $p^j$. 

	Let $N^jp^j=\lambda^j$ where $N^j$ is the number of extrema in level $j$. Then in the limit $N^j\rightarrow \infty$, we have
	\begin{eqnarray}\label{Eq2}
	F_{\Delta^j_k}(\varDelta)&=&\lim_{N^j\rightarrow\infty} \sum_{i=0}^{\varDelta}{\left(1-\frac{\lambda^j}{N^j}\right)^i\frac{\lambda^j}{N^j}}\nonumber\\
	&=&\int_{0}^{\varDelta}\lambda^j\exp(-\lambda^j\Xi)d\Xi
	\end{eqnarray}
	\change{where  $\lambda^j=1/\mathbb{E}[\Delta^j_k]$.} \QED}
\end{proof}

\begin{figure}[!t]
\centering
\includegraphics[width = 0.40\textwidth]{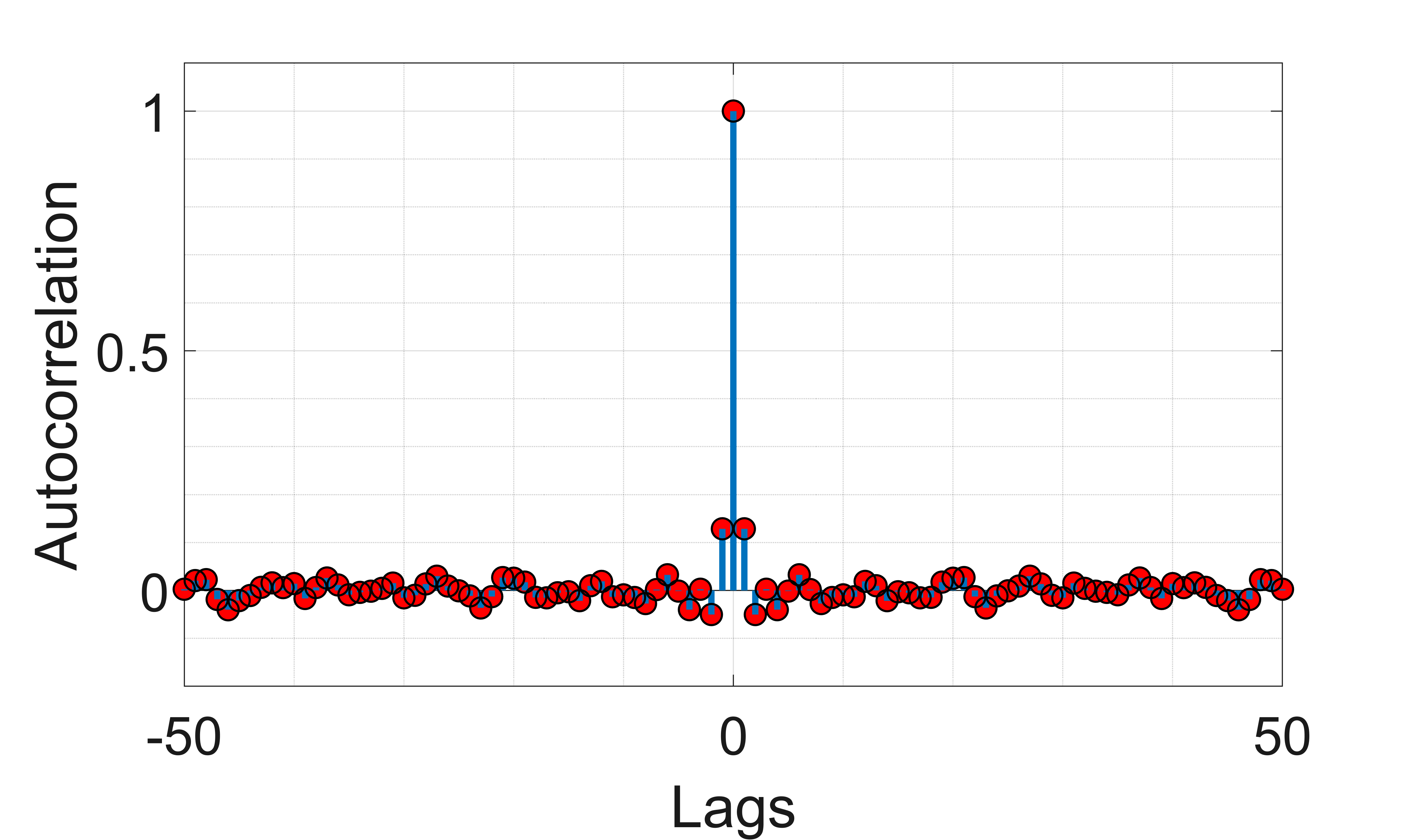} 
\caption{Autocorrelation of the inter-extremal separations}\label{S1}
\end{figure}

\begin{lemma}
	Let $q^j_k$ be defined as the ratio of the difference  to the sum of the inter-extremal separations given as: \begin{eqnarray}\label{Eq3}
	q^j_k := \frac{(\tau^j_k-\tau^j_{k-1})-(\tau^j_{k+1}-\tau^j_{k})}{(\tau^j_k-\tau^j_{k-1})+(\tau^j_{k+1}-\tau^j_{k})} =\frac{\Delta^j_k-\Delta^j_{k+1}}{\Delta^j_k+\Delta^j_{k+1}}
	\end{eqnarray} $\forall k=1,2,\ldots,N^j;j = 1,\ldots, J$. We show that $q^j_k$ follows a uniform(-1,1) distribution. 
\end{lemma}

\begin{proof}{}\label{one}
	\normalfont {Rewriting $q^j_k$ as,
		\begin{eqnarray*}
			q^j_k=\frac{\Delta^j_k-\Delta^j_{k+1}}{\Delta^j_k+\Delta^j_{k+1}} = \frac{\Delta^j_k/\Delta^j_{k+1}-1}{\Delta^j_k/\Delta^j_{k+1}+1}\\
		\end{eqnarray*} 
		Let $V=\Delta^j_k/\Delta^j_{k+1}$. The probability density of ratio of two exponential random variables, $\Delta^j_k$ and $\Delta^j_{k+1}$ can be derived as, 
		\begin{eqnarray*}
			f_{V}(v)&=&\int_{0}^{\infty}\rho^j_{k+1}f_{\Delta^j_{k}\Delta^j_{k+1}}\left(v\rho^j_{k+1},\rho^j_{k+1}\right) d\rho^j_{k+1}\\
			&=&\int_{0}^{\infty}\rho^j_{k+1}(\lambda^j)^2 e^{-(\lambda^jv\rho^j_{k+1})}e^{-(\lambda^j\rho^j_{k+1})}d\rho^j_{k+1}\\
			&=&\int_{0}^{\infty}(\lambda^j)^2\rho^j_{k+1} e^{-\lambda^j\rho^j_{k+1}(1+v)}d\rho^j_{k+1}\\
			&=&\frac{1}{(1+v)^2}
		\end{eqnarray*}
		where, $v\in(0,\infty)$. Since, $q^j_k=(V-1)/(V+1) \implies  q^j_k\in(-1,1)$. Using change of variables, we have the density function for $q^j_k$ given as, 
		\begin{eqnarray*}
			f(q^j_k)&=&\frac{1}{(1+v)^2}\bigg|\frac{dV}{dq^j_k}\bigg|\\
			&=&\frac{1}{\left(1+\left( \frac{1+q^j_k}{1-q^j_k}\right) \right)^2}\times\frac{2}{\left(1-q^j_k\right)^2}\\
			&=&\frac{1}{2}
		\end{eqnarray*}
		Since $q^j_k\in(-1,1)$ and $f(q^j_k)=\dfrac{1}{2}; q^j_k\sim$ uniform(-1,1). \QED}
\end{proof}

Using the above results, we now present the proof of Corollary 1. Here, we only provide an outline of the proof. Please refer to \cite{restrepo2014defining} for a detailed proof and calculations involved.
\begin{proof}
\normalfont{Based on \cite{restrepo2014defining}, any three consecutive extrema in level $j+1$, say, $r^{j+1}_1, r^{j+1}_2$ and $r^{j+1}_3$, given the corresponding realizations of $q^j_1, q^j_2$ and $q^j_3$ in level $j$, follows a joint Gaussian distribution with joint conditional density given as:
\begin{multline}\label{Eq4}
f\left( r^{j+1}_1, r^{j+1}_2,r^{j+1}_3\big|q^j_1, q^j_2,q^j_3\right)   \\ =  \frac{1}{\sqrt{8\pi^3\text{Det}\left(\sum\left( q^j_1, q^j_2,q^j_3\right)  \right) }} e^{\left( -\frac{1}{2}\textit{\textbf{r}}^\textit{\textbf{T}} \sum\left( q^j_1, q^j_2,q^j_3\right)\textit{\textbf{r}}\right)}
\end{multline} 
where, $\textit{\textbf{r}} =\left\lbrace r^{j+1}_1, r^{j+1}_2,r^{j+1}_3\right\rbrace $ with covariance matrix expressed as follows: 
\begin{multline}\label{Eq5}
\sum\left( q^j_1, q^j_2,q^j_3\right)  = MM^T
= \\ \left[
\begin{matrix}
6+2 q^2_1 &  4+2q_1-2q_2  & (1+ q_1)( 1-q_3)\\
4+2q_1-2q_2 &  6+ 2q_2  & 4+2q^j_2-2q_3 \\
(1+ q_1)( 1-q_3) &  4+2q_2-2q_3   & 6+2q_3^2
\end{matrix}\right]
\end{multline}
Now, given the exponential distribution of inter-extremal separations $\Delta^j_k$ (see Lemma 1) and the expression for $q^j_k$ as given in Eq.~(\ref{Eq3}), we have:
\begin{multline*}
F_{q_1,q_2}(\omega_1,\omega_2) \\ =  \int_{0}^{\infty}e^{-\Delta_1+\Delta_2+\Delta_3}\left( \int_{\frac{1-\omega_1}{1+\omega_1}}^{\infty} d\Delta_1 \int_{\frac{1-\omega_2}{1+\omega_2}}^{\infty} d\Delta_3 \right) d\Delta_2
\end{multline*}
Similarly, we get the distribution function of $F_{q_2,q_3}(\omega_2,\omega_3)$. With the joint distributions of $q_1,q_2$ and $q_2,q_3$, we can deduce the joint density of $q^j_1, q^j_2,q^j_3$ as:
\begin{multline}\label{Eq6}
f\left( q^j_1, q^j_2,q^j_3\right)  \\  =  \frac{128 \left( 1-q^j_1\right) \left( 1+q^j_2\right)\left( 1-q^j_2\right) \left( 1+q^j_3\right)} {\left(3-q^j_1+q^j_2+q^j_1 q^j_2\right)^3 \left(3-q^j_2+q^j_3+q^j_2 q^j_3\right)^3}
\end{multline}
Using Eqs.~(\ref{Eq4}\&\ref{Eq6}), we have the joint distribution of $r^{j+1}_1,r^{j+1}_2,r^{j+1}_3,q^j_1,q^j_2,q^j_3$. Further assuming the marginal distribution of $r^{j+1}_1, r^{j+1}_2,r^{j+1}_3$ to be normally distributed {(see Proposition 1)}, the covariance matrix can be numerically determined as follows:
\begin{multline*}
\sum=\int_{-1}^{1} dq^j_1 \int_{-1}^{1} dq^j_2 \int_{-1}^{1} dq^j_3 \sum\left( q^j_1, q^j_2,q^j_3\right) p\left( q^j_1, q^j_2,q^j_3\right) \\
 \approx \left( 
\begin{matrix}
0.42 &  0.25  & 0.058\\
0.25 &  0.42  & 0.25\\
0.058 &  0.25 & 0.42 \\
\end{matrix}\right)
\end{multline*}
Thus, the marginal distribution $r^{j+1}_1, r^{j+1}_2,r^{j+1}_3$ can be written as follows:
\begin{equation*}
f\left( r^{j+1}_1, r^{j+1}_2,r^{j+1}_3\right)  \approx \frac{1}{\sqrt{8\pi^3\text{Det}(\sum)}}\exp\left(-\frac{1}{2}\textit{\textbf{r}}^{\textit{T}} \left(\sum\right)^{-1}\textit{\textbf{r}}\right)
\end{equation*}
Once we have the distribution functions, we calculate the probability of retaining an extremum in level $j+1$ as,
\begin{multline}\label{Eq7}
\int_{-\infty}^{\infty}dr^{j+1}_3\int_{r^{j+1}_3}^{\infty}r^{j+1}_2 \int_{-\infty}^{r^{j+1}_2}p\left( r^{j+1}_1, r^{j+1}_2,r^{j+1}_3\right) dr^{j+1}_1 \\ \approx 0.24 
\end{multline}
Upon generalizing Eq.~(\ref{Eq7}), we get the probability of retaining an extremum over $\eta$ subsequent levels as $0.24^{\eta}$.   \QED }
\end{proof}

\section{Proof of Proposition 1}
\label{appendix:B}		

\begin{proposition}
Let $ r^{j+1}_k $ be the extremum in the rotation component $R^{j+1}(t)$ at any decomposition level $j+1$ of $l_k$. Then the distribution function of $r^{j+1}_{k}$ is given by the convolution of three independent random variables $K_1, K_2$, and $\Gamma$ such that:
\begin{equation*}
F_{r^{j+1}_{k}}(r)= \hskip -2em \underset{\begin{subarray}{c}\lbrace(\kappa_1,\kappa_2,\gamma) \in\mathbb{R}^2 \times[0,2\nu\sigma];\\ \kappa_1+\kappa_2+\gamma\leq r\rbrace\end{subarray}}{\int\int\int  } \hskip -2em F_{K}(d\kappa) F_{K}(d{\kappa})F_{\Gamma}(d\gamma)
\end{equation*}
where $K_i,i=1,2$ are identically distributed and can be represented as a sum of independently distributed normal random variables $l^j_k\sim \mathcal{N}(0,\sigma^2)$ and $\Theta$ as: 
\begin{equation*}
F_K(l,\theta) = \hskip -2em \underset{\{(l,\theta)\in\mathbb{R}^2:l+\theta\leq\kappa\}}{\int\int} \hskip -2em G_{\Theta}(d\theta) G_{l^j_k}(dl)
\end{equation*} 
where the distribution function of $\Theta$ is given as:
$$G_{\Theta}(\theta)=\int_{-\infty}^{\theta}\left(\int_{\infty}^{\infty}f_{U,l^j_k}\left(l,\frac{\omega}{l}\frac{1}{|l|}dl \right)  d\omega\right)$$
with $U\sim uniform(0,2)$ and $l^j_k\sim \mathcal{N}(0,\sigma^2)$. $\Gamma$ follows a mixture distribution such that:
$$F_{\Gamma}(\gamma)=\int_{0}^{\gamma}\frac{1}{2\nu\sigma}d\omega 1_{k={k^*}\pm1}+c 1_{k=k^*}$$
where $c>0$.
\end{proposition}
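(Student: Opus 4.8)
The plan is to derive a closed-form algebraic expression for a generic extremum $r^{j+1}_k$ of the rotation component $R^{j+1}$ in terms of level-$j$ data, and then to read off the three independent summands. Starting from Eq.~(\ref{eq:2}) at level $j+1$, together with Property~3 (which lets us work with the extrema alone), I would write $r^{j+1}_k = l^j_k - l^{j+1}_k$; substituting the recursion Eq.~(\ref{eq:4}) for $l^{j+1}_k$ then expresses $r^{j+1}_k$ as a weighted combination of the consecutive surviving baseline extrema $l^j_{k-1},l^j_k,l^j_{k+1}$, the weights being ratios of the inter-extremum spacings $\tau^j_{k-1}<\tau^j_k<\tau^j_{k+1}$. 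Collecting terms, this combination splits as $r^{j+1}_k = K_1 + K_2 + \Gamma$: two ``noise'' blocks $K_1,K_2$, each pairing one baseline value with a term of the shape (spacing ratio)$\,\times\,$(baseline value), and one ``feature'' block $\Gamma$ gathering everything traceable to the Kronecker term $\nu\sigma\delta_{k^*}$ introduced in Eq.~(\ref{eq:6}).

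Next I would identify the laws of the blocks. In the reduced-dynamics model the surviving extrema $l^j_m$ are taken i.i.d.\ $\mathcal{N}(0,\sigma^2)$ and each normalized spacing ratio is taken to be $U/2$ with $U\sim\mathrm{uniform}(0,2)$, independent of the values (this is the role of Property~3 and of the construction preceding Eq.~(\ref{eq:6})). A term of the form (ratio)$\,\times\,$(baseline value) is then a product of a uniform and a normal, whose density follows from the standard product change-of-variables with Jacobian $1/|l|$---this is exactly $G_\Theta$---so each $K_i$, being an independent sum of an $\mathcal{N}(0,\sigma^2)$ variable and such a $\Theta$, has the stated distribution $F_K$, the convolution of $G_{l^j_k}$ and $G_\Theta$. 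For $\Gamma$ I would propagate the spike through Eq.~(\ref{eq:4}): it reaches the extrema indexed $k^*\pm1$ only through the difference term $l^{j-1}_{k+2}-l^{j-1}_{k}$ scaled by a spacing ratio, so after accounting for the peak-to-peak (half-wave) factor of two its net contribution is spread like $\mathrm{uniform}(0,2\nu\sigma)$; at $k=k^*$ itself the spike enters $l^j_{k^*}$ and $l^{j+1}_{k^*}$ with fixed coefficients and produces a deterministic shift, i.e.\ an atom of mass $c>0$; for all remaining $k$ one has $\Gamma\equiv 0$. This is precisely the mixture claimed for $F_\Gamma$. Independence of $K_1,K_2,\Gamma$ then holds because they are built from disjoint subsets of the i.i.d.\ normals $\{l^j_m\}$, mutually independent spacing ratios, and the (deterministic) feature location; hence the distribution of the sum $r^{j+1}_k = K_1+K_2+\Gamma$ is the triple convolution displayed in Eq.~(\ref{eq:7}), with the $\gamma$-integration confined to $[0,2\nu\sigma]$ by the support of $\Gamma$.

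The step I expect to be the main obstacle is the second one: making the i.i.d.-$\mathcal{N}(0,\sigma^2)$ surrogate for the level-$j$ extrema and the independent-uniform surrogate for the spacing ratios precise (these are genuine simplifications, since consecutive baseline values share inputs through Eqs.~(\ref{eq:2})--(\ref{eq:4}) and their variances contract under the averaging in Eq.~(\ref{eq:4})), and, hand in hand with it, the bookkeeping of how a single spike at $k^*$ is redistributed by Eq.~(\ref{eq:4}) onto the neighbors $k^*\pm1$ so as to obtain exactly the support $[0,2\nu\sigma]$ and to isolate the atom $c$ at $k^*$ rather than a smeared contribution. By comparison, the algebra of the first step and the sum-of-independent-variables argument of the last step are routine.
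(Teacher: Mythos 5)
Your route is essentially the paper's: express $r^{j+1}_k$ algebraically in terms of level-$j$ extrema and inter-extremum spacing ratios (you via $r^{j+1}_k = l^j_k - l^{j+1}_k$ and Eq.~(\ref{eq:4}); the paper via the tri-diagonal operator $I+M^j$ of Restrepo \emph{et al.}, which is the same calculation in matrix shorthand), peel off the Kronecker-spike contribution $\Gamma$ with its atom at $k^*$ and uniform smear at $k^*\pm1$, identify $\Theta$ as a uniform$(0,2)\times$normal product so that each $K_i$ is normal plus $\Theta$, and conclude by convolution. One correction to your bookkeeping: the claim that $K_1$, $K_2$, $\Gamma$ are independent because they are ``built from disjoint subsets of the i.i.d.\ normals $\{l^j_m\}$'' is not accurate---as you yourself derive, the two noise blocks both contain the central extremum $l^j_k$ (and in the paper's stated definitions both $K_1$ and $K_2$ even share $l^j_{k-1}$), so disjointness fails; the paper asserts this independence without argument, so the gap is inherited rather than introduced, but it should not be justified by disjointness.
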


\begin{proof}{}\label{one}
\normalfont{Using the compact notation introduced in \cite{restrepo2014defining}, we can represent the extrema vector, ${\textit{\textbf{l}}}^{j+1}=[ l^{j+1}_k]_{k=1,2,\ldots,N}$  in the baseline component at level $j+1$ as: 
\begin{equation}\label{Eq8}
{\textit{\textbf{l}}}^{j+1}=\mathcal{T}\left(\tilde{\textit{\textbf{l}}}^{j+1}\right)
\end{equation} where $\mathcal{T}$ is an extrema extracting operator such that $\tilde{\textit{\textbf{l}}}^{j+1}=\left(I+M^j\right){\textit{\textbf{l}}}^j$ and $M^j$ is the tri-diagonal matrix as follows:
$$ M^j=\left( 
\begin{matrix}
2 &  2  & 0  & \ldots & 0\\
1-q^j_2 &  2  & 1+q^j_2  & \ldots & 0\\
0 &  1-q^j_3  & 2  & \ddots & \vdots\\
0 &  0  & 0  & \ldots & 2\\
\end{matrix}\right);\textit{\textbf{l}}^j=\left( \begin{matrix}
l^j_1 \\ l^j_2 \\ \vdots \\ {l}^j_N  
\end{matrix}\right)
$$
Thus, Eq.~(\ref{Eq8}) can be rewritten as:
\begin{multline}\label{Eq9}
{\textit{\textbf{l}}}^{j+1}=\mathcal{T}\left[(I+M^j){\textit{\textbf{l}}}^j+\nu\sigma(I+M^j)e_{k^*} \right]  \\ = \mathcal{T}\bigg[\left(I+M^j\right){\textit{\textbf{l}}}^j+ \frac{1}{4}\left[ \begin{matrix}
\b{0} \\ ( 1+q^j_{k^*-1})\nu\sigma \\2\nu\sigma \\( 1-q^j_{k^*+1})\nu\sigma\\ \b{0}
\end{matrix} \right]\bigg] 
\end{multline}
where $e_{k^*}=\left[\begin{matrix}
\b{0} & 1_{\{k=k^*\}} & \b{0}\end{matrix}\right]^{T}$. Consequently the ``$\nu\sigma$" containing terms in Eq.~(\ref{Eq9}) are:
{\begin{equation*}
\hskip -1em \left[ \begin{matrix}
\tilde{l}^{j+1}_{k^*-1} \\ 
\tilde{l}^{j+1}_{k^*} \\ 
\tilde{l}^{j+1}_{k^*+1} 
\end{matrix}\right] 
=\frac{1}{4} 
\begin{bmatrix}
q^{j-}_{k^*-1} l^j_{k^*-2} +2l^j_{k^*-1} +q^{j+}_{k^*-1}l^j_{k^*}
{} +q^{j+}_{k^*-1} \nu\sigma \\ \\
q^{j-}_{k^*} l^j_{k^*-1} +2l^j_{k^*} + q^{j+}_{k^*} l^j_{k^*+1} 
{}+2\nu\sigma \\ \\
q^{j-}_{k^*+1} l^j_{k^*} +2l^j_{k^*+1} + q^{j+}_{k^*+1}l^j_{k^*+2} 
{} +q^{j-}_{k^*+1} \nu\sigma\\
\end{bmatrix} 
\end{equation*}}where, $q^{j-}_k=1-q^j_{k}$ and $q^{j+}_k=1+q^j_{k}$. Notice that without the operator $\mathcal{T}$ in Eq.~(\ref{Eq9}), terms on the LHS may not be guaranteed to be extrema (see Property S1 in Appendix F).
The corresponding points in the rotation components are given as follows:
\begin{multline}\label{Eq10}
 \tilde{r}^{j+1}_k=\begin{cases}
\dfrac{1}{4} \left( 2l^j_{k} -q_k^{j-} l^j_{k-1} - q_k^{j+} l^j_{k+1}\right),\\ 
\qquad \qquad \qquad  \qquad k^*-1>k>k^*+1 \\
\dfrac{1}{4}\left( 2l^j_{k} -q_k^{j-} l^j_{k-1} - q_k^{j+} l^j_{k+1}\right) +f_{k},  \\ \qquad \qquad \qquad  \qquad k^*-1\leq k\leq k^*+1
\end{cases}
\end{multline}
Again, $\{\tilde{r}^{j+1}_k\}_{k=1,2,\ldots,N}$ represent only the corresponding values of $r^j_k$ in level $j+1$ and not the extrema points. The term $f_{k}$ in Eq.~(\ref{Eq11}) represents the the effect of scaled Kronecker delta $\nu\sigma\delta_{k^*}$ (at $k^*$ in level $j$) at locations $k^*-1,k^*$ and $k^*+1$ in level $j+1$ such that: 
\begin{equation}\label{Eq11}
f_{k}=\begin{cases}
\nu\sigma/2 & k = {k^*}\\
-q^{j\mp}_{k^*\mp 1}\nu\sigma/4 & k = {k^*\mp 1} \\
0& \text{o.w.}
\end{cases}
\end{equation}
{\color{black}From Lemma 2, we notice that $q^j_{k^*}\sim$~uniform(-1,1). Therefore, $1+q^j_{k^*}$ and $1-q^j_{k^*}$ follows uniform$(0,2)$ distribution.} Let us define, $\Theta:=\left( 1\pm q^j_{k}\right) l^j_{k+1}$ with distribution function $G_{\Theta}$ where $l^j_{k^*+1}\sim N(0,\sigma^2)$. Therefore, $G_{\Theta}$ is the product distribution given as follows:
\begin{equation*}
G_{\Theta}(\theta)=\int_{-\infty}^{\theta}\left(\int_{\infty}^{\infty}f_{U,l^j_k}\left(l,\frac{\omega}{l}\frac{1}{|l|}dl \right)  d\omega\right)
\end{equation*}
Next, we define $K_1:=l^j_k - ( 1- q^j_{k+1}) l^j_{k-1}$, which is the sum of normal random variable, $l^j_k\sim G_{l^j_k}(l)$ and $\Theta\sim G_\Theta(\theta)$. Similarly, we define, $K_2:=l^j_k-(1+ q^j_{k-1}) l^j_{k-1}$ such that: 
\begin{equation*}
F_K(l,\theta) =\int\int_ {\{(l,\theta)\in\mathbb{R}^2:l+\theta\leq\kappa\}} G_{\Theta}(d\theta) G_{l^j_k}(dl)
\end{equation*} 

Now, from Eq.~(\ref{Eq10}), we have $r^{j+1}_k\left(=\mathcal{T}[\tilde{r}^{j+1}_k] \right) $ as the sum of $K_1, K_2$ and $\Gamma \left( =f_{k}\right) $. Also, from the definition of $\Gamma$ in Eq.~(\ref{Eq11}), we have $K_1,K_2$ and $\Gamma$ are independently distributed where $\Gamma$ is a mixture distribution given as 
\begin{equation*}
F_{\Gamma}(\gamma)=\int_{0}^{\gamma}\frac{1}{4\nu\sigma}d\omega 1_{{t=\tau^j_{k^*\pm1}}}+\frac{\nu\sigma}{2} 1_{{t=\tau^j_{k^*}}}
\end{equation*}
Combining the above results, we have 

\hfill\begin{equation*}
\displaystyle F_{r^{j+1}_{k}}(r)= \hskip -2em \underset{\begin{subarray}{c}\lbrace(\kappa_1,\kappa_2,\gamma) \in\mathbb{R}^2 \times[0,2\nu\sigma];\\ \kappa_1+\kappa_2+\gamma\leq r\rbrace\end{subarray}}{\int\int\int  } \hskip -2em F_{K}(d\kappa) F_{K}(d{\kappa})F_{\Gamma}(d\gamma) 
\end{equation*}  } \vskip -4.7em \QED \vskip 3.7em 
\end{proof}

\section{Proof of Corollary 2} \label{appendix:C}
\begin{lemma}
The exact variance of the product of two independent random variables $X$ and $Y$ is given as, $$\text{Var}(XY) = (\mathbb{E}(X))^2  \text{Var}(Y) + (\mathbb{E}(Y))^2  \text{Var}(X) + \text{Var}(X)\text{Var}(Y)$$
\end{lemma}
\begin{proof}
See [2] for proof. 
\end{proof}
\begin{corollary}
Using Gaussian approximations to the distribution function of $r^{j+1}_{k}$, $P_e(\nu)$ can be deduced in closed form as:  
\begin{equation*}
\hat{P}_e(\nu)=\left[ 1-P\left(\mathcal{Z}\leq-\frac{\nu}{\sqrt{2}} \right)\right]^2  
\end{equation*}
where $\mathcal{Z}\sim \mathcal{N}(0,1)$.
\end{corollary}
\begin{proof}{}\label{one}

\normalfont{Here, we are interested in the Gaussian approximation to the distribution of $ ({r}^{j+1}_k- {r}^{j+1}_{k+1}) $ for $j>1$ where $r^{j+1}_{k}$ is represented as follows:
\begin{multline}\label{Eq12}
{r}^{j+1}_k=	\begin{cases}
\dfrac{1}{4}\left(2l^j_k-q^{j-}_kl^j_{k-1}-q^{j+}_kl^j_{k+1}\right),  \\ \qquad \qquad \qquad  \qquad k^*-1>k>k^*+1\\
\dfrac{1}{4}\left(2l^j_k-q^{j-}_kl^j_{k-1}-q^{j+}_kl^j_{k+1}\right)+{f}_{k},  \\ \qquad \qquad \qquad  \qquad k^*-1\leq k\leq k^*+1
\end{cases}
\end{multline}
where $q^{j\pm}_k\sim\text{uniform}(0,2)$ and $l^j_k$ is defined as: 
$$l^j_k = \frac{1}{4}\left((l^{j-1}_{k-1}+l^{j-1}_{k+1}) + 2l^{j-1}_k+q^{j-1}_k(l^{j-1}_{k+1}-l^{j-1}_{k-1})\right)$$

For the given signal $x_k(\equiv l^0_k)$, we have $\mathbb{E}[l^{0}_k]=0$ and Var$(l^{0}_k)=\sigma^2$. For simplicity, we assume $\sigma =1$. As the extrema points evolve via a random decimation process, we can assume that the inter-extremal separations and the extrema values are independent of each other. From \cite{restrepo2014defining}, we note that the density function of $q^{j-1}_k(l^{j-1}_{k-1}-l^{j-1}_{k+1})$ is approximately normal with ${\mathbb{E}}[q^{j-1}_k(l^{j-1}_{k-1}-l^{j-1}_{k+1})]=0$ and Var$(q^{j-1}_k(l^{j-1}_{k-1}-l^{j-1}_{k+1}))=\text{Var}(q^{j-1}_k)\text{Var}(l^{j-1}_{k-1}-l^{j-1}_{k+1})=2/3$ (see Lemma~3), such that we have: 

$$l^j_k\sim \mathcal{N}\left(0,\frac{5}{12}\right)$$

Next, we determine the distribution of ${r}^{j+1}_k$. For $k=k^*$, the expected value of $r^{j+1}_k$ is $\mathbb{E}[{r}^{j+1}_k]= \nu/2$ (see Eq.~\eqref{Eq11}) and the variance term using Lemma 3 is given as follows: 
\begin{align*}
\text{Var}&({r}^{j+1}_k) = \frac{1}{16}\text{Var}\left(2l^j_k-q^{j-}_kl^j_{k-1}-q^{j+}_kl^j_{k+1}\right)\\
&= \frac{1}{16}\bigg( \frac{20}{3}\sigma^2 - 4\mathbb{E}[q^{j-}_k]\text{Cov}(l^j_k,l^j_{k-1}) \\ & + 2 \mathbb{E}[q^{j-}_k]^2\text{Cov}(l^j_{k-1}, l^j_{k+1}) - 4\mathbb[q^{j+}_k]\text{Cov}(l^j_k,l^j_{k+1})\bigg)\\
&= \frac{1}{16}\left(\frac{20}{3}\sigma^2 -2 + \frac{1}{12}\right) = \frac{1}{16}\left(\frac{25}{9}-\frac{23}{12}\right)\approx \frac{1}{19}
\end{align*}
In summary, we have:
\begin{equation*}
{r}^{j+1}_k\sim	\begin{cases}
\mathcal{N}\left(\dfrac{\nu}{2},\dfrac{1}{19}\right) & k=k^*\\
Y+Z & k = \{k^*-1,k^*+1\}\\
\mathcal{N}\left(0,\dfrac{1}{19}\right) & k^*-1>k>k^*+1\\
\end{cases}
\end{equation*}where $Y\sim \mathcal{N}(0,1/19)$ and $Z\sim\text{uniform}(-\nu/4,0)$. Let ${r}^{j+1}_{k^*}= X$, such that the distribution of $r^{j+1}_{k^*}- {r}^{j+1}_{k^*+1}$ is equivalent to $X-(Y+Z)$.

We now determine the distribution of $X-Y$. Here, $\mathbb{E}[X-Y]=\nu/4$ and \text{Var}(X-Y) given as:
\begin{align*}
& \text{Var}(X)+\text{Var}(Y)-2\text{Cov}(X,Y)\\
&=2\times\frac{1}{19} + \frac{1}{16}\mathbb{E}[4l^j_kl^j_{k-1}-2l^j_kl^j_{k-2}-2l^j_kl^j_{k} \\&-  2l^j_{k-1}l^j_{k-1} + l^j_{k-1}l^j_{k-2}  + l^j_{k-1}l^j_{k} - 2l^j_{k-1}l^j_{k+1} + l^j_{k+1}l^j_{k}] \\
&=2\times\frac{1}{19} + \frac{1}{16}\left(\frac{7}{4}-\frac{4}{16}-\frac{20}{12}\right) = 2\times\frac{1}{19}-\frac{1}{96}\approx0.01
\end{align*}

\noindent Now, the distribution of $X-Y+Z$ can be expressed as the sum of: $$X-Y\sim\mathcal{N}\left(\frac{\nu}{4},\frac{1}{8}\right)~ \& ~Z\sim\text{uniform}\left(0,\frac{\nu}{4}\right)$$ Let $X-Y=X'$. We can now determine the probability $P(X'+Z>0)$ as the following conditional probability:
\begin{align}\label{Eq14}
P&(X'+Z>0)  =P(X'+Z>0|X'>0)P(X'>0) \nonumber\\& + P(X'+Z>0|X'<0)P(X'<0)\nonumber\\
& =P(X'>0)+P(X'+Z>0|X'<0)P(X'<0)
\end{align} 
Using the convolution of uniform and normal random variables, we re-write Eq.~(\ref{Eq14}) as: 
\begin{multline}\label{Eq15}
	P(X'+Z>0|X'<0)P(X'<0)  \\= P(X'<0)-P(X'+Z<0)
\end{multline}
We estimate the probability, $P(X'+Z<0) -P(X'<0)$ as shown in Eq.~\eqref{eq:floatingeq} on top of page 5.
\addtocounter{equation}{1}%
\setcounter{storeeqcounter}%
{\value{equation}}%

\begin{figure*}[!t]
	\setcounter{tempeqcounter}{\value{equation}} 
	\begin{IEEEeqnarray}{rCl}
		\setcounter{equation}{\value{storeeqcounter}} 
		P(X'<0)- P(X'+Z<0) = \frac{\nu}{2\sqrt{\pi}}\int_{-\infty}^{0} \left(\exp\left(-\frac{\nu^2}{4}(x'-1)^2\right)- \int_{0}^{1}\exp\left(-\frac{\nu^2}{4}(z-x'+1)^2\right)dz\right)dx' \nonumber \\	
		=\frac{\nu}{2\sqrt{\pi}}\int_{-\infty}^{0}  \exp\left(-\frac{\nu^2}{4}(x'-1)^2\right)- {\frac{\sqrt{\pi}}{\nu}\left(\mathop{\mathrm{erf}}\nolimits\!\left(\frac{\nu \left(x' - 1\right)}{2}\right) - \mathop{\mathrm{erf}}\nolimits\!\left(\frac{\nu\left(x' - 2\right)}{2}\right)\right)}dx'  
		\label{eq:floatingeq}
	\end{IEEEeqnarray}
	\setcounter{equation}{\value{tempeqcounter}} 
	\hrulefill
\end{figure*}

For the values of $1<\nu<4$, we note that the function $\int_{0}^{1}\exp\left(-\frac{\nu^2}{4}(z-x'+1)^2\right)dz$ is approximately normal and centered at $1.5$ with variance approximately equal to Var$(X')$. Since the first term is centered at 1, therefore the difference is non-zero. However, the magnitude of difference supported on negative $x'$ axis is $\leq 0.02$ and decays exponentially fast such that the magnitude of the difference is approximately 0 for $z\leq -0.5$. Hence the integral of the difference remains sufficiently close to zero and thus can be ignored. Therefore, $$P(X'+Z>0|X'<0)P(X'<0)\approx0$$  Hence, we can approximate the probability, $P_e(\nu)$ as 
\begin{equation*}
\hat{P}_e(\nu)=P(X'>0) = \left(1-P\left(\mathcal{Z}\leq-\frac{\nu}{\sqrt{2}}\right)\right)^2
\end{equation*} \vskip -2.5em \QED \vskip 2.5em}
\end{proof}

\section{Proof of Corollary 3 }
\label{appendix:D}
\begin{corollary}
Using the Gaussian approximation to the distribution function of $r^{j+1}_k$ (see Appendix C), $P_s(\nu)$ can be approximated as, 
\begin{equation*}
\hat{P_s}(\nu)=1-P\left( \mathcal{Z} \leq 3-\nu \sqrt{\frac{19}{16}}\right)
\end{equation*}
\end{corollary} 

\begin{proof}{}\label{one}
\normalfont	{Since, we have $\hat{r}^{j+1}_{k^*}\sim \mathcal{N}\left(\frac{\nu}{4},\frac{1}{19} \right)$, therefore, $\sigma^{j+1}=\sqrt{{1}/{19}}\sigma^j $ and we have, 
\begin{eqnarray*}
\hat{P_s}(\nu) &=& P(\hat{r}^{j+1}_{k^*}>3\sigma^{j+1}|\nu\geq 3) \\
&= & 1-P\left(\mathcal{Z}\leq\frac{\left(3\sqrt{\frac{1}{19}}-\frac{\nu}{4}\right)}{\sqrt{\frac{1}{19}}}\right)\\
&=& 1-P\left(\mathcal{Z}\leq 3-\nu \sqrt{\frac{19}{16}}\right)   
\end{eqnarray*} \vskip -2.5em \QED \vskip 2.5em}
\end{proof}

\section{Proof of Corollary 4}		
\label{appendix:E}

\begin{corollary} The probability $P(\sigma_0, \sigma_a, n_0, n_a)$ that the rotation component $\mathbf{r}^j$ captures the variance shift is given as:
$$P({\sigma_0},{\sigma_a}, n_0, n_a) = P\left(\frac{\textit{S}(r^j_{k\leq k^*})}{\textit{S}(r^j_{k> k^*})} < 1\right)= \mathcal{B}_{\varkappa}\left(\frac{n_0}{2},\frac{n_a}{2}\right)$$ where $S(.)$ denotes the sample variance, $\mathcal{B}$ is the regularized incomplete beta function evaluated at $\varkappa = {n_0\sigma_a^2}/({n_0\sigma_a^2 + n_a\sigma_0^2})$ with $n_0+1$ and $n_a+1$ being the length of time series in the in-control and out of control region, respectively.  
\end{corollary}
\begin{proof} \normalfont Let $S_0 = {\text{{Var}}}(r^j_{k\leq k^*})$ and $S_a ={\text{{Var}}}(r^j_{k> k^*})$. Using, the distribution of sample variance, we have 
$$S_0\sim \frac{\sigma_0^2}{n_0} \chi^2(n_0)$$
$$S_a\sim \frac{\sigma_a^2}{n_a} \chi^2(n_a)$$ Ratio of $\chi^2$ distributed radom variables with degrees of freedom $n_0$ and $n_a$ follows a $F$- distribution with degrees of freedom $n_0$ and $n_a$ given as, 
$$ S_{0,a}\equiv\frac{S_0}{S_a}\sim  \frac{\sigma^2_0}{\sigma^2_a}F\left(\frac{n_0}{2},\frac{n_a}{2}\right)$$ Therefore, 

\begin{align*}
P(\sigma_0, \sigma_a, n_0, n_a) & = P\left(\frac{S_0}{S_a} <1\right) \equiv P\left(S_{0,a} <  \frac{\sigma^2_a}{\sigma^2_0}\right) \\&= \mathcal{B}\left({\frac{n_0\sigma_a^2}{n_0\sigma_a^2 + n_a\sigma_0^2}};{n_0},{n_a}\right)
\end{align*} \vskip -2.5em \QED \vskip 2.5em
\end{proof}
\section{Proof of Proposition 2}
\label{appendix:F}
\begin{proposition}
	\textit{ The ratio of expected value of phase synchronization when there is a singularity at $k=k^*$ to the case when there is no singularity at $k=k^*$, i.e.,  
		\begin{eqnarray}\label{eq:15}
		\xi=\frac{\mathbb{E}\left[\Phi^{j,j+1}_{k} \big|r^{j}_{k^*}\geq 3\sigma^j\right] }{\mathbb{E}\left[ \Phi^{j,j+1}_{k}\big|r^j_{k^*}<3\sigma^j\right] }
		\end{eqnarray}}
	is lower bounded as: $$\xi \geq P_s(\nu|\nu>3\sigma^j)\lim_{h\to 0}({P_e(h)})^{-1}\approx4P_s(\nu|\nu>3\sigma^j)$$
\end{proposition}

Before we present the proof, we first present the following property of halfwaves.
\begin{customppty}{S1}\label{five}
	Each Extremum, $\{r^j_k\}_{k=1,2,\ldots,N}$ in any rotation component $R^j(t)$ can evolve in $R^{j+1}(t)$ via one of the three transitions: (a) no change in extremum orientation (call this as trans-critical) as shown in Fig.~(\ref{fig:prop3}(a)), vanishing of the extremum (saddle-node) as shown in Fig.~(\ref{fig:prop3}(b)) or flipping of the extremum (pitch-fork) as shown in Fig.~(\ref{fig:prop3}(c)). The probabilities of each of these transitions calculated using the Gaussian approximation to the distribution function of $r^{j+1}_k$ are 0.25, 0.5 and 0.25, respectively (also see \cite{restrepo2014defining}). 
\end{customppty}
Here, the probability of trans-critical transition is equivalent to the probability of retaining an extremum in level $j$ as an extremum in level $j+1$ and is equal to 0.25 (See Corollary 1). To determine $P(\eta_1)$, we look at the transitions as shown in Figs.~\ref{fig:prop3}($f^{1},...,f^6$).  $P(\eta_1)$ can be calculated as the sum of $ P(r^{j+1}_{k^*}<r^{j+1}_{k^*-1})P(r^{j+1}_{k^*}>r^{j+1}_{k^*+1})$ (for events $f^2, f^3~\&~f^6$) and $ P(r^{j+1}_{k^*}>r^{j+1}_{k^*-1})P(r^{j+1}_{k^*}<r^{j+1}_{k^*+1})$  (for events $f^1, f^4~\&~f^5$) and is equal to 0.5. Finally, the probability of pitchfork transition, $P(\eta_2)$ can be determined by calculating the probability of event $r^{j+1}_{k^*}-r^{j+1}_{k^*-1}<0$ and $r^{j+1}_{k^*}-r^{j+1}_{k^*+1}<0$ simultaneously. From the Gaussian approximation to the distribution function of $r^{j+1}_k$, we have $P(\eta_2) = P(r^{j+1}_{k^*}<r^{j+1}_{k^*-1})P(r^{j+1}_{k^*}<r^{j+1}_{k^*+1}) =  0.25 $. 
\begin{figure*}[t]
	\includegraphics[width = 0.93\textwidth]{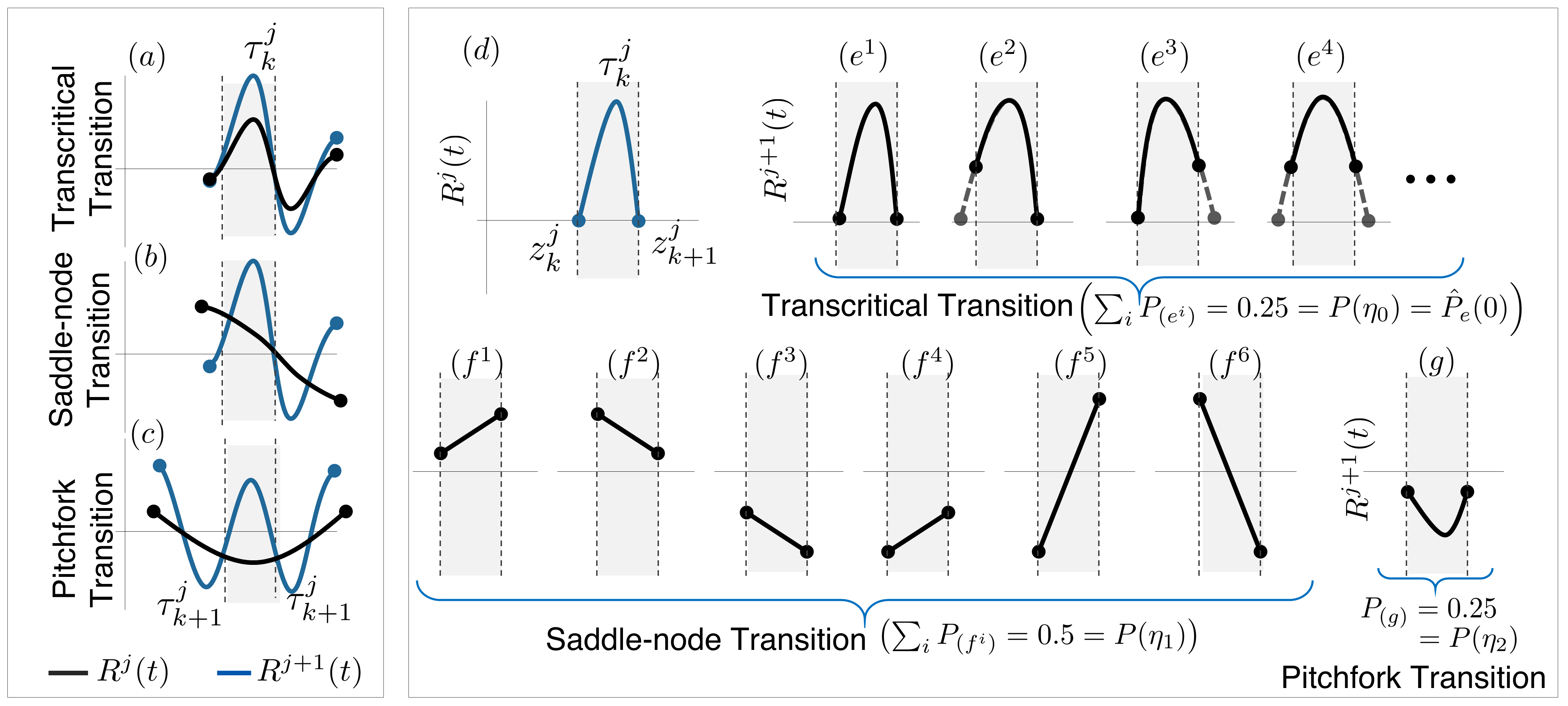}
	\centering
	\caption{(a-c) shows the evolution of halfwave from level $j$ to $j+1$ via transcritical, saddle-node, and pitchfork transition, respectively. (d) A representative halfwave, $\hslash^j_k(t)$ in level $j$ with characteristic extremum at $\tau^j_{k}$. (e$^{1}$, e$^2$, ...) shows a few cases of evolution of $\hslash^j_k(t)$ via transcritical transition, (f) shows the orientation of $\hslash^j_k(t)$ in level $j+1$ resulting due to saddle-node transition and (g) shows the evolution of $\hslash^j_k(t)$ via pitchfork transition. The probability of each of these cases are determined using the Gaussian approximation to the distribution function of $\hat{r}^{j+1}_k$.}	
	\label{fig:prop3}
\end{figure*} 

We now look at the proof of Proposition 2.

\begin{proof}

\normalfont In order to determine the expected phase synchronization $\Phi^{j,j+1}_k$ between halfwave at levels $j$ and $j+1$, we first identify the fraction of halfwave $\hslash^{j+1}_k(t)$ enclosed within the support $\text{supp}(\hslash^{j}_{k}(t))$. This is represented by the shaded region in Fig.~\ref{fig:prop3}(d). Assuming that the extremum $\tau^j_{k}$ at level $j$ is retained in the next level, then its neighboring extrema may evolve in the next level according to either extrema preserving or extrema vanishing transition (see Property 4). Here, $\text{supp}(\hslash^{j+1}_{k}(t))=(z^{j+1}_{k},z^{j+1}_{k+1}]$ where $z^{j+1}_{k}$ and $z^{j+1}_{k+1}$ are variables and depend on the location of $r^{j+1}_{k{\pm1}}$. Under the given assumptions, the possible cases for the evolution of $\hslash^{j}_k(t)$ in level $j+1$ are as shown in Fig.~\ref{fig:prop3}(e$^{1}$, e$^2$, ...). These are (e$^{1}$) where all the extrema, i.e., $\tau^j_{k}$ and $\tau^j_{k\pm 1}$ are retained; (e$^{2}$) where only the minimum at $\tau^j_{k-1}$ vanishes, hence shifting $z^{j+1}_{k}$ towards left; (e$^{3}$) where minimum at $\tau^j_{k+1}$ vanishes causing $z^{j+1}_{k+1}$ to shift towards right and (e$^{4}$) where the minima on either side of $\tau^j_{k}$ vanishes, support of $\hslash^j_{k}$ on both the direction increases, and so on.

To identify the fraction of halfwave $\hslash^{j+1}_k(t)$ enclosed within the support $\text{supp}(\hslash^{j}_{k})$, consider the halfwave $\hslash^j_k(t)$ which is characterized by the points $\{R^j(z^j_{k}), r^j_{k}, R^j(z^j_{k+1})\}\equiv\{0, r^j_{k}, 0\}$. Similarly, the points $\{R^{j+1}(z^j_{k}), r^{j+1}_{k}, R^{j+1}(z^j_{k+1})\}$ define the corresponding halfwave in the next level, i.e., $\hslash^{j+1}_{k}(t)$ enclosed within $(z^j_{k},z^j_{k+1}]$. Here, $R^{j+1}(z^j_{k})$ and $R^{j+1}(z^j_{k+1})$ are the amplitudes of $R^{j}(z^j_k)$ and $R^{j}(z^j_{k+1})$, i.e., the amplitudes of zero crossings $z^j_k$ and $z^{j }_{k+1}$ in level $j+1$. We use a linear interpolation to determine the values of $R^{j+1}(z^j_{k})$ and $R^{j+1}(z^j_{k+1})$ as follows: $$R^{j+1}(z^j_{k})=r^{j+1}_{k-1}+\frac{R^{j+1}(\tau^j_{k})-R^{j+1}(\tau^j_{k-1})}{\tau^j_{k}-\tau^j_{k-1}}\big(z^j_{k}-\tau^j_{k-1}\big)$$Since phase is invariant of translation, $\hslash^{j+1}_{k}(t)$ can be translated and equivalently represented by the points $\{0, r^{j+1}_{k}-R^{j+1}(z^j_{k}), R^{j+1}(z^j_{k+1})-R^{j+1}(z^j_{k})\}$. 
To determine the expected phase synchronization, it would suffice to determine the inner product between the halfwaves $\hslash^{j}_{k}(t)$ and $\hslash^{j+1}_{k}(t)$ within the support of $\hslash^{j}_{k}(t)$. 

Considering a singularity at $k^*$, the expected level of phase synchronization can be calculated as follows:
\begin{eqnarray}\label{Eq17}
\mathbb{E}\left[\Phi^{j,j+1}_{k^*}|r^j_{k^*} \geq  3\sigma^j\right]=\eta_0 P_s(\nu)
\end{eqnarray}
\noindent where $\eta_0$ is the value of phase synchronization when extremum at $\tau^j_{k^*}$ is preserved and $ P_s(\nu)$ is the probability that the singularity at $k^*$ is retained in level $j+1$. For the case when $\tau^j_{k^*}$ is not a singularity, the probability of observing the transitions as shown in Fig.~\ref{fig:prop3}(e$^{1}$, e$^2$, ...) depends on the probability of retaining the extremum at $\tau^j_{k^*}$ as an extremum in level $j+1$. For all other cases when extremum is not preserved is shown in Fig.~\ref{fig:prop3}(f-g). So, we can write the expected phase synchronization for this case as:
\begin{multline}\label{Eq18}
\mathbb{E}\left[\Phi^{j,j+1}_{k^*}|r^j_{k^*}<3\sigma^j \right]  = \eta_0 P_e(\nu|\nu=0) P(\mathcal{Z}<3\sigma^j)   \\+\eta_1P(\eta_1)+\eta_2P(\eta_2)
\end{multline}   
Here, $\eta_1P(\eta_1)$ and $\eta_2P(\eta_2)$ are the expected value of phase synchronization when the extremum at $\tau^j_{k^*}$ evolves via a saddle-node  (extremum at $\tau^j_{k^*}$ is not retained, Fig.~\ref{fig:prop3}(f)) and pitchfork transition (extremum at $\tau^j_{k^*}$ is flipped, Fig.~\ref{fig:prop3}(g)), respectively. Under pitchfork transition, the halfwave at level $j+1$ is negatively oriented with respect to $\hslash^j_{k^*}(t)$, causing a phase lag of $\pi$. Using the equation for phase synchronization (Eq.~(14) in the main document), we get $\eta_2\approx-1$. 

To determine $\eta_1P(\eta_1)$, we refer to possible orientations of the halfwave  $\hslash^j_{k^*}(t)$ in level $j+1$ resulting due to pitchfork transition, i.e.,  when the extremum at $\tau^j_{k^*}$ is not retained. This is shown in Fig.~\ref{fig:prop3}(f$^{1}$, f$^2$, ...) along with individual probabilities calculated using the distribution function of $\hat{r}^{j+1}_k$. Here, we note that for each orientation, there is an equal likelihood of finding an orthogonal orientation, e.g., Fig.~\ref{fig:prop3}(f$^{1}$) \& \ref{fig:prop3}(f$^{3}$). Using the property of inner product, i.e., for some function $\bar{v}_1$ and $\bar{v}_2$, we have $\left\langle \bar{v}_1,-\bar{v}_2\right\rangle= -\left\langle \bar{v}_1,\bar{v}_2\right\rangle$. This implies, $\eta_1P(\eta_1)= 0$. Therefore, the ratio of expected level of phase synchronization between the halfwaves, $\hslash^j_{k^*}(t)$ at level $j$ and $j+1$ when $\tau^j_{k^*}$ is a singularity (Eq.~(19)) to when it is not a singularity (Eq.~(20)) is given as: 
\begin{eqnarray*}
\xi = \frac{P_s(\nu|\nu\geq 3\sigma^j)\eta_0}{P_e(\nu|\nu=0)	P(\mathcal{Z}<3\sigma^j) \eta_0 + \eta_1P(\eta_1)+\eta_2P(\eta_2)}
\end{eqnarray*} 
\noindent Since $\eta_1P(\eta_1)=0$ and $\eta_2P(\eta_2)<0$ we have, $\eta_1P(\eta_1)+ \eta_2P(\eta_2)<0$ and hence, $\xi \geq  4 P_s(\nu|\nu>3\sigma^j)  $. \QED 

\end{proof}

\section{multiple change point detection}
\label{appendix:G}

In this section, we first present the pseudo code for performing the component selection via mutual agreement concept. This is presented in Algorithm 1.
 
\begin{algorithm}
	\SetKwInOut{Input}{Input}
	\SetKwInOut{Output}{Output}
	
	\underline{function mutualAgreement}$(G=(V,E))$\;
	\Input{$V=\{x(t), R^j(t)\},j = 1,\ldots,J-1;E=[e_{ij}]$}   		
	\Output{Clusters $\mathcal{G}_1, \mathcal{G}_2,\ldots$}
	Estimate $\vartheta_p$ such that $P(E>\vartheta_p)\approx 0.1$\;
	Update $E$ as $E[E<\vartheta_p ]\leftarrow 0$ {\color{darkgray}\%\textit{Eq.~(16) main text}}\;
	$k\leftarrow 1$\;
	\For{every node $R^j(t)\in V \backslash x(t)$}{ 
		$\mathcal{G}_k\leftarrow$DFS($R^j(t)$) {\color{darkgray}\%\textit{use depth-first search to identify components connected to $R^j(t)$}}\;
		$V\leftarrow \{V\backslash \mathcal{G}_k,x(t)$\}\;
		$k\leftarrow k+1$\;
	}
	\caption{Mutual agreement}
\end{algorithm}

\begin{algorithm}
	\SetKwInOut{Input}{Input}
	\SetKwInOut{Output}{Output}
	\underline{function DFS}($R^j(t)$)\;
	$\mathcal{G}_k \leftarrow \{\mathcal{G}_k, R^j(t)\}$ {\color{darkgray}\%\textit{mark node $R^j(t)$ visited}}\;
	\For{every unvisited node $R^i(t)$ for which $e_{ij}>0$}{
		{$\mathcal{G}_k\leftarrow$DFS$(R^i(t))$}}
	\If{$x(t)$ is adjacent to $R^j(t)$}{$\mathcal{G}_k \leftarrow \{\mathcal{G}_k, x(t)\}$ {\color{darkgray}\%\textit{mark node $x(t)$ visited}}}
	\caption{Depth-first search (DFS)}
\end{algorithm}

\begin{figure}[t]
	\includegraphics[width = 0.4\textwidth]{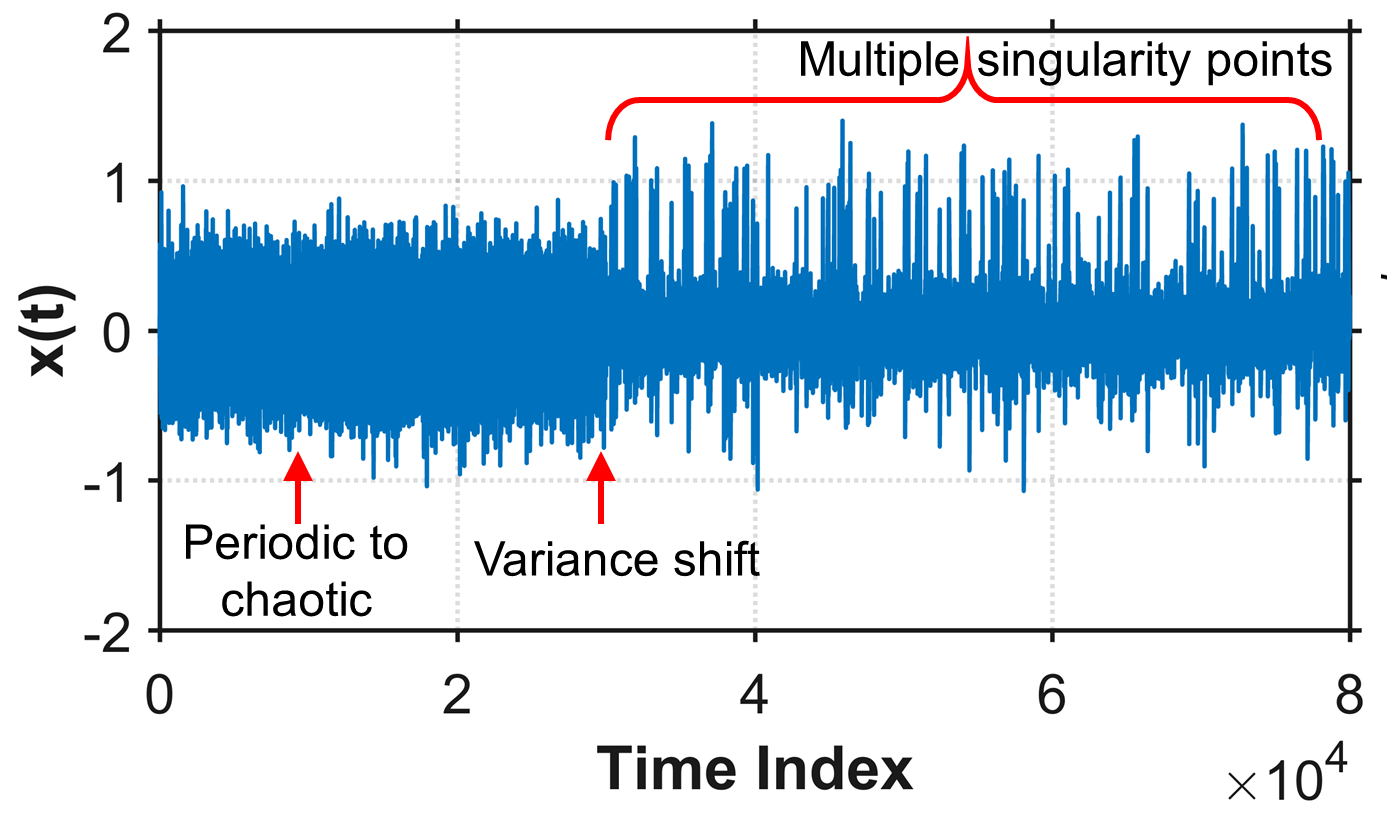}
	\centering
	\caption{Time series consisting of multiple change points including a dynamic pattern change at $t=10000$ t.u. followed by variance shift at $t=30000$ t.u. Subsequent to this point, the multiple singularity points are interspersed.}
	\label{figure:2}	
\end{figure}
\noindent \textbf{Multiple change point detection: }To demonstrate the performance of our method when multiple types of changes are interspersed, we create an artificial data by combining the Logistic map (Section 4.3 in the main text) and the neocortical signal (Section 4.6 in the main text) as shown in Eq.~(\ref{Eq20})
\begin{equation}
\begin{aligned}
z(t) &= y(t) + N(0,\sigma^2) \\     y(t+1) &=\mu y(t)(1-y(t));\mu>0,t\in\mathbb{Z}^+ 
\end{aligned}
\label{Eq19}
\end{equation}

\begin{equation}
x(t) = \begin{cases}
z(t), \mu = \begin{cases}
 3.4  & t\leq 10000 \\ 3.7 & 10000<t<30000
\end{cases} \\ 
z(t), \text{Var}(z(t)) = \begin{cases}
 0.05  & t\leq 30000 \\ 0.025 & t>30000
\end{cases} 
\end{cases}
\label{Eq20}
\end{equation}

As stated, we first generated a 30000 data points long time-series, $x(t)$ from the logistic map for $t<30000$ t.u. as shown in Eq.~(\ref{Eq19}). Here, the first change point is introduced at $t=10000$ t.u. in the dynamic behavior of logistic map from periodic to chaotic by changing the value of $\mu$ from 3.4 to 3.7. The signal to noise ratio is fixed to 10 dB. For the remaining time, i.e, $30000<t<80000$ t.u., we consider the neocortical signal (Section 4.6, main manuscript) consisting of multiple consecutive singularities. This is shown in Fig.~\ref{figure:2}. Here, $t=30000$ t.u is the second change point, followed by multiple change points, each connoting a short-lived change (i.e., a neocortical spike).

\begin{figure}[h]
	\includegraphics[width = 0.5\textwidth]{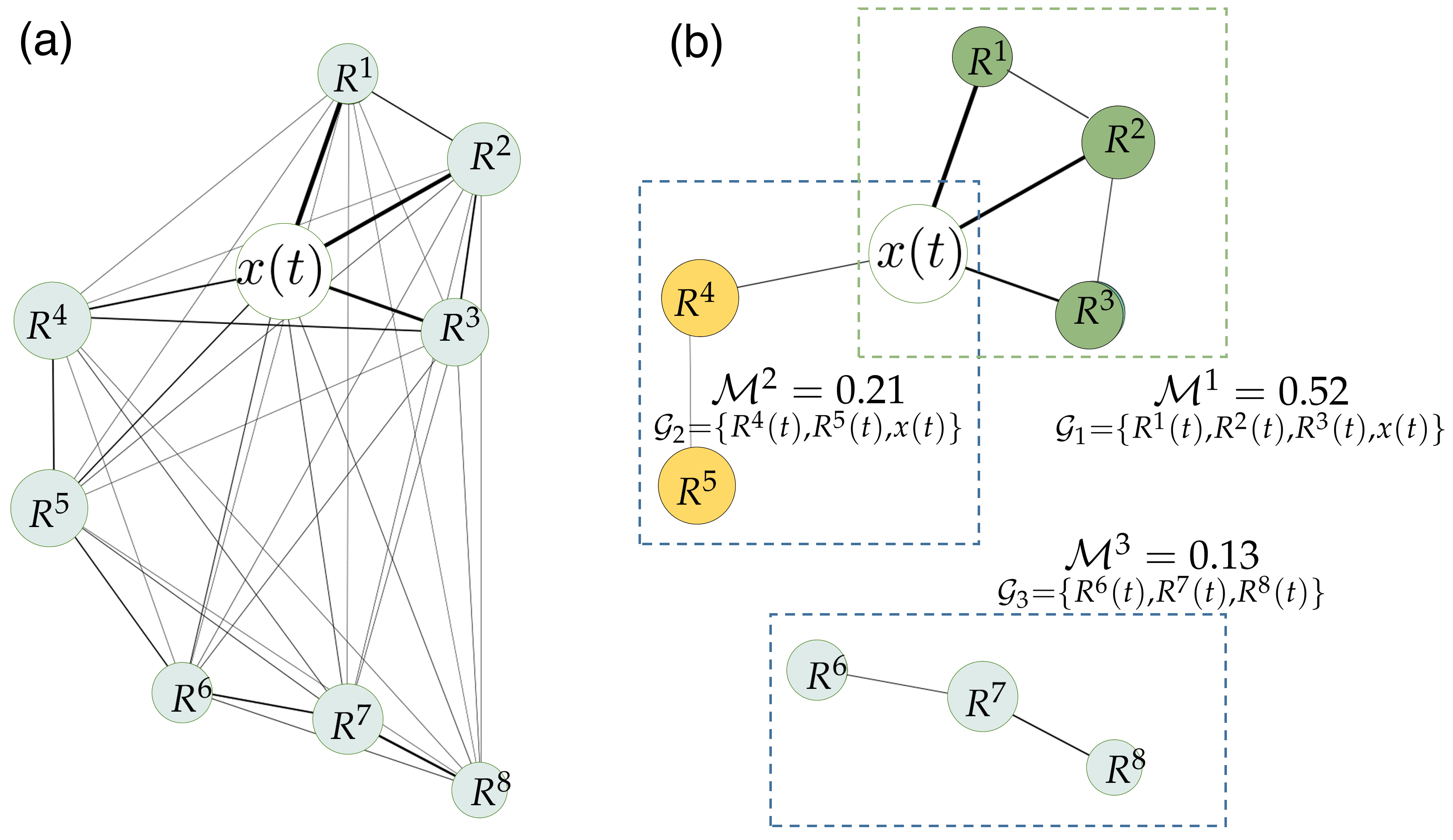}
	\centering
	\caption{(a) Graph representation showing the association between the elements of $G$. (b) Clusters of rotation components obtained after removing the spurious connections as determined by the Pareto threshold $\vartheta_p$.}
	\label{figure:3}	
\end{figure}

\begin{figure}[h]
	\includegraphics[width = 0.40\textwidth]{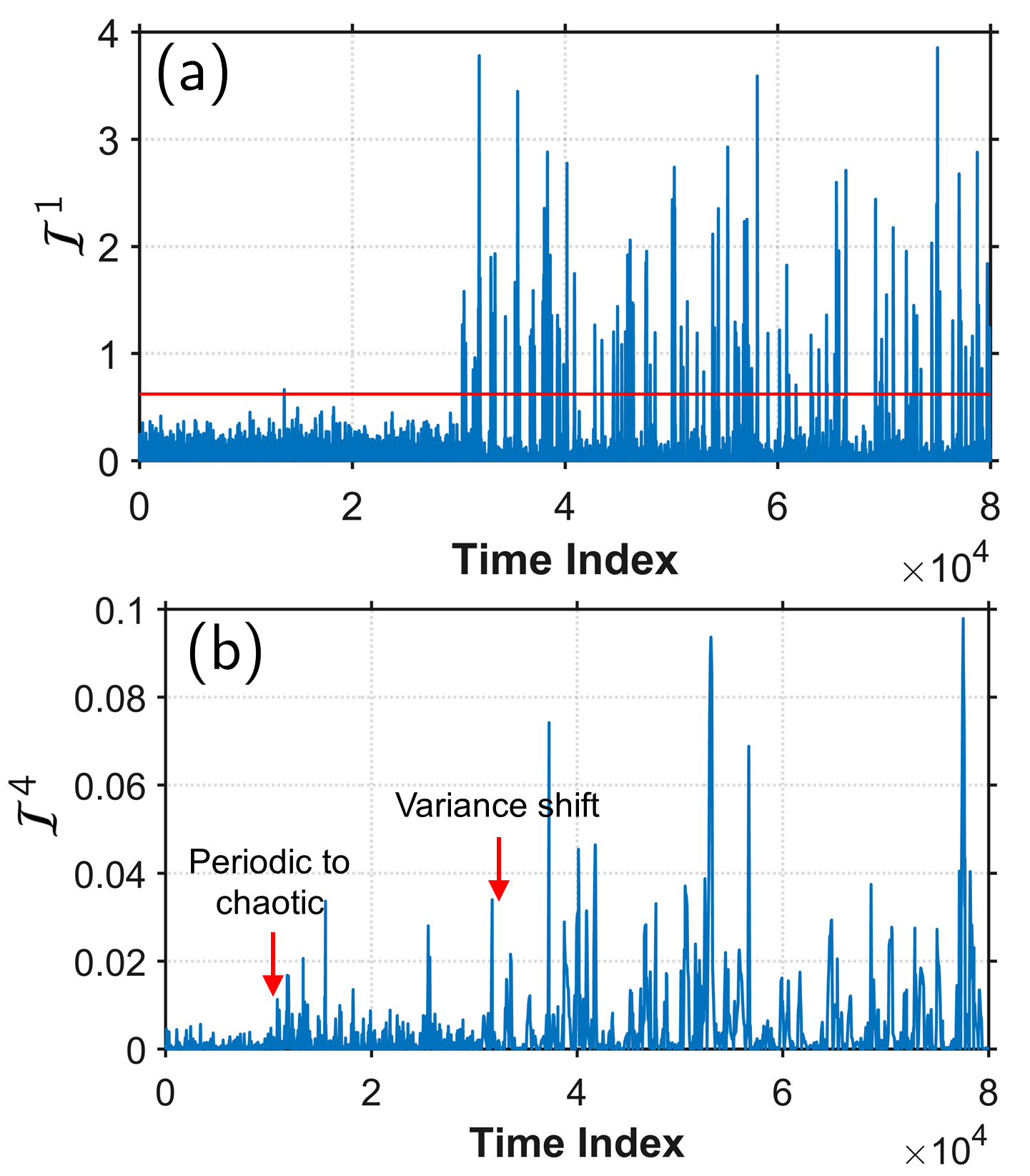}
	\centering
	\caption{InSync statistic as obtained by using the (a) first cluster with rotation components, $R^1(t), R^2(t)~ \& ~R^3(t)$ and (b) second cluster with rotation components $R^4(t) ~ \& ~R^5(t)$.}
	\label{figure:4}	
\end{figure}

 \begin{figure*}[!t] 
	\includegraphics[width = 0.75\textwidth]{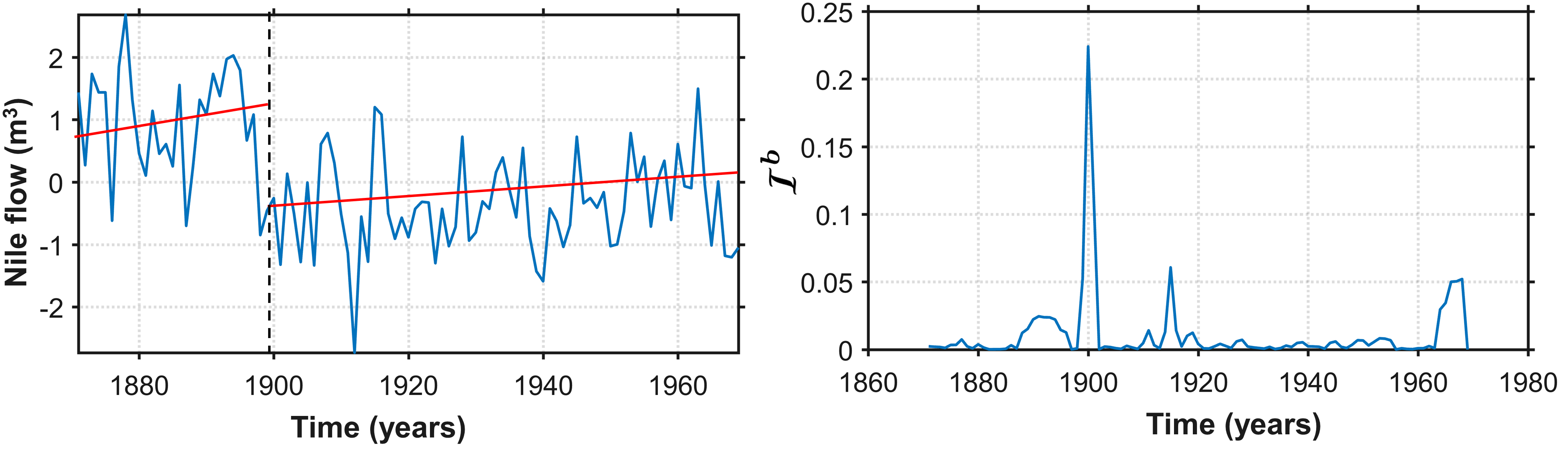}
	\centering
	\caption{(a)  Time series showing the normalized Nile flow rate; (b) shows the InSync statistic $\mathcal{I}(\hslash^4_{k}(t))$ with the set $\mathcal{G}$ being $\{R^6(t), R^7(t), R^8(t)\}$.}
	\label{figure:7}
\end{figure*}
First, we utilize the mutual agreement concept to identify the cluster of rotation components that may preserve the change point information. As shown in Fig.~\ref{figure:3}, we notice that there are three different clusters of rotation components with mutual agreement values $\mathcal{M}^1 =0.52$ consisting of rotation components, $R^1(t), R^2(t)~ \& ~R^3(t)$; $\mathcal{M}^2 = 0.21 $ with components $R^4(t)~ \& ~R^5(t)$ and $\mathcal{M}^3 = 0.13 $ with components $R^6(t), R^7(t)~ \& ~R^8(t)$. We begin with the first cluster of rotation components $R^1(t), R^2(t)~ \& ~R^3(t)$. Intuitively, the cluster with levels of rotation component $j\leq 3$ should capture the singularities (i.e., the high frequency change point features). Therefore, using the ARL$0\approx 370$ or equivalently the specificity for the in-control region as 0.9973, we set the threshold on the InSync statistic (shown in red in Fig.~S4(a)) and estimate the sensitivity of detecting the singularity points. Based on this threshold, we get a sensitivity of 0.9925.

Next, we identify the persistent change points (first periodic to chaotic and then the variance shift) sequentially by reseting the threshold and the in-control region, once a change point is detected. To identify the first change point, we again set ARL$0\approx 370$ based on the first 5000 data points and estimate the ARL1 value from the resulting CUSUM chart. The ARL1 value for this case is 1.11. We reset the monitoring statistic and the second change point at $t = 30000$ is detected with an ARL1 = 1.10. The third cluster consisting of rotation components, $R^6(t), R^7(t)~ \& ~R^8(t)$ also resulted in the same conclusion as that of the second cluster consisting of rotation components $R^4(t)$ and $R^5(t)$ but with different ARL1 values.

\noindent \textbf{Trend change detection: }To test the performance of our method for detecting trend changes, we consider the annual Nile river flow measured at Aswan from 1871 to 1970. Several historical records and research have suggested that the trend in the mean flow level shifted after 1897. The shift in the trend in shown in Fig.~\ref{figure:7}(a). In this case study we implement the InSync statistic to determine the changepoint. To implement the proposed methodology, we first determined the base component $R^b(t)$ from the network representation as shown in Section 3.2 of the main document. Here, the cluster of rotation components with maximum mutual agreement is $\mathcal{G} = \{R^6(t), R^7(t), R^8(t)\}$ with $R^6(t)$ as the base component.  The InSync statistic $\mathcal{I}(\hslash^6_{k}(t))$ for every halfwave defined about the base component $R^6(t)$ is shown in Fig.~\ref{figure:7}(b). The shift in the trend is captured by the InSync statistic as a sharp peak around 1900. To compare the performance of the proposed method, we compared the ARL1 values from EWMA, WCUSUM, Pruned Exact Linear Time (PELT) from the \textit{CPM} package, and the likelihood ratio test (LRT) from \textit{changepoint} package. We notice that in all the cases, InSync statistic was able to consistently detect the change point with ARL1 value as reported in Table~\ref{table:t1}. 

\begin{figure}[!t]
\includegraphics[width = 0.35\textwidth]{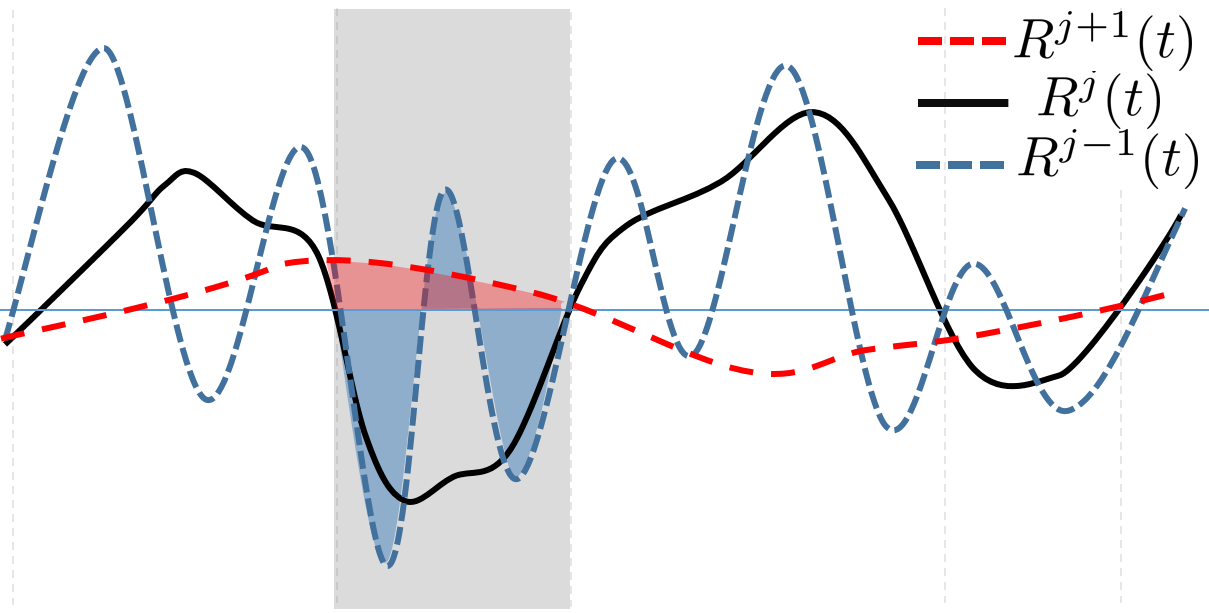}
\centering
\caption{ Illustrative example to show the halfwave span property of the rotation components, $R^j(t)$. We notice that the support of $\hslash^j_k(t)$, i.e., $(z_k^{j},z_{k+1}^{j}]$ in $R^j(t)$ spans 3 halfwaves from the previous level, $R^{j-1}(t)$ and a fraction of halfwave from the next level rotation component, $R^{j+1}(t)$.}
\label{figure:5}	
\end{figure}

\begin{table}[t]
	\renewcommand{\arraystretch}{1.25}
	\caption{Comparison of ARL1 for different methods considered in this case study}
	\label{table_example}
	\centering	
	\begin{threeparttable}
		\begin{tabularx}{0.45\textwidth}{c *{6}{Y}}
			\hline
			EWMA & WCUSUM & PELT & LRT  & InSync\\ 
			
			2.98 & 300 & 1.0056 & 1  & 1.0007\\
			\hline
		\end{tabularx}
	\end{threeparttable}
	\label{table:t1}
\end{table}
\section{ Proof of proposition 3}
\label{appendix:H}
\setcounter{proposition}{3}
\begin{proposition} [\textbf{Property 3}] The support of $\hslash^j_{k}(t)$ at any level $j\geq2$ spans at least one halfwave $\hslash^i_{k}(t)$ from its sub-level $\{R^i(t)\}_{i<j}$ and at most one $\hslash^f_{k}(t) $ from its super-level $\{R^f(t)\}_{f>j}$ as shown in Fig.~\ref{figure:5}.
\end{proposition} 

\begin{proof}{}\label{one}
\normalfont {Here, the support, $\text{supp}\left(\hslash^j_{k}(t)\right)=[z_k,z_{k+1}]$. Based on Property S1, it follows that the extrema $\{\tau_k\}_{k=1,2,\ldots,N}$ at level $j$ evolves in level $j+1$ according to an  extrema vanishing transition (saddle-node and pitchfork) or an extrema preserving transition (trans-critical and pitchfork) \cmt{respectively} (cf. Fig.~\ref{figure:5}). No new extrema are created. Therefore, $\text{supp}\left(\hslash^j_{k}(t)\right) \leq \text{supp}\left(\hslash^{j+1}_{k}(t)\right)$, and thus, $\hslash^{j}_{k}(t)$ spans at most one $\hslash^{j+1}_{k}(t) $ from its super-level $\{R^f(t)\}_{f>j}$. Similarly, we have $ \text{supp}\left(\hslash^j_{k}(t)\right) \geq \text{supp}\left(\hslash^{j-1}_{k}\right)$ and hence $\hslash^j_{k}$ spans at least one $\hslash^{j-1}_{k}$ from its sub-level $\{R^i(t)\}_{i<j}$.}
\end{proof}

\section{Derivation of the Distribution function of InSync}
\label{appendix:I}
\begin{proposition} The distribution function of the InSync statistic, $\mathcal{I}(\hslash^b_{k}(t))\equiv {\sum_j g(\mathcal{E}_j)}\phi$, considering two arbitrary levels $j = 1,2$, can be expressed as the following product distribution:
	\begin{equation*}
	F_{\mathcal{I}}(\iota) \propto\int_{-\infty}^{\infty}\frac{1}{2}f_{\sum g(\mathcal{E})}\left(\frac{\mathcal{\iota}}{\phi}\right)\frac{1}{|\phi|}d\phi
	\end{equation*} where the energy term, $\sum_{j=1,2} g(\mathcal{E}_j)$ follows a generalized Pareto (GP) distribution with scale, shape and location parameters given as, $(c_1s_1+c_2s_2)/(s_1+s_2)$, $(1/s_1+1/s_2)$ and 0, respectively and the phase term, $\phi\sim U(-1,1)$. $\{c_1, c_2\}, \{s_1, s_2\}$ denote the scale and shape parameters of GP distributions representing $g(\mathcal{E}_j),j=1,2$.
\end{proposition} 
\begin{figure}[!b]
	\includegraphics[width = 0.35\textwidth]{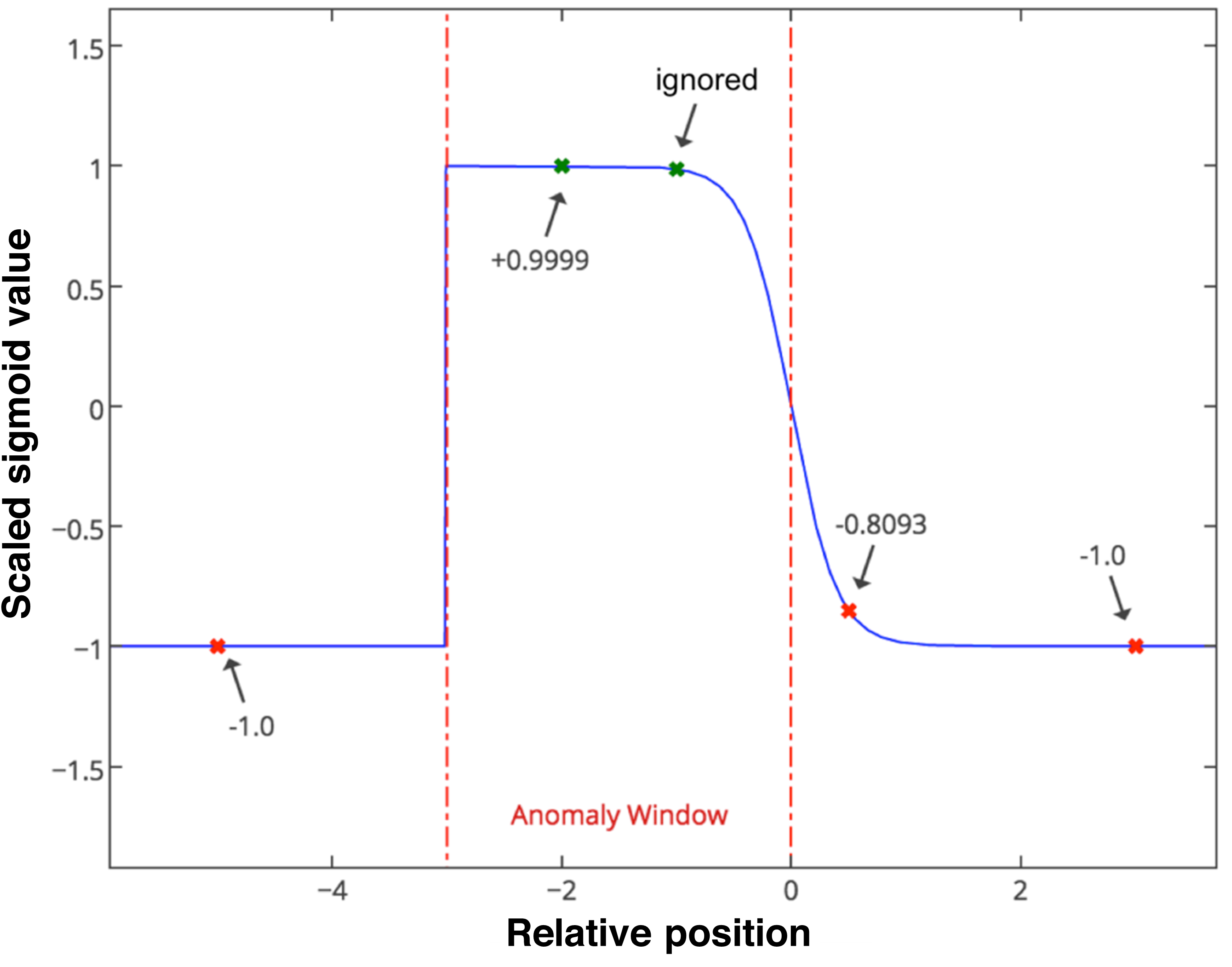}
	\centering
	\caption{Scale sigmoid function to determine the score relative to the anomaly window shown by the red dotted lines. The markers represent the detections made by a sample algorithm relative to the anomaly window.}
	\label{figure:6}	
\end{figure}
\begin{proof}
	\normalfont {First, we determine the distribution function of ${\sum_j g(\mathcal{E}_j)}$ where $g(.)$ is the exponential transform of the energy $\mathcal{E}_j$ of halfwave in level $j$. To derive the distribution function, we refer to the signal $x_k$ given as: $$x_k = (-1)^k|w_k|; w_k \sim \mathcal{N}(0,\sigma^2),k\in \mathbb{Z}^+$$ where the magnitudes of successive samples (alternating extrema) are drawn from a white noise process. To define a halfwave, we consider three consecutive extrema points. As the extrema points are normally distributed, each of the halfwave is essentially a normal random vector with 3 components. Let $\hslash^j$ represent the halfwaves such that $\hslash^j\sim \mathcal{N}(0,I_3\sigma^2_j)$ at levels $j=1,2$. Therefore, $$\mathcal{E}_j\equiv \sum_{k=1}^{3}(\hslash^j_k)^2\sim \chi^2_3, ~j = 1,2$$From [3], we note that the exponential transform of a random variable from exponential family follows a generalized Pareto (GP) distribution. Therefore, $g(\mathcal{E}_j)$ follows a GP distribution with threshold parameter $t_j =1$ and scale and shape given by $s_j$ and $c_j$, respectively. The parameters $c_j$ and $s_j$ can be estimated using the method of moments and is given as: 	
		\begin{align*}
		\frac{c_j}{1+s_j} & =\mathbb{E}(g(\mathcal{E}_j)) \\ &=\mathbb{E}\left(1+\alpha\mathcal{E}_j+\frac{(\alpha\mathcal{E}_j)^2}{2!}+\frac{(\alpha\mathcal{E}_j)^3}{3!}+\ldots\right)
		\end{align*} 
	\begin{align*}
		\frac{c^2_j}{(1+s_j)^2(1+2s_j)}&=\mathbb{E}(g(\mathcal{E}_j)^2)-\mathbb{E}(g(\mathcal{E}_j))^2 \\ &= \mathbb{E}\left(1+2\alpha\mathcal{E}_j+\ldots\right) - \left(\frac{c_j}{1+s_j}\right)^2
	\end{align*}	For the distribution function of the phase synchronization component, $\phi$, we note that this is nothing but the normalized inner product of two normal vectors and follows $U(-1,1)$. Finally, we note that the InSync statistic is the product of $g(\mathcal{E}_j)$ and $\phi$, and therefore, follows a product distribution given as, $$	~~~~~~~~~~~~~~~~F_{\mathcal{I}}(\iota) \propto\int_{-\infty}^{\infty}\frac{1}{2}f_{\sum g(\mathcal{E})}\left(\frac{\mathcal{\iota}}{\phi}\right)\frac{1}{|\phi|}d\phi~~~~~~~~~~~~~~\QED$$ 	
	}
\end{proof}

%

\section{Scoring function employed in Section 4.5 }
\label{appendix:J}

The performance of various anomaly detection algorithms, as proposed in [4], is assessed based on a standard scoring function. The function assigns a weighted positive score to an algorithm that is able to detect a change within a prescribed anomaly window and penalizes for any missing anomaly. Length of the anomaly window is set to 10\% of the length of time series divided by the total number of anomalies in the dataset. 

Correctly identified anomalies are assigned a score, $A_{\text{TP}}$ of 1 while false positives and missed anomalies are penalized with scores of $A_{{\text{FP}}}=0.11$ and $A_{{\text{FN}}}=1$, respectively. Depending on where the anomaly was detected with reference to the anomaly window, the reward as well as the penalty weights assigned to individual detection are determined using a scaled sigmoid function as shown in Fig.~\ref{figure:6}. From the figure we notice that a negative weight of $-$1 is assigned to a FP which is far from the anomaly window. In contrast, a FP is penalized less if it is closer to the window. 

An illustrative example for a sample anomaly window is shown in Fig.~\ref{figure:6}. The first
point is a FP preceding the anomaly window and is penalized with a weight of $-$1. Next, for the two detections within the anomaly window, we only count the earliest TP and is assigned a positive weight of 1. Following the anomaly window, we notice two FPs. Since the first FP is less detrimental because it is close to the window as compared to the second, hence a relatively smaller negative weight of $-$0.83 based on the sigmoid function is assigned to the former. In contrast, a weight of $-$1 is assigned to the latter because it’s too far from the window to be associated with the true anomaly. True negatives are assumed to make no contributions. Hence, with the scores for FP, TP and FN as mentioned earlier, the final score for the example shown in Fig.~\ref{figure:6} is: $-1.0A_{{\text{FP}}} + 1A_{\text{TP}} -0.8093A_{{\text{FP}}} -1.0A_{{\text{FP}}} = 0.6909$.

\section{REFERENCES}

\begin{enumerate}[label={[\arabic*]}]
\item J. M. Restrepo, S. Venkataramani, D. Comeau, and H. Flaschka, ``Defining a trend for time series using the intrinsic time-scale decomposition,'' New
J. Phys., vol. 16, no. 8, p. 085004, Aug. 2014.
\item L. A. Goodman, ``On the exact variance of products,'' Journal of the American statistical association, vol. 55, no. 292, pp. 708–713, Dec. 1960.
\item W. Hürlimann, ``General affine transform families: why is the pareto an exponential transform?'' Statistical Papers, vol. 44, no. 4, pp. 499–518,
Oct. 2003.
\item S. Ahmad, A. Lavin, S. Purdy, and Z. Agha, ``Unsupervised real-time anomaly detection for streaming data,'' Neurocomputing, vol. 262, pp. 134–147, Nov. 2017.
\end{enumerate}

%

\end{document}